\newtheorem{theorem}{Theorem}[section]
\newtheorem{lemma}[theorem]{Lemma}
\newtheorem{corollary}[theorem]{Corollary} 
\theoremstyle{definition}
\newtheorem{remark}[theorem]{Remark} 
\numberwithin{equation}{section}
\def\z*{\bar z}
\def\B{\mathsf B}
\def\uno{\mathsf 1}
\def\F{F}
\def\C{\mathcal C}
\def\dom{\text{\rm dom}}
\def\ran{\text{\rm ran}}
\def\supp{\text{\rm supp}}
\def\G{\mathscr G}
\def\RE{\mathbb R}
\def\CO{{\mathbb C}}
\def\SL{S\! L}
\def\DL{D\! L}
\def\Sf{\mathbb S}
\def\ph*{\phi_\star}
\def\t {\tilde}
\def\be{\begin{equation}}
\def\ee{\end{equation}}
\def\min{{\rm min}}
\def\max{{\rm max}}
\def\-{{\rm in}}
\def\+{{\rm ex}}
\def\comp{{\rm comp}}
\def\loc{{\rm loc}}
\def\n{|\!|\!|}
\begin{document}
\title[Limiting Absorption Principle, Eigenfunctions and Scattering]{Limiting Absorption Principle, Generalized Eigenfunctions and Scattering Matrix for Laplace Operators with Boundary conditions on Hypersurfaces}
\author{Andrea Mantile}
\author{Andrea Posilicano}
\author{Mourad Sini}

\address{Laboratoire de Math\'{e}matiques, Universit\'{e} de Reims -
FR3399 CNRS, Moulin de la Housse BP 1039, 51687 Reims, France}
\address{DiSAT, Sezione di Matematica, Universit\`a dell'Insubria, via Valleggio 11, I-22100
Como, Italy}
\address{RICAM, Austrian Academy of
Sciences, Altenbergerstr. 69, A-4040 Linz, Austria}
\email{andrea.mantile@univ-reims.fr}
\email{andrea.posilicano@uninsubria.it}
\email{mourad.sini@oeaw.ac.at}

\begin{abstract}We provide a limiting absorption principle for the self-adjoint realizations of Laplace operators corresponding to boundary conditions on (relatively open parts $\Sigma$ of) compact hypersurfaces $\Gamma=\partial\Omega$, $\Omega\subset{\mathbb{R}}^{n}$. For any of such self-adjoint operators we also provide the generalized eigenfunctions and the scattering matrix; both these objects  are written in terms of operator-valued Weyl functions.  We make use of a Kre\u\i n-type formula which provides the resolvent difference between the operator corresponding to self-adjoint boundary conditions on the hypersurface and the free Laplacian on the whole space ${\mathbb{R}}^{n}$. Our results apply to all standard examples of boundary conditions, like Dirichlet, Neumann, Robin, $\delta$ and $\delta'$-type, either assigned on $\Gamma$ or on $\Sigma\subset\Gamma$. 
\end{abstract}

\maketitle

\begin{section}{Introduction}
Given an open bounded set $\Omega\subset\RE^{n}$ with smooth boundary
$\Gamma$, let $\Delta^{\!\circ}$ denote the not positive symmetric operator in $L^{2}(\RE^{n})$ given by the restriction of the Laplacian to $C^{\infty}_{\comp}(\RE^{n}\backslash\Gamma)$. In the recent paper \cite{MaPoSi}, we provided the complete family of self-adjoint extensions of $\Delta^{\!\circ}$ and a Kre\u\i n-type formula giving the resolvent difference between any extension and the  self-adjoint (free) Laplacian $\Delta$ with domain $H^{2}(\RE^{n})$ (we recall these results in Theorem \ref{Theorem_Krein}).
Some sub-families of extensions have been considered in \cite{BEK} and   \cite{ER} by a quadratic form approach and in \cite{BLL}
by quasi boundary triple theory.
In particular, in \cite[Section 4]{BLL}, Schatten-von Neumann estimates for the difference of the powers of the resolvent of the free and self-adjoint extensions corresponding to $\delta$-type boundary conditions (supported either on $\Gamma$ or on $\Sigma\subset\Gamma$) and $\delta'$-type ones  (supported on $\Gamma$) are provided; these give existence and completeness of the wave operators of the associated scattering systems. In \cite[Theorems 4.10 and 4.11]{MaPoSi} we extended such kind of  Schatten-von Neumann estimates to a larger class containing, for example, self-adjoint extensions corresponding to Dirichlet, Neumann, Robin, $\delta$ and $\delta'$-type conditions, either assigned on $\Gamma$ or on $\Sigma\subset\Gamma$, where $\Sigma$ is relatively open with a Lipschitz boundary. To this concern we recall that estimates for the difference of the powers of the resolvents  and their applications to scattering in exterior domains first appeared in the pioneering work by Birman \cite{Birman}.\par 
Let us stress that, by the decomposition $\RE^{n}=\Omega_{\-}\cup \Gamma\cup\Omega_{\+}$, $\Omega_{\-}\equiv\Omega$, $\Omega_{\+}:=\RE^{n}\backslash\overline\Omega$, one has $\Delta^{\!\circ}=\Delta^{\min}_{{\-}}\oplus\Delta^{\min}_{{\+}}$ and so one could obtain all self-adjoint extensions of $\Delta^{\!\circ}$ corresponding to separating boundary conditions by using the results (obtained by Grubb in \cite{Grubb}, building on previous work by Birman, Kre\u\i n and Vi\u sik, see \cite{[B]}, \cite{[K3]}, \cite{[V]}) providing the whole family of self-adjoint extensions of $\Delta^{\min}_{{\-/\+}}$; however such construction, broadened to include all self-adjoint extensions, would lead to a Kre\u\i n-type formula giving the resolvent difference between an extension $\widehat \Delta$ and the direct sum of the interior and exterior Dirichlet Laplacians $\Delta^{D}_{{\-}}\oplus\Delta^{D}_{{\+}}$. This is not the right operator since we are interested in the study of the scattering system $(\widehat \Delta,\Delta)$, where $\Delta$ denotes the free Laplacian on the whole $\RE^{n}$; whenever one considers  semi-transparent boundary conditions (as the ones considered in \cite{BLL} and in Section 6, Examples \ref{delta} and \ref{deltaprimo}), or boundary conditions assigned only on $\Sigma\subset\Gamma$ (see Section 7), the choice of the Laplacian $\Delta$ as the operator representing the free propagation is the most natural one.  
\par
The first aim of this paper is to show that the Limiting Absorbtion Principle (LAP for short)  holds for an ample class of self-adjoint extensions of the symmetric operator $\Delta^{\!\circ}$. This is accomplished by applying  abstract results due to Walter Renger 
(see \cite{Ren1} and \cite{Ren2}) to our Kre\u\i n-type resolvent formula (see Theorem \ref{Theorem_LAP}). As usual, LAP implies the absence of singular continuous spectrum (see Corollary \ref{sing}). Even if interesting by itself, the result about the validity of LAP  does not show that the resolvent Kre\u\i n formula itself survives in the limit. Such a  limit Kre\u\i n's resolvent formula is here provided in next Theorem \ref{teo_LAP}. With such results at hands, in Section 5  we construct, for a given self-adjoint extension $\widehat\Delta$ of $\Delta^{\circ}$, the couple of families of generalized eigenfunctions $u^{\pm}_{\xi}$ related to the plane waves $u^{\circ}_{\xi}(x)=e^{i\,\xi\cdot x}$ with incoming ($+$) or outgoing ($-$) radiation conditions. Such eigenfunctions then allow to define the corresponding Fourier type transforms $F_{\pm}$ which diagonalize the self-adjoint extension; the wave operators for the scattering system $(\widehat\Delta,\Delta)$ are then given by $W_{\pm}=F^{*}_{\pm}F$, where $F$ denotes the ordinary Fourier transform (see Theorem 5.4). Both the eigenfunctions $u^{\pm}_{\xi}$ and the Fourier transforms $F_{\pm}$ are expressed in terms of the operator-valued Weyl functions appearing in the limit Kre\u\i n resolvent formula given in Theorem \ref{teo_LAP}.  Finally, in Theorem \ref{SM}, using again the operator-valued Weyl functions, we provide the  kernel (proportional to the scattering amplitude) of $\uno-S_{k}$, where $S_{k}$ is the on-shell scattering operator. 
\par 
In Sections 6 and 7, we show that our LAP-based results can be applied to a wide class of self-adjoint operators which includes self-adjoint realizations of the Laplacian with Dirichlet, Neumann, Robin, $\delta$ and $\delta'$-type boundary conditions assigned either on the whole $\Gamma$ or on a relatively open subset with Lipschitz boundary $\Sigma\subset\Gamma$. We provide a representation of the scattering matrix $S_{k}$ in terms of operator-valued Weyl functions evaluated on the traces at ($\Sigma\subset$)$\Gamma$ of the plane waves $u^{\circ}_{\xi}$.
\par
Our time-independent approach  has been inspired by the work by  Albeverio, Brasche and Koshmanenko \cite{ABK}, where LAP and Lippman-Schwinger 
equations are studied for finite-rank singular perturbations, and can be interpreted as an extension to the case of general boundary conditions and hypersurfaces of the paper \cite{IS} by Ikebe and Shimada concerning $\delta$-type boundary conditions on a sphere (see also \cite{S}). An alternative abstract approach, which do not use LAP but directly exploits  the existence of limiting operator-valued Weyl-functions, has been developed in \cite{BMN1}, \cite{BMN2} and \cite{BMN3} by Behrndt, Malamud and Neidhardt (the first two works concern the finite-rank case; see also \cite{AP},\cite[Chapter 4]{AK}). In particular,  in the recent paper \cite{BMN3}, a representation of the scattering matrix in term of operator-valued Weyl functions is provided for couples of self-adjoint extension of a given symmetric operator under the hypothesis that their resolvent difference is trace-class. 
In our less abstract setting, which applies to Laplacian with boundary conditions on $(\Sigma\subset)\Gamma$, we do not need the trace-class condition and the  results hold in any  dimension. 
\par
Let us remark that, once LAP and a Kre\u\i n's limit formula have been attained (see Theorems \ref{LAP} and \ref{teo_LAP}), a representation formula for the scattering matrix $S_{k}$ can be obtained by using the Birman-Yafaev general scheme in stationary scattering theory (see  e.g. \cite{BY}, \cite{Y}, \cite{Y1}) together with the Birman-Kato invariance principle applied to the resolvent operators (see Remark \ref{BY} for more details). However we preferred to present here a less abstract proof following the classical scheme used in potential scattering theory (see e.g. \cite{Agm}, \cite{AS}, \cite{Ike}, \cite{Sch}  and references therein).      
\par
We conclude the introduction with some remarks about our smoothness hypotheses on $\Gamma$. Such an hypothesis  gives the existence of the wave operators (see \cite[Theorems 4.11 and 4.12]{MaPoSi}) through asymptotic estimates on the eigenvalues of the Laplace-Beltrani operator on $\Gamma$ (see \cite[Lemma 4.7]{BLL1}). These estimates, obtained using pseudodifferential operator techniques,  require smoothness; we presume that asymptotic estimates of this kind  hold under a weaker $C^{1,1}$ (or at least $C^{2}$) hypothesis, but we did not find any proof of that in the literature. Since our result concerning existence of LAP does not require any smoothness hypothesis, conditional on the existence of wave operators, the general results here presented hold in the case  $\Gamma$ is  an hypersurface of class $C^{1,1}$, as for the results presented in \cite{MaPoSi}, while, as regards the explicit examples given in Section 7 considering boundary conditions on $\Sigma\subset \Gamma$, one needs more regularity (of the kind $C^{k,1}$, where $k>1$ depends on the kind of boundary conditions used, see \cite[Section 6]{MaPoSi}). In the series of papers \cite{EP1}-\cite{EP3}, limited to the case in which $n=2$ and Dirichlet or Neumann boundary conditions are assigned on the whole $\Gamma$, the authors provided a resolvent formula and a representation for the scattering matrix of the same kind of the ones here given in Examples 6.1 and 6.2, only assuming that the boundary $\Gamma$ is a piecewise smooth curve. This suggests that our results, which hold for a quite larger class of boundary conditions, could be extended to include the case in which $\Gamma$ is a planar curvilinear polygon (see e.g. \cite{D} and \cite{G} for elliptic boundary value problem in not smooth domains).
\end{section}

{\bf Acknowledgments.} We thank the anonymous referees for the stimulating remarks, for the useful bibliographic suggestions and in particular for inspiring Remark \ref{BY}. \par
The authors were partially supported by the Austrian Science Fund 
(FWF):[P28971-N32]

\begin{section}{Preliminaries\label{Sec_Surfop}}

\subsection{Trace maps and boundary-layer operators } Here we recall some definitions and results about Sobolev spaces on subset of $\RE^{n}$ and single and double layer operators on their boundaries (see e.g. \cite{LiMa1} and \cite{McLe}).\par
Given $\Omega\subset\mathbb{R}^{n}$ open and bounded, with smooth boundary $\Gamma$, we adopt the notation: $\Omega_{\-}=\Omega$, $\Omega_{\+}=\mathbb{R}%
^{n}\backslash\overline\Omega$, while $\nu$ is the exterior unit normal to
$\Gamma$.  $H^{s}(\RE^{n})$, $H^{s}(\Omega_{\-})$, $H^{s}(\Omega_{\+})$, $H^{s}(\Gamma)$, $s\in\RE$, denote the usual scales of Sobolev-Hilbert spaces of function on $\RE^{n}$, $\Omega_{\-}$, $\Omega_{\+}$ and $\Gamma$ respectively. The zero
and first-order traces on $\Gamma$ are defined on smooth functions as%
\begin{equation}%
\gamma_{0}u=\left.  u\right\vert _{\Gamma}\,, \qquad \gamma_{1}u=
\nu\cdot\nabla u|
{\Gamma}\,,
\label{trace}%
\end{equation}
and extend to the bounded linear operators 
\begin{equation}%
\gamma_{0}\in{\B}(   H^{2}(  \mathbb{R}^{n})
,H^{{\frac32}}(   \Gamma)  )  \,, \qquad \gamma_{1}\in{\B}%
(   H^{2}(  \mathbb{R}^{n})  ,H^{{\frac12}}(   \Gamma)
)  \,.
\label{trace_est}%
\end{equation}
We use the symbol $\Delta$ to denote the distributional Laplacian; its restriction to $H^{2}(\RE^{n})$
$$\Delta: H^{2}(\RE^{n})\subset L^{2}(\RE^{n})\to L^{2}(\RE^{n})$$ 
gives rise to a self-adjoint operator which describes the free propagation of waves in the whole space $\RE^{n}$; this will be our reference operator. \par 
For $z\in\CO\backslash(-\infty,0]$, the single and double layer operators are defined by %
\begin{equation}%
\SL_{z}=( \gamma_{0}(   -\Delta+\bar z)  ^{-1})^{\ast}\,, \qquad \DL_{z}=( \gamma_{1}( 
-\Delta+\bar z)  ^{-1})^{\ast}\,,
\label{Layer_op}%
\end{equation}
and by duality there follows%
\begin{equation}%
\SL_{z}\in{\B}(  H^{-{\frac32}}(  \Gamma)  ,L^{2}(
\mathbb{R}^{n}) )  \,, \quad  \DL_{z}\in{\B}(
H^{-{\frac12}}(  \Gamma)  ,L^{2}(  \mathbb{R}^{n}))
\,.
\label{Layer_op_est}%
\end{equation}
The integral kernel $R_{z}(x,y)$ of the resolvent 
$(-\Delta+z)^{-1}$,  $z\in\CO\backslash(-\infty,0]$, is given by $R_{z}(x,y)=\G_{z}(x-y)$, where  
\begin{align}\label{green}
\G_{z}(x)
=\frac{1}{2\pi}\,\left(\frac{\sqrt z}{2\pi\|x\|}\right)^{\!\frac{n}2-1}\!\!\!\!K_{\frac{n}2-1}(\sqrt z\,\|x\|)
\,,\qquad \text{Re}(\sqrt z)>0\,,
\end{align}
and 
$K_{\alpha}$ denotes 
the modified Bessel functions of second kind of order $\alpha$. Thus, for $x\notin\Gamma$ and $\phi,\varphi\in L^{2}(\Gamma)$, one has 
\begin{equation*}
\SL_{z}\phi(x)=\int_{\Gamma}\G_{z}(x-y)\,\phi(y)\,d\sigma (y)\,, \label{S_Gamma}%
\end{equation*}
and%
\begin{equation*}
\DL_{z} \varphi(x)=\int_{\Gamma}\nu(y)\!\cdot\!\nabla \G_{z}(x-y)\,\varphi(y)\,d\sigma (y)\,,\label{D_Gamma}%
\end{equation*}
where $\sigma $ denotes the surface measure. \par Let us define $\gamma\in\B(H^{2}(\RE^{n}),H^{{\frac32}}(  \Gamma)\oplus H^{{\frac12}}(  \Gamma) $ by
$$
\gamma u:=\gamma_{0}u\oplus\gamma_{1}u
$$
and, for any $z\in\CO\backslash(-\infty,0]$, 
$G_{z}\in \B(H^{-{\frac32}}(  \Gamma)\oplus H^{-{\frac12}}(  \Gamma) ,L^{2}(
\mathbb{R}^{n}) )$ by 
$$
G_{z}:=( \gamma(   -\Delta+\bar z)  ^{-1})^{\ast}\,;
$$
equivalently
\begin{equation}\label{Gz}
G_{z}(\phi\oplus\varphi):=\SL_{z}\phi+\DL_{z}\varphi\,.
\end{equation}
For any $z\in\CO\backslash(-\infty,0]$ and for any $\phi\oplus\varphi\in H^{-{\frac32}}(  \Gamma)\oplus H^{-{\frac12}}(  \Gamma)$ one has
$$
G_{z}(\phi\oplus\varphi)\in{\C^{\infty}}(\RE^{n}\backslash\Gamma)\quad\text{and}\quad((\Delta-z)G_{z}(\phi\oplus\varphi))(x)=0\,,\quad x\in \RE^{n}\backslash\Gamma\,.
$$
The one-sided trace maps $$\gamma_{i}^{\natural}\in{\B}(   H^{2}( 
\Omega_{\natural})  ,H^{{\frac32}-i}(   \Gamma)  )\,,\quad\natural=\-,\+\,,\quad i=0,1\,,
$$ defined on smooth (up to the boundary) functions by%
\begin{equation}%
\gamma_{0}^{\natural}u_{\natural}=u_{\natural}|\Gamma\,, \quad\gamma^{\natural}_{1}u_{\natural}=\nu\cdot\nabla u_{\natural}|\Gamma\,,\quad \natural=\-,\,\+\,,
\end{equation}
can be extended to $$\hat\gamma_i^{\natural}\in{\B}(  H^{0}_{\Delta} (\Omega_{\natural})),H^{-{\frac12}-i}(   \Gamma)  )  \,,\quad\natural=\-,\+\,,\quad i=0,1\,,$$ 
where 
$$
H^{0}_{\Delta} (\Omega_{\natural}):=\{u_{\natural}\in L^{2}(\Omega_{\natural}): \Delta u_{\natural}\in L^{2}(\Omega_{\natural})\}\,,
$$
$$
\|u_{\natural}\|^{2}_{H^{0}_{\Delta} (\Omega_{\natural})}:=\|\Delta u_{\natural}\|^{2}_{L^{2}(\Omega_{\natural})}
+\|u_{\natural}\|^{2}_{L^{2}(\Omega_{\natural})}\,.
$$
Setting $\Delta^{\max}_{\natural}:=\Delta|H^{0}_{\Delta}(\Omega_{\sharp})$, by the   ''half'' Green formula (see \cite[Theorem 4.4]{McLe}), one has, for any $u\in H^{1}(\Omega_{\natural})\cap H^{0}_{\Delta}(\Omega_{\natural})$,
\begin{align}\label{hG}
\langle -\Delta^{\max}_{\natural}u_{\natural},u_{\natural}\rangle_{L^{2}(\Omega_{\natural})}
=\|\nabla u_{\natural}\|^{2}_{L^{2}(\Omega_{\natural})}
+\epsilon_{\natural}\langle\hat\gamma^{\natural}_{1}u_{\natural},\gamma^{\natural}_{0}u_{\natural}\rangle_{-\frac12,\frac12}\,,\quad
\end{align}
$$
\epsilon_{\natural}=\begin{cases}-1\,,&\natural=\-\\
+1\,,&\natural=\+\,.\end{cases}
$$
Setting 
$$
H^{0}_{\Delta} (\RE^{n}\backslash\Gamma):=H^{0}_{\Delta} (\Omega_{\-})\oplus H^{0}_{\Delta} (\Omega_{\+})\,,
$$
the extended maps $\hat\gamma_{i}^{\natural}$ allow to define the bounded maps 
$$
\hat\gamma_{i}\in{\B}(H^{0}_{\Delta} (\RE^{n}\backslash\Gamma),H^{-{\frac12}-i}(   \Gamma)  )\,,\quad i=0,1\,, $$
$$
[\hat\gamma_{i}]\in{\B}(H^{0}_{\Delta} (\RE^{n}\backslash\Gamma),H^{-{\frac12}-i}(   \Gamma)  )
\,,\quad i=0,1\,, $$
by
\begin{equation}
\hat\gamma_{i}u:=\frac{1}{2}\big(   \hat\gamma_{i}^{\-}(u|\Omega_{\-})+\hat\gamma_{i}^{\+}(u|\Omega_{\+})\big)
\,,
\label{trace_def}%
\end{equation}
and \begin{equation}
[  \hat\gamma_{i}] u =\hat\gamma_{i}^{\+}(u|\Omega_{\+})-\hat\gamma_{i}^{\-}(u|\Omega_{\-})\,,\quad
\,. \label{jumps}%
\end{equation}
Notice that the maps $\gamma_{i}\in \B(H^{2}(\RE^{n}\backslash\Gamma), H^{\frac32-i}(\Gamma))$ defined by 
$$
\gamma_{i}:=\hat\gamma_{i}|H^{2}(\RE^{n}\backslash\Gamma)\,,\quad i=0,1\,,\quad H^{2}(\RE^{n}\backslash\Gamma):=H^{2}(\Omega_{\-})\oplus H^{2}(\Omega_{\+})\,,
$$ 
coincide with the ones in   \eqref{trace} when restricted to $H^{2}(\RE^{n})$.\par
The corresponding extension of the trace map $\gamma$ is 
\begin{equation}\label{tau}
\hat\gamma  \in{\B}(  H^{0}_{\Delta} (\RE^{n}\backslash\Gamma),H^{-{\frac12}}(   \Gamma)\oplus H^{-{\frac32}}(   \Gamma)  )\,,
\quad
\hat\gamma   u:=(\hat\gamma_{0}u)\oplus(\hat\gamma_{1}u)\,,
\ee
while the related jump map is 
\be\label{[tau]}
[\hat\gamma  ]\in{\B}( H^{0}_{\Delta} (\RE^{n}\backslash\Gamma),H^{-{\frac32}}(   \Gamma)\oplus H^{-{\frac12}}(   \Gamma)  )\,,
\qquad
[\hat\gamma  ]u
=(-[  \hat\gamma_{1}]u)  \oplus([  \hat\gamma_{0}]u)  \,.
\end{equation}
Using the definition \eqref{Gz} and the mapping properties of the layer operators (see \cite[Section 3.4]{MaPoSi} and the references therein), it results%
\begin{equation}
G_{z}\in{\B}(   H^{-{\frac32}}(   \Gamma)\oplus H^{-{\frac12}}(   \Gamma),H^{0}_{\Delta} (\RE^{n}\backslash\Gamma))
 \label{G_z_reg}%
\end{equation}
and so $[\hat \gamma  ]G_{z}\in {\B}(   H^{-{\frac32}}(   \Gamma)\oplus H^{-{\frac12}}(   \Gamma))$.
More precisely, by the well known jumps relations
for the layer operators, 
one has 
\be
[\hat \gamma  ]G_{z}=1_{H^{-{\frac32}}(\Gamma)\oplus H^{-{\frac12}}(   \Gamma)  }\,.
\label{G_z_jumps}%
\end{equation}
\subsection{Weighted spaces  }
We now introduce the family of weighted spaces $L^{2}_{\sigma}(\RE^{n})$ and $H^{2}_{\sigma}(\RE^{n})$, defined, for any $\sigma\in\RE$, by
$$
L^{2}_{\sigma}(\RE^{n}):=\{u\in L^{2}_{\loc} (\RE^{n}):\|u\|_{L^{2}_{\sigma}(\RE^{n})}<+\infty\}\,,
$$
$$
H^{2}_{\sigma}(\RE^{n}):=\{u\in H^{2}_{\loc} (\RE^{n}):\|u\|_{H^{2}_{\sigma}(\RE^{n})}<+\infty\}\,,
$$
$$
\left\Vert u\right\Vert^{2} _{L^{2}_{\sigma}(   \mathbb{R}^{n})
}:=\int_{\RE^{n}}\left(1+\|x\|^{2}\right)^{\sigma}|u(x)|^{2}dx
$$
$$
\left\Vert u\right\Vert^{2} _{H^{2}_{\sigma}(   \mathbb{R}^{n})
}:=\|u\|^{2}_{L^{2}_{\sigma}(   \mathbb{R}^{n})}+\sum_{1\le i\le n}\|\partial_{x_{i}}u\|^{2}_{L^{2}_{\sigma}(   \mathbb{R}^{n})}
+\sum_{1\le i,j\le n}\| \partial^{2}_{x_{i} x_{j}}u\|^{2}_{L^{2}_{\sigma}(   \mathbb{R}^{n})}
\,.
$$
The spaces $L^{2}_{\sigma}(\Omega_{\-})$, $L^{2}_{\sigma}(\Omega_{\+})$ and $H^{2}_{\sigma}(\Omega_{\-})$, $H^{2}_{\sigma}(\Omega_{\+})$ are defined in a similar way. Since $\Omega$ is bounded, one has $L^{2}_{\sigma}(\Omega_{\-})=L^{2}(\Omega_{\-})$,  $H^{2}_{\sigma}(\Omega_{\-})=H^{2}(\Omega_{\-})$, the equalities holding in the Banach space sense; thus 
$$L^{2}_{\sigma}(\RE^{n})=L^{2}(\Omega_{\-})\oplus L^{2}_{\sigma}(\Omega_{\+})$$ and 
$$H^{2}_{\sigma}(\RE^{n}\backslash\Gamma):=H^{2}_{\sigma}(\Omega_{\-})\oplus H^{2}_{\sigma}(\Omega_{\+})=H^{2}(\Omega_{\-})\oplus H^{2}_{\sigma}(\Omega_{\+})\,.
$$
Then the trace operators can be extended to  $H^{2}_{\sigma}(\RE^{n}\backslash\Gamma)$, $\sigma<0$, by  
$$ 
\gamma^{\+}_{0} u_{\+}:=\gamma^{\+}_{0}(\chi u_{\+})\,,\quad\gamma^{\+}_{1} u_{\+}:=\gamma^{\+}_{1}(\chi u_{\+})\,,$$
where $\chi$ belongs to $\C^{\infty}_{\comp}(\Omega^{c})$ and $\chi=1$ on a neighborhood of $\Gamma$. 
\begin{remark} In the following, we use the shorthand notation $\langle\cdot,\cdot\rangle$ to denote both the dualities   $\big(H^{-s_{1}}(\Gamma)\oplus H^{-s_{2}}(\Gamma)\big)$-$\big(H^{s_{1}}(\Gamma)\oplus H^{s_{2}}(\Gamma)\big)$ and $L^{2}_{-\sigma}(\RE^{n})$-$L^{2}_{\sigma}(\RE^{n})$.  
\end{remark}

\end{section}
\begin{section}{Self-adjoint Laplace operators with boundary conditions on hypersurfaces}

Let us consider the restriction $\Delta|\ker(\gamma  )$. Since
the kernel of $\gamma  $ is dense in $L^{2}(   \mathbb{R}^{n})  $,
$\Delta|\ker(\gamma  )$ is densely defined, closed and symmetric. Following the
construction developed in \cite{MaPoSi} (to which we refer for more details and proofs), we next provide all the self-adjoint extensions of
$\Delta|\ker(\gamma  )$. The adjoint operator $(\Delta|\ker(\gamma  ))^{\ast}$
identifies with%
\begin{equation*}
\dom(   (\Delta|\ker(\gamma  ))^{\ast})  =H^{0}_{\Delta} (\RE^{n}\backslash\Gamma)\,,\quad 
(\Delta|\ker(\gamma  ))^{\ast}=\Delta^{\max}_{\-} \oplus \Delta^{\max}_{\+}\,.
\end{equation*} 
An alternative representation of $(\Delta|\ker(\gamma  ))^{*}$ is given by (see
\cite[Lemma 2.3 and Lemma 4.2]{MaPoSi})%
\begin{align*}
 &\dom((\Delta|\ker(\gamma  ))^{\ast})\\  =&\left\{  u=u_{\circ}+G(\phi\oplus\varphi):
u_{\circ}\in H^{2}(   \mathbb{R}^{n})\,,\ \phi\oplus\varphi\in
H^{-{\frac32}}(   \Gamma)\oplus H^{-{\frac12}}(   \Gamma)  \right\} \\
\equiv&\left\{  u\in L^{2}(\RE^{n}):u_{\circ}:=\big(u+\SL[\hat\gamma_{1}]u-\DL[\hat\gamma_{0}]u\big)\in H^{2}(   \mathbb{R}^{n})  \right\} 
\,,
\end{align*}
\be\label{adjoint3}
(\Delta|\ker(\gamma  ))^{\ast}u=\Delta u_{\circ}+G(\phi\oplus\varphi)=
\Delta u-[\hat \gamma_{1}]u\,\delta_\Gamma-[\hat \gamma_{0}]u\,\nu\!\cdot\!\nabla\delta_\Gamma\,,
\ee
where $G:=G_{1}$
and the Schwartz distribution $\delta_{\Gamma}$ is defined by $\delta_{\Gamma}(u):=\int_{\Gamma}u(x)\,d\sigma (x)$. 
\par 
Given an orthogonal projection $\Pi:H^{{\frac32}}(   \Gamma)\oplus H^{{\frac12}}(   \Gamma)\to H^{{\frac32}}(   \Gamma)\oplus H^{{\frac12}}(   \Gamma)$, the
dual map $\Pi^{\prime}:H^{-{\frac32}}(   \Gamma)\oplus H^{-{\frac12}}(   \Gamma)\to H^{-{\frac32}}(   \Gamma)\oplus H^{-{\frac12}}(   \Gamma)$
is an orthogonal projection  as well and $\ran(\Pi')=\ran(\Pi)'$.
We say that the densely defined linear operator  $$\Theta:\dom(   \Theta)  \subseteq
\ran(   \Pi)^{\prime}  \rightarrow \ran(   \Pi)  $$ is
self-adjoint whenever $\Theta=\Theta'$. Let the unitary maps $\Lambda^{s}$ represent the duality mappings on $H^{s/2}(\Gamma)$ onto $H^{-s/2}(\Gamma)$; an explicit representation of $\Lambda^{s}$ is given by $\Lambda^{s}=(-\Delta_{LB}+1)^{s/2}$, where $\Delta_{LB}$ denotes the Laplace-Beltrami operator on $\Gamma$. Equivalently $\Theta$ is self-adjoint whenever the operator $\tilde{\Theta}=\Theta( 
\Lambda^{3}\oplus\Lambda)  $, $\dom(   \tilde{\Theta})
=(   \Lambda^{3}\oplus\Lambda)^{-1}  \dom(   \Theta)
$, is a self-adjoint operator in the Hilbert space $\ran(\Pi)$. 
\par
We define  the operator-valued Weyl function 
 $$\CO\backslash(-\infty,0]\ni z\mapsto M_{z}\in\B(H^{{-\frac32}}(   \Gamma)\oplus H^{{-\frac12}}(   \Gamma);H^{{\frac32}}(   \Gamma)\oplus H^{{\frac12}}(   \Gamma))$$ by $M_{z}:=\gamma  (   G-G_{z})$, i.e., using  the block operator notation,
\be\label{M_z_def}
M_{z}=\left[\,\begin{matrix}\gamma_{0}(\SL-\SL_{z})&\gamma_{0}(\DL-\DL_{z})\\
\gamma_{1}(\SL-\SL_{z})&\gamma_{1}(\DL-\DL_{z})
\end{matrix}\,\right]\,.
\ee
Given the couple $(   \Pi,\Theta)  $, $\Pi$ an orthogonal projector and $\Theta$ self-adjoint,     
define the set%
\begin{equation}
Z_{  \Pi,\Theta  }:=\{z\in\CO\backslash(-\infty,0]:\Theta+\Pi M_{z}\Pi^{\prime
}\in\B(\ran(   \Pi),\ran( 
\Pi)^{\prime} ) \}. \label{Z_Pi_Theta_def}%
\end{equation}
All self-adjoint extensions of $\Delta|\ker(\gamma  )$ are provided by the following theorem (see \cite[Theorem 4.4]{MaPoSi}):
\begin{theorem}
\label{Theorem_Krein} Any
self-adjoint extension of $\Delta|\ker(\gamma  )$ is of the
kind $$\Delta_{\Pi,\Theta }=(\Delta|\ker(\gamma)  )^{\ast}|\dom(\Delta_{\Pi,\Theta  })\,,$$ where $\Pi:H^{{\frac32}}(   \Gamma)\oplus H^{{\frac12}}(   \Gamma)\to H^{{\frac32}}(   \Gamma)\oplus H^{{\frac12}}(   \Gamma)$ is an orthogonal projection, $\Theta
:\dom(   \Theta)  \subseteq \ran(   \Pi)^{\prime}
\rightarrow \ran(   \Pi)  $ is a self-adjoint operator and
\begin{align*}
&\dom(\Delta_ {   \Pi,\Theta  })\\:=&\{u=u_{\circ}+G(\phi\oplus\varphi):u_{\circ}\in
H^{2}(\mathbb{R}^{n}),\, \phi\oplus\varphi\in \dom(   \Theta),\, \Pi\gamma  
u_{\circ}=\Theta(\phi\oplus\varphi)\} \\
=&\{u\in H^{0}_{\Delta}(\RE^{n}\backslash\Gamma):[\hat\gamma]u\in \dom(   \Theta),\, \Pi\gamma  (u+\SL[\hat\gamma_{1}]u-\DL[\hat\gamma_{0}]u\big)
=\Theta[\hat\gamma]u\}
\end{align*}
Moreover $Z_{  \Pi,\Theta  }$ is not void, $\CO\backslash
\mathbb{R}\subseteq Z_{ \Pi,\Theta  }\subseteq\rho(\Delta_ { 
\Pi,\Theta  })$, and the resolvent of the self-adjoint extension
$\Delta_ {  \Pi,\Theta  }$ is given by the Kr\u{e}\i n's type formula%
\begin{equation}\begin{split}
&(   -\Delta_ {   \Pi,\Theta  }+z)  ^{-1}\\
=&(   -\Delta+z)
^{-1}+G_{z}\Pi^{\prime}(   \Theta+\Pi M_{z}\Pi^{\prime})  ^{-1}%
\Pi\gamma  (   -\Delta+z)  ^{-1},\,\quad z\in Z_{   \Pi,\Theta
}\,. \label{Krein}%
\end{split}
\end{equation}

\end{theorem}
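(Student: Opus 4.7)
My plan is to derive this theorem as an application of the abstract extension scheme developed in \cite{MaPoSi}, built around the couple $(\gamma,G_{z})$ which plays the role of a (generalized/quasi) boundary triple for the symmetric operator $\Delta|\ker(\gamma)$. The first block of work is to establish the two equivalent descriptions of $\dom((\Delta|\ker(\gamma))^{*})$: unique decomposition $u=u_{\circ}+G(\phi\oplus\varphi)$ and the explicit formula $u_{\circ}=u+\SL[\hat\gamma_{1}]u-\DL[\hat\gamma_{0}]u$. The first follows from the already-stated fact that $G_{z}$ maps $H^{-3/2}(\Gamma)\oplus H^{-1/2}(\Gamma)$ into $H^{0}_{\Delta}(\RE^{n}\setminus\Gamma)$ together with the jump identity $[\hat\gamma]G_{z}=\mathsf 1$ in \eqref{G_z_jumps}; indeed, setting $\phi\oplus\varphi=[\hat\gamma]u$ one checks $[\hat\gamma](u-G[\hat\gamma]u)=0$, and any element of $H^{0}_{\Delta}(\RE^{n}\setminus\Gamma)$ killed by $[\hat\gamma]$ lies in $H^{2}(\RE^{n})$.

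The second block is the characterization of self-adjointness. Using the one-sided half-Green formula \eqref{hG} on $\Omega_{\-}$ and $\Omega_{\+}$, one computes the boundary sesquilinear form
\begin{equation*}
\langle (\Delta|\ker(\gamma))^{*}u,v\rangle-\langle u,(\Delta|\ker(\gamma))^{*}v\rangle
=\langle\gamma u_{\circ},[\hat\gamma]v\rangle-\langle[\hat\gamma]u,\gamma v_{\circ}\rangle,
\end{equation*}
i.e.\ an abstract Green identity with ``Dirichlet side'' $\gamma u_{\circ}$ and ``Neumann side'' $[\hat\gamma]u$. The boundary condition $\Pi\gamma u_{\circ}=\Theta[\hat\gamma]u$, with $\Pi$ an orthogonal projection and $\Theta:\dom(\Theta)\subseteq\ran(\Pi)'\to\ran(\Pi)$ self-adjoint, cancels this form on pairs $(u,v)$ in $\dom(\Delta_{\Pi,\Theta})$, giving symmetry; the converse (maximality) reduces to the abstract fact that self-adjoint relations on $\ran(\Pi)$ are precisely those of the form $\mathrm{graph}(\Theta)$, applied to the boundary data. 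This is the conceptual heart of the proof and is where all the care about dualities $H^{\pm 3/2}(\Gamma)\oplus H^{\pm 1/2}(\Gamma)$, $\Lambda^{s}$, and the quasi-boundary-triple framework is needed.

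The final block is the resolvent formula. Fix $z\in Z_{\Pi,\Theta}$ and $f\in L^{2}(\RE^{n})$; set $u_{0}:=(-\Delta+z)^{-1}f$ and look for $u=u_{0}+G_{z}\eta$ in $\dom(\Delta_{\Pi,\Theta})$. Since $(\Delta-z)G_{z}\eta=0$ on $\RE^{n}\setminus\Gamma$ and $[\hat\gamma]u=[\hat\gamma]G_{z}\eta=\eta$ by \eqref{G_z_jumps}, imposing $\eta\in\dom(\Theta)$ and the boundary condition on $u$ forces
\begin{equation*}
\Pi\gamma(u_{0}+(G-G_{z})\eta)=\Theta\eta,\qquad\text{i.e.}\qquad(\Theta+\Pi M_{z}\Pi')(\Pi'\eta)=\Pi\gamma u_{0},
\end{equation*}
after writing $\eta=\Pi'\eta$ (which is consistent with $\eta\in\ran(\Pi)'\cap\dom(\Theta)$). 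Invertibility of $\Theta+\Pi M_{z}\Pi'$ is exactly the defining property of $Z_{\Pi,\Theta}$, hence $\eta=\Pi'(\Theta+\Pi M_{z}\Pi')^{-1}\Pi\gamma u_{0}$ and \eqref{Krein} follows. To finish, the inclusion $\CO\setminus\RE\subseteq Z_{\Pi,\Theta}$ is deduced from the Weyl-function identity
\begin{equation*}
\mathrm{Im}\,\langle M_{z}\xi,\xi\rangle=\mathrm{Im}(z)\,\|G_{\bar z}\xi\|^{2}_{L^{2}(\RE^{n})},
\end{equation*}
which, combined with self-adjointness of $\Theta$, gives a definite sign to $\mathrm{Im}(\Theta+\Pi M_{z}\Pi')$ and hence invertibility. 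I expect the main obstacle to be the rigorous treatment of the domain of $\Theta+\Pi M_{z}\Pi'$ in the sense of unbounded operators on $\ran(\Pi)$ (via the unitary $\Lambda^{3}\oplus\Lambda$), namely that $\dom(\tilde\Theta+\Pi\tilde M_{z}\Pi')=\dom(\tilde\Theta)$ and that the sum is closed with non-empty resolvent set; everything else is algebraic manipulation in the stated functional framework.
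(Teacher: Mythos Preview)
The paper does not actually prove Theorem~\ref{Theorem_Krein}: it is quoted verbatim from \cite[Theorem~4.4]{MaPoSi}, with the surrounding discussion merely recalling the representation of $(\Delta|\ker(\gamma))^{*}$ from \cite[Lemmata~2.3 and~4.2]{MaPoSi}. So there is no in-text proof to compare against; the relevant benchmark is the abstract machinery of \cite{Posi} as specialized in \cite{MaPoSi}.

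Your outline is precisely that machinery, and the logical structure is correct: the direct-sum decomposition of the maximal domain via the jump relation \eqref{G_z_jumps}, the abstract Green identity pairing $\gamma u_{\circ}$ with $[\hat\gamma]u$, the parametrization of self-adjoint restrictions by pairs $(\Pi,\Theta)$, and the resolvent ansatz $u=(-\Delta+z)^{-1}f+G_{z}\eta$ leading to \eqref{Krein}. One small correction: in your resolvent step you should write $\Pi\gamma(u+(G-G_{z})\eta)=\Pi\gamma u_{0}-\Pi M_{z}\eta$ for the \emph{regular part} $u_{\circ}=u_{0}+(G-G_{z})\eta$ (your $u_{0}$ alone is not $u_{\circ}$ once $\eta\neq0$), which is exactly what you then use. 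Your anticipated obstacle is also the right one: the nontrivial point in \cite{Posi,MaPoSi} is showing that $\tilde\Theta+\Pi\tilde M_{z}\Pi'$ is closed with $\CO\backslash\RE$ in its resolvent set, which requires that $M_{z}$ (transported to $\ran(\Pi)$ via $\Lambda^{3}\oplus\Lambda$) be a bounded perturbation of $\tilde\Theta$ together with the dissipativity estimate you wrote. Nothing in your sketch is wrong or missing at the level of strategy.
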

\begin{remark} Let us notice that the $\Pi'\,$'s appearing in formula \eqref{Krein} act there merely as the inclusion map $\Pi':\ran(\Pi)^{\prime}\to H^{-3/2}(\Gamma)\oplus H^{-1/2}(\Gamma)$. This means that one does not need to know $\Pi'$ explicitly: it suffices to know the subspace $\ran(\Pi')=\ran(\Pi)'$.
\end{remark}
Given the self-adjoint operator $\Theta:\dom(\Theta)\subseteq \ran(\Pi)'\to \ran(\Pi)$, we now introduce the following assumptions:
\begin{equation}
\dom(   \Theta)  \subseteq 
H^{s_{1}}(\Gamma) \oplus H^{s_{2}}(\Gamma)  \,,\qquad  s_{1}>-\frac32\,,\quad s_{2}>-\frac12\,, \label{Theta_reg1}%
\end{equation}
\begin{equation}
\dom(   \Theta)  \subseteq 
H^{{\frac12}}(\Gamma) \oplus H^{{\frac32}}(\Gamma)   \label{Theta_reg}%
\end{equation}
and
\be\label{Theta_reg2}
\dom(f_{\t\Theta})\subseteq H^{\frac52}(\Gamma)\oplus H^{{\frac32}}(\Gamma)\,, 
\ee
where $f_{\t\Theta}$ is sesquilinear form associated to the self-adjoint operator in $\ran(\Pi)$ defined by $\tilde\Theta:=\Theta(\Lambda^{3}\oplus\Lambda)$. \par
The next result gives informations on the spectrum and scattering of $\Delta_{\Pi,\Theta}$; for the proof of such results we refer to \cite[Lemma 4.10, Corollary 4.12 and Remark 4.14]{MaPoSi}. 
\begin{theorem}\label{TeoSpec} Suppose Theorem \ref{Theorem_Krein} holds. Then:\par\noindent 
1) assumption \eqref{Theta_reg1} implies
$$
\sigma_{ess}(\Delta_{\Pi,\Theta})=(-\infty,0]\,;
$$
2) either assumption \eqref{Theta_reg} or \eqref{Theta_reg2} gives 
the existence and completeness of the wave operators 
$$
W_{\pm}
:=\text{s-}\lim_{t\to\pm\infty}e^{-it\Delta_{\Pi,\Theta}}e^{it\Delta}
\,,\qquad
\t W_{\pm}
:=\text{s-}\lim_{t\to\pm\infty}e^{-it\Delta}e^{it\Delta_{\Pi,\Theta}}P_{ac}\,,
$$ 
i.e. the limits exists everywhere w.r.t. strong convergence, $\ran(W_{\pm})=L^{2}(\RE^{n})_{ac}$, $\ran(\t W_{\pm})=L^{2}(\RE^{n})$ and $W_{\pm}^{*}=\t W_{\pm}$, where $L^{2}(\RE^{n})_{ac}$ denotes the absolutely continuous subspace of $L^{2}(\RE^{n})$ with respect to  $\Delta_ {\Pi,\Theta}$ and $P_{ac}$ is the corresponding orthogonal projector. This then implies
\begin{equation}
\sigma_{ac}(   \Delta_ {   \Pi,\Theta  })  
=(-\infty, 0] \,.
\label{spectrum}%
\end{equation}
\end{theorem}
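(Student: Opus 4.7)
My plan is to derive both assertions from Kre\u\i n's resolvent formula \eqref{Krein}, which identifies
\[
(-\Delta_{\Pi,\Theta}+z)^{-1}-(-\Delta+z)^{-1}=G_{z}\Pi'(\Theta+\Pi M_{z}\Pi')^{-1}\Pi\gamma(-\Delta+z)^{-1}.
\]
Since the free operator satisfies $\sigma_{ess}(\Delta)=\sigma_{ac}(\Delta)=(-\infty,0]$, both parts reduce to controlling this resolvent difference, or a high enough power of it, in a suitable operator ideal and then invoking Weyl's stability theorem for the essential spectrum and the Kato--Rosenblum theorem combined with the Birman invariance principle for the scattering part.

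For part (1), the task is to show that the resolvent difference is compact. The trace map $\gamma(-\Delta+z)^{-1}$ is bounded $L^{2}(\RE^{n})\to H^{3/2}(\Gamma)\oplus H^{1/2}(\Gamma)$ and $G_{z}$ is bounded $H^{-3/2}(\Gamma)\oplus H^{-1/2}(\Gamma)\to L^{2}(\RE^{n})$. Assumption \eqref{Theta_reg1} means that $(\Theta+\Pi M_{z}\Pi')^{-1}$ has range in $H^{s_{1}}(\Gamma)\oplus H^{s_{2}}(\Gamma)$ with $s_{1}>-\tfrac32,\ s_{2}>-\tfrac12$, and the inclusion of this range into $H^{-3/2}(\Gamma)\oplus H^{-1/2}(\Gamma)$ is compact by Rellich's theorem (compactness of $\Gamma$ is essential here). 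Composing the three maps shows that the resolvent difference is compact, and Weyl's theorem then gives $\sigma_{ess}(\Delta_{\Pi,\Theta})=\sigma_{ess}(\Delta)=(-\infty,0]$.

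For part (2), I would upgrade compactness to a Schatten--von Neumann estimate
\[
(-\Delta_{\Pi,\Theta}+z)^{-m}-(-\Delta+z)^{-m}\in\mathcal{L}^{1}
\]
for some integer $m$ depending on $n$. Expanding the difference of $m$-th powers by the telescoping identity, each summand factors through maps of the form $\gamma(-\Delta+z)^{-k}$ with $(\Theta+\Pi M_{z}\Pi')^{-1}$ sandwiched in between. Under \eqref{Theta_reg} (respectively \eqref{Theta_reg2}) the latter operator produces one (resp.\ two) additional orders of boundary Sobolev regularity; pairing this gain with the Weyl-type eigenvalue asymptotics for $-\Delta_{LB}$ on the compact $(n-1)$-dimensional hypersurface $\Gamma$ yields Schatten $p$-class membership of each factor with summable exponent once $m$ is large enough. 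The Kato--Rosenblum theorem applied to the resolvent pair, together with the Birman invariance principle for monotone functions of self-adjoint operators, delivers existence and completeness of $W_{\pm}$ and $\t W_{\pm}$ for $(\Delta_{\Pi,\Theta},\Delta)$. The spectral identity \eqref{spectrum} then follows from $\sigma_{ac}(\Delta)=(-\infty,0]$ together with the intertwining relation $\Delta_{\Pi,\Theta}W_{\pm}=W_{\pm}\Delta$ and $\ran(W_{\pm})=L^{2}(\RE^{n})_{ac}$.

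I expect the Schatten-class step to be the main obstacle: it requires matching precisely the order of boundary regularization provided by $\Theta$ against the sharp Weyl asymptotics of $-\Delta_{LB}$ on $\Gamma$, and this is exactly the step that dictates the smoothness hypothesis on $\Gamma$ highlighted in the introduction. Part (1), by contrast, is essentially soft and uses only the compactness of Sobolev embeddings on the compact manifold $\Gamma$.
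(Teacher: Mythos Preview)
Your proposal is correct and follows essentially the same route as the paper, which does not reprove the theorem here but refers to \cite[Lemma 4.10, Corollary 4.12 and Remark 4.14]{MaPoSi}: part (1) via compactness of $(\Theta+\Pi M_{z}\Pi')^{-1}$ from the Rellich embedding on the compact $\Gamma$ and Weyl's theorem, part (2) via Schatten--von Neumann estimates for differences of resolvent powers (using the eigenvalue asymptotics of $-\Delta_{LB}$ on $\Gamma$) and trace-class scattering theory. Your identification of the Schatten-class step as the one dictating smoothness of $\Gamma$ is exactly the point made in the paper's introduction.
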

\begin{remark}
Let us remark that hypothesis \eqref{Theta_reg} holds in the case of global boundary conditions, i.e. assigned on whole boundary $\Gamma$ (see Section 6),  whereas hypothesis \eqref{Theta_reg2} holds in the case of local ones, i.e. assigned on $\Sigma\subset\Gamma$ (see Section 7).  
\end{remark}
\begin{remark}
Let us notice that the apparent discrepancy between the indices in \eqref{Theta_reg} and \eqref{Theta_reg2} is due to the fact that the first one applies to operators acting between the dual pair $(\ran(\Pi)',\ran(\Pi))$ whereas the second one concerns sesquilinear forms in the space $\ran(\Pi)$; when written in terms of $\t\Theta$, condition \eqref{Theta_reg} reads as $\dom(\t\Theta)\subseteq H^{\frac72}(\Gamma)\oplus H^{\frac52}(\Gamma)$. 
\end{remark}
Under hypothesis \eqref{Theta_reg}, it is possible to introduce an alternative description of $\Delta_ {  \Pi
,\Theta  }$  (see \cite[Corollary 4.8]{MaPoSi}):
\begin{corollary}
\label{Lemma_parameter}Let $\Delta_ { 
\Pi,\Theta  }$ be defined according to Theorem
\ref{Theorem_Krein} with $\Theta$ fulfilling (\ref{Theta_reg}). Define%
\begin{equation}
B_{\Theta}:=\Theta+\Pi \gamma G\Pi^{\prime}:\dom(   \Theta)  \subseteq
\ran(   \Pi)^{\prime}  \rightarrow \ran(   \Pi)  \,.
\end{equation}
Then%
\begin{equation}
\dom(   \Delta_ {  \Pi,\Theta  })  =\{u\in H^{2}( 
\mathbb{R}^{n}\backslash\Gamma) : [  \gamma  ]
u\in \dom(   \Theta),\, \Pi\gamma   u=B_{\Theta}[
\gamma  ]  u\}\,,
\end{equation}
and, whenever $z\in Z_{   \Pi
,\Theta  }$,
\begin{equation}
\begin{split}
&(   -\Delta_ {   \Pi,\Theta  }+z)  ^{-1}\\=&(   -\Delta+z)
^{-1}+G_{z}\Pi^{\prime}(   B_\Theta-\Pi\gamma G_{z}\Pi^{\prime})
^{-1}\Pi\gamma(   -\Delta+z)  ^{-1}\,.
\end{split}
\end{equation}
\end{corollary}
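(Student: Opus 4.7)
The plan is to use the representation of $\dom(\Delta_{\Pi,\Theta})$ from Theorem \ref{Theorem_Krein} together with the smoothing effect of hypothesis \eqref{Theta_reg} to reformulate the boundary condition purely in terms of $u\in H^{2}(\RE^{n}\backslash\Gamma)$, its trace $\gamma u$ and its jump $[\gamma]u$, and then to substitute the Weyl function decomposition $M_{z}=\gamma G-\gamma G_{z}$ into the Kre\u\i n formula.

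First I would verify that any $u=u_{\circ}+G(\phi\oplus\varphi)$ in $\dom(\Delta_{\Pi,\Theta})$ actually lies in $H^{2}(\RE^{n}\backslash\Gamma)$. The ingredient is the classical mapping property of the layer operators (see \cite[Section 3.4]{MaPoSi} and the references therein): $\SL$ sends $H^{\frac12}(\Gamma)$ boundedly into $H^{2}(\RE^{n}\backslash\Gamma)$ and $\DL$ sends $H^{\frac32}(\Gamma)$ boundedly into $H^{2}(\RE^{n}\backslash\Gamma)$. Under \eqref{Theta_reg}, $\phi\oplus\varphi\in\dom(\Theta)\subseteq H^{\frac12}(\Gamma)\oplus H^{\frac32}(\Gamma)$, so by \eqref{Gz} one has $G(\phi\oplus\varphi)\in H^{2}(\RE^{n}\backslash\Gamma)$, while $u_{\circ}\in H^{2}(\RE^{n})\subset H^{2}(\RE^{n}\backslash\Gamma)$.

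Second I would recover $\phi\oplus\varphi$ from $u$ via the jump relation \eqref{G_z_jumps}. Since $u_{\circ}$ has continuous traces ($[\hat\gamma]u_{\circ}=0$) and $[\hat\gamma]G$ is the identity on $H^{-\frac32}(\Gamma)\oplus H^{-\frac12}(\Gamma)$, one obtains $[\gamma]u=\phi\oplus\varphi$, and therefore $u_{\circ}=u-G[\gamma]u$, with both summands in $H^{2}(\RE^{n}\backslash\Gamma)$. Taking the average trace $\gamma$ of this identity and plugging into the original boundary condition $\Pi\gamma u_{\circ}=\Theta(\phi\oplus\varphi)$ yields
$$\Pi\gamma u-\Pi\gamma G[\gamma]u=\Theta[\gamma]u,$$
i.e. $\Pi\gamma u=B_{\Theta}[\gamma]u$, once $\Pi'$ is read as the inclusion of $\ran(\Pi)'$ into $H^{-\frac32}(\Gamma)\oplus H^{-\frac12}(\Gamma)$ as in Remark 3.2. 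Conversely, any $u\in H^{2}(\RE^{n}\backslash\Gamma)$ with $[\gamma]u\in\dom(\Theta)$ and $\Pi\gamma u=B_{\Theta}[\gamma]u$ can be split as $u=(u-G[\gamma]u)+G[\gamma]u$: the first summand lies in $H^{2}(\RE^{n})$ because the jump identity cancels all jumps, and the boundary condition then reduces to the one from Theorem \ref{Theorem_Krein}, giving the reverse inclusion.

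For the resolvent formula the algebraic identity
$$\Theta+\Pi M_{z}\Pi'=\Theta+\Pi\gamma G\Pi'-\Pi\gamma G_{z}\Pi'=B_{\Theta}-\Pi\gamma G_{z}\Pi'$$
follows at once from the definition $M_{z}=\gamma(G-G_{z})$, and substituting into \eqref{Krein} produces the stated expression. The only delicate point is the regularity claim in the first step, but it is a standard consequence of layer potential theory as recorded in \cite{MaPoSi}; everything else is a direct algebraic rearrangement.
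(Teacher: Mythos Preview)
Your argument is correct and is exactly the natural one: the paper itself does not supply a proof but simply refers to \cite[Corollary~4.8]{MaPoSi}, and what you wrote is precisely the content of that result---use \eqref{Theta_reg} together with the $H^{2}$-mapping properties of the layer potentials to place $u$ in $H^{2}(\RE^{n}\backslash\Gamma)$, invoke the jump identity \eqref{G_z_jumps} to recognize $\phi\oplus\varphi=[\gamma]u$, and then rewrite both the boundary condition and the Weyl function via $M_{z}=\gamma G-\gamma G_{z}$. There is no alternative route to compare with.
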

The results contained in Theorem \ref{TeoSpec} do not exclude the presence of negative eigenvalues embedded in the essential spectrum, an information that is relevant for the issues to be treated in the next sections. However, since the singular perturbations defining $\Delta_{\Pi,\Theta}$ are compactly supported, an easy application of the unique continuation principle and Rellich's estimate give criteria for the absence of such eigenvalues. For successive notational convenience let us pose 
$$
E^{-}_{\Pi,\Theta}:=\{\lambda\in (-\infty,0):
\lambda\notin \sigma_{p}(\Delta_{\Pi,\Theta})\}\,,
$$
so that absence of negative eigenvalues is equivalent to $E^{-}_{\Pi,\Theta}=(-\infty,0)$.
\begin{theorem}\label{negative1} Let $\Gamma_{\! 0}\subseteq\Gamma$ be a closed set such that 
$\supp(\phi)\cup\supp(\varphi)\subseteq\Gamma_{\! 0}$ for any $\phi\oplus\varphi\in\dom(\Theta)\subseteq\ran(\Pi)^{\prime}$.  
If the open set $\RE^{n}\backslash\Gamma_{\! 0}$ is connected, then $E^{-}_{\Pi,\Theta}=(-\infty,0)$.
\end{theorem}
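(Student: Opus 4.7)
The plan is to combine Rellich's uniqueness theorem for the Helmholtz equation with a weak unique continuation argument. Assume, for contradiction, that some $\lambda\in(-\infty,0)$ is an eigenvalue of $\Delta_{\Pi,\Theta}$ with a nontrivial eigenfunction $u\in\dom(\Delta_{\Pi,\Theta})$; the goal is to derive $u\equiv 0$ almost everywhere in $\RE^{n}$, which contradicts $u\ne 0$.

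The first (and main) step is to identify the support of the distribution $(\Delta-\lambda)u$ on $\RE^{n}$. Writing $u=u_{\circ}+G(\phi\oplus\varphi)$ via Theorem~\ref{Theorem_Krein} and using the jump relation \eqref{G_z_jumps} together with $[\hat\gamma]u_{\circ}=0$, one gets $[\hat\gamma]u=\phi\oplus\varphi\in\dom(\Theta)$. The hypothesis on $\Gamma_{\!0}$ combined with \eqref{[tau]} then forces $\supp([\hat\gamma_{i}]u)\subseteq\Gamma_{\!0}$ for $i=0,1$. Inserting this information into the distributional identity \eqref{adjoint3} and using $\Delta_{\Pi,\Theta}u=\lambda u$ shows that $(\Delta-\lambda)u=[\hat\gamma_{1}]u\,\delta_{\Gamma}+[\hat\gamma_{0}]u\,\nu\!\cdot\!\nabla\delta_{\Gamma}$ is a distribution supported in $\Gamma_{\!0}$. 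Hence on the open set $U:=\RE^{n}\backslash\Gamma_{\!0}$, $u$ satisfies the Helmholtz equation $(-\Delta-k^{2})u=0$ with $k:=\sqrt{|\lambda|}>0$; by analytic hypoellipticity of this constant-coefficient elliptic operator, $u$ is real-analytic on $U$.

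The second step exploits the fact that $\Gamma_{\!0}\subseteq\Gamma=\partial\Omega$ is compact, so $\Gamma_{\!0}$ is contained in some ball $B_{R}$, and hence $\{|x|>R\}\subseteq U$ carries a classical $L^{2}$-solution of the Helmholtz equation with positive energy $k^{2}$. Rellich's uniqueness theorem (the radial modes decay like $|x|^{-(n-1)/2}$, which is not $L^{2}$ against the volume element $|x|^{n-1}d|x|$) then yields $u\equiv 0$ on $\{|x|>R\}$. Using the connectedness hypothesis on $U$ together with the weak unique continuation principle applied to the real-analytic $u$, one concludes $u\equiv 0$ on $U$, and since $\Gamma_{\!0}$ has zero Lebesgue measure in $\RE^{n}$ this gives $u=0$ a.e., the desired contradiction; this shows $\lambda\notin\sigma_{p}(\Delta_{\Pi,\Theta})$ and thus $E^{-}_{\Pi,\Theta}=(-\infty,0)$.

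The only substantive obstacle is the first step, namely ensuring that $\supp((\Delta-\lambda)u)\subseteq\Gamma_{\!0}$ rather than just $\Gamma$: this reduces to carefully pushing the support hypothesis $\phi\oplus\varphi\in\dom(\Theta)\Rightarrow\supp\phi\cup\supp\varphi\subseteq\Gamma_{\!0}$ through the jump identities for single and double layer operators and into the representation \eqref{adjoint3}. Once this is in place, Rellich's theorem and weak unique continuation are off-the-shelf tools and no further structural input from the extension parameter $(\Pi,\Theta)$ is required.
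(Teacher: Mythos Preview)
Your proof is correct and follows essentially the same route as the paper's: both arguments use \eqref{adjoint3} together with the support hypothesis on $\dom(\Theta)$ to see that $(\Delta-\lambda)u=0$ on $\RE^{n}\backslash\Gamma_{\!0}$, then invoke Rellich's theorem to force $u=0$ outside a large ball, and finally apply unique continuation on the connected set $\RE^{n}\backslash\Gamma_{\!0}$. Your version is slightly more explicit (spelling out the jump identification $[\hat\gamma]u=\phi\oplus\varphi$, invoking analytic hypoellipticity rather than citing a general unique continuation theorem, and noting that $\Gamma_{\!0}$ is Lebesgue-null), but these are cosmetic differences rather than a different approach.
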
   
\begin{proof} Suppose that there exist $\lambda\in (-\infty,0)$ and $u_{\lambda}\in \dom(\Delta_{\Pi,\Theta})\subseteq L^{2}(\RE^{n})\cap H^{2}_{loc}(\RE^{n}\backslash\Gamma_{\! 0})$ such that $\Delta_{\Pi,\Theta}u_{\lambda}=\lambda\, u_{\lambda}$. Then, by \eqref{adjoint3}, $\Delta u_{\lambda}(x)=\lambda\,u_{\lambda}(x)$ for a.e. $x\in \RE^{n}\backslash\Gamma_{\! 0}$. Thus, by the unique continuation principle (see e.g. \cite[Theorem XIII.63]{R&S}), $u_{\lambda}=0$ a.e. whenever $u_{\lambda}$ vanishes in the neighborhood of a single point $x_{\circ}\in\RE^{n}\backslash\Gamma_{\! 0}$. By \eqref{adjoint3} again, $(\Delta-\lambda) u_{\lambda}=0$ outside some sufficiently large ball $B$ containing $\Omega_{\-}$. Thus, by Rellich's estimate, one gets $u_{\lambda}|B^{c}=0$ (see e.g. \cite[Corollary 4.8]{Leis}) and the proof is done.
\end{proof}
\begin{remark}\label{negative2} Obviously, in the case $\Gamma_{\! 0}=\Gamma$, one has that  $\RE^{n}\backslash\Gamma=\Omega_{\-}\cup\Omega_{\+}$ is not connected. However, if $\Omega_{\+}$ is connected then, by the same kind of reasonings as in the proof of Theorem \ref{negative1}, one gets $u_{\lambda}|\Omega_{\+}=0$. Thus, if the boundary conditions appearing in $\dom(\Delta_{\Pi,\Theta})$ are such that 
$$
\ u|\Omega_{\+}=0\,,\   (\Delta u-\lambda u)|\Omega_{\-}=0\,,\ u\in\dom(\Delta_{\Pi,\Theta}) \quad\Longrightarrow\quad u|\Omega_{\-}=0\,,
$$
then $E^{-}_{\Pi,\Theta}=(-\infty,0)$. For example, two cases where this hypothesis holds are the $\delta$- and $\delta'$-interactions on $\Gamma$  which corresponds to the semi-transparent boundary conditions 
$$[\gamma_{0}]u=0\,,\quad\alpha\gamma_{0}u=[\gamma_{1}]u$$ and 
$$
[\gamma_{1}]u=0\,,\quad
\beta\gamma_{1}u=[\gamma_{0}]u
$$ 
respectively (see Subsections \ref{delta} and \ref{deltaprimo}).
\end{remark}
\end{section}
\begin{section}{The Limiting Absorption Principle}

We begin by recalling the limiting absorption principle  for the self-adjoint operator representing the free Laplacian $\Delta:H^{2}(\RE^{n})\subseteq L^{2}(\RE^{n})\to L^{2}(\RE^{n})$ (see e.g. \cite[Section 4]{Agm}):
\begin{theorem}\label{LAP0} For any $k\in\RE\backslash\{0\}$ and for any $\alpha>{\frac12}$, the  limits 
\be\label{lim1}
R^{\pm}_{-k^{2}}:=\lim_{\epsilon\downarrow 0} (-\Delta-(k^{2}\pm i\epsilon))^{-1}
\ee
exist in $\B(L^{2}_{\alpha}(\RE^{n}),H^{2}_{-\alpha}(
\mathbb{R}^{n}) )$. Setting $\CO_{\pm}:=\{z\in\CO:\pm\text{\rm Im}(z)> 0\}$ and
$$
R^{\pm}_{z}:=\begin{cases} (-\Delta+z)^{-1}\,,&z\in\CO_{\pm}\\
R^{\pm}_{\lambda}\,,&\lambda\in(-\infty,0)\,,
\end{cases}
$$
the maps $z\mapsto R^{\pm}_{z}$ are continuous on $\CO_{\pm}\cup(-\infty,0)$ to $\B(L^{2}_{\alpha}(\RE^{n}),H^{2}_{-\alpha}(\mathbb{R}^{n}) )$.
\end{theorem}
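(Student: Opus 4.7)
The statement is the classical Agmon Limiting Absorption Principle for the free Laplacian, so my plan is to follow Agmon's Fourier-analytic approach. The starting point is to conjugate by the Fourier transform $F$: since $(-\Delta+z)^{-1}$ is the multiplier by $(|\xi|^2+z)^{-1}$, the problem reduces to studying the family of multiplication operators $T_z:\hat u\mapsto (|\xi|^2-k^2\mp i\epsilon)^{-1}\hat u$ acting between appropriate weighted Sobolev spaces, where $z=-k^2\mp i\epsilon$. The key technical ingredient will be Agmon's trace theorem, stating that for $\alpha>1/2$ the Fourier transform extends to a bounded map
\[
F\colon L^2_{\alpha}(\RE^n)\to C\bigl((0,\infty);L^2(\Sf^{n-1})\bigr),\qquad (F u)(r\omega)\mapsto (F u)(r\,\cdot)\in L^2(\Sf^{n-1}),
\]
with $r\mapsto (Fu)(r\,\cdot)$ even Hölder-continuous of exponent $\alpha-1/2$ on compacts. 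This follows from a direct computation of $\|\widehat u(r\,\cdot)-\widehat u(r'\,\cdot)\|_{L^2(\Sf^{n-1})}$ using the weighted decay of $u$ and Cauchy--Schwarz in $r$.

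Once this trace lemma is in hand, I would localize in Fourier space: write $\hat u=\chi\hat u+(1-\chi)\hat u$, where $\chi$ is supported near the singular sphere $\{|\xi|=|k|\}$. On the support of $1-\chi$ the multiplier is bounded and smooth, so $T_z(1-\chi)\hat u$ converges uniformly on compact subsets of $\CO_\pm\cup(-\infty,0)$ in the operator norm, giving a harmless term. For the singular part, I would pass to polar coordinates $\xi=r\omega$ and apply the Plemelj--Sokhotski relation
\[
\lim_{\epsilon\downarrow0}\frac{1}{r^2-k^2\mp i\epsilon}=\text{p.v.}\frac{1}{r^2-k^2}\pm i\pi\,\delta(r^2-k^2),
\]
acting on the function $r\mapsto r^{n-1}\int_{\Sf^{n-1}}|\hat u(r\omega)|^2\,d\omega$. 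The $\delta$-part evaluates the trace of $\hat u$ on $\{|\xi|=|k|\}$, whose boundedness from $L^2_{\alpha}$ is exactly Agmon's trace lemma; the principal-value part is controlled by a one-dimensional Hilbert-transform estimate applied to the sphere-averaged function, again using $\alpha>1/2$ to ensure sufficient regularity in $r$. This yields existence of $R^\pm_{-k^2}$ in the weak operator topology of $\B(L^2_\alpha,L^2_{-\alpha})$ and, after the uniform estimates, in norm.

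To upgrade the target space from $L^2_{-\alpha}$ to $H^2_{-\alpha}$, I observe that if $v^\pm_\epsilon:=(-\Delta-(k^2\pm i\epsilon))^{-1}f$ converges in $L^2_{-\alpha}$, then $\Delta v^\pm_\epsilon=-f-(k^2\pm i\epsilon)v^\pm_\epsilon$ converges in $L^2_{-\alpha}$ as well, and standard interior/weighted elliptic estimates then give convergence in $H^2_{-\alpha}$. For continuity of $z\mapsto R^\pm_z$ on $\CO_\pm\cup(-\infty,0)$, the estimates above are uniform when $z$ varies in compact subsets (the dependence of the multiplier on $z$ is Lipschitz away from the critical sphere and the parameter $k$ enters the sphere trace continuously via the trace lemma), which yields continuity with values in $\B(L^2_\alpha,H^2_{-\alpha})$.

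The main obstacle is the sphere-trace step: proving that $F$ actually produces continuous $L^2(\Sf^{n-1})$-valued traces on spheres under the sole assumption $\alpha>1/2$, and that the principal-value/delta decomposition can be interpreted with the right quantitative bounds. Everything else (the dense-subspace argument, the elliptic bootstrap from $L^2_{-\alpha}$ to $H^2_{-\alpha}$, the $z$-continuity) is then bookkeeping built on top of this core estimate.
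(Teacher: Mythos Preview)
Your sketch is essentially Agmon's own Fourier-analytic argument, and is correct in outline. Note, however, that the paper does not actually prove this theorem: it is stated as a known result with a reference to \cite[Section 4]{Agm}, so there is no ``paper's own proof'' to compare against beyond the fact that both you and the authors are invoking the same classical source.
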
 
The existence of the resolvent's limits on the continuous spectrum have been
discussed in \cite{Ren1},\cite{Ren2} for a wide class of operators including
singular perturbations. The general results there provided allow to
prove, in our case, a limiting absorption principle for $\Delta_ { 
\Pi,\Theta  }$: 
\begin{theorem}
\label{Theorem_LAP}Let $\Delta_ {\Pi,\Theta  }$ be defined as in 
Theorem \ref{Theorem_Krein} and assume that it is bounded from above and that \eqref{Theta_reg1} holds true.
Then $(-\infty,0)\cap\sigma_{p}(\Delta_{\Pi,\Theta})$ is a (possibly empty) discrete set of eigenvalues of finite multiplicity and the limits%
\begin{equation}
R_{\Pi,\Theta,-k^{2}}^{\pm} :=\lim_{\epsilon\downarrow 0}\,(  - \Delta_ {(   \Pi,\Theta)
}-(k^{2}\pm i\epsilon))  ^{-1}
\label{LAP}%
\end{equation}
exist in ${\B}(   L^{2}_{\alpha}( 
\mathbb{R}^{n})  ,L^{2}_{-\alpha}(   \mathbb{R}^{n})  )  $
for all $\alpha>{\frac12}$ and for all $k\in\RE\backslash\{0\}$ such that $-k^{2}\in E^{-}_{\Pi,\Theta}$.
\end{theorem}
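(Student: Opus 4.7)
The plan is to apply the Kre\u\i n-type resolvent formula \eqref{Krein} and analyze the continuation of each of its factors as $z\to -k^{2}\pm i0$. Written as
\[
(-\Delta_{\Pi,\Theta}+z)^{-1} - (-\Delta+z)^{-1} = G_{z}\Pi'(\Theta+\Pi M_{z}\Pi')^{-1}\Pi\gamma(-\Delta+z)^{-1},
\]
the task splits into controlling (i) the free resolvent on the left, supplied by Theorem \ref{LAP0}; (ii) the trace of the free resolvent $\gamma(-\Delta+z)^{-1}$ together with its dual $G_{z}$, as maps between $L^{2}_{\pm\alpha}(\RE^{n})$ and Sobolev spaces on $\Gamma$; and (iii) the operator inverse $(\Theta+\Pi M_{z}\Pi')^{-1}$.

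Since $\Gamma$ is compact and located inside a bounded region, Theorem \ref{LAP0} together with the continuity of $\gamma$ gives continuous extensions
\[
z\mapsto \gamma(-\Delta+z)^{-1}\in\B(L^{2}_{\alpha}(\RE^{n}),H^{3/2}(\Gamma)\oplus H^{1/2}(\Gamma))
\]
up to $(-\infty,0)$ with separate boundary values $\gamma R^{\pm}_{-k^{2}}$ from the two half-planes. Dualising, $G_{z}=(\gamma(-\Delta+\bar z)^{-1})^{*}$ acquires boundary values $G^{\pm}_{-k^{2}}\in\B(H^{-3/2}(\Gamma)\oplus H^{-1/2}(\Gamma),L^{2}_{-\alpha}(\RE^{n}))$, so that $M_{z}=\gamma(G-G_{z})$ converges to $M^{\pm}_{-k^{2}}\in\B(H^{-3/2}(\Gamma)\oplus H^{-1/2}(\Gamma),H^{3/2}(\Gamma)\oplus H^{1/2}(\Gamma))$.

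The decisive step is to handle $(\Theta+\Pi M_{z}\Pi')^{-1}$ in the limit. Exploiting assumption \eqref{Theta_reg1}, the embedding $\dom(\Theta)\subseteq H^{s_{1}}(\Gamma)\oplus H^{s_{2}}(\Gamma)$ with $s_{1}>-3/2$, $s_{2}>-1/2$ on the compact manifold $\Gamma$ is compact; this makes $\Pi M_{z}\Pi'$ a relatively compact perturbation on $\dom(\Theta)$ and renders $\Theta+\Pi M_{z}\Pi'$ a Fredholm family of index zero depending analytically on $z$. Since invertibility holds on $\CO\backslash\RE$ by Theorem \ref{Theorem_Krein}, the analytic Fredholm theorem propagates invertibility up to the real axis off a discrete set in $(-\infty,0)$, yielding at the same time the first assertion of the theorem on the discreteness and finite multiplicity of negative eigenvalues. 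The identification of the exceptional set with $(-\infty,0)\cap\sigma_{p}(\Delta_{\Pi,\Theta})$ is then obtained by a two-sided Lippmann--Schwinger argument: any nontrivial kernel element $\phi\oplus\varphi$ of $\Theta+\Pi M^{\pm}_{-k^{2}}\Pi'$ gives rise to an $L^{2}$ eigenfunction $u=G^{\pm}_{-k^{2}}(\Pi'(\phi\oplus\varphi))$ satisfying the boundary conditions in $\dom(\Delta_{\Pi,\Theta})$ at eigenvalue $-k^{2}$, and conversely every such eigenfunction yields jump data lying in this kernel.

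The main technical obstacle is precisely this correspondence: verifying the $L^{2}$-integrability of $G^{\pm}_{-k^{2}}(\Pi'(\phi\oplus\varphi))$ requires exploiting the incoming/outgoing radiation conditions encoded in the limiting Green's kernel, which is exactly where the abstract singular-perturbation machinery of Renger \cite{Ren1,Ren2} is used. Once invertibility is secured for $-k^{2}\in E^{-}_{\Pi,\Theta}$, locally uniform norm-continuity of each of the three factors allows one to pass to the limit in the full Kre\u\i n formula, producing $R^{\pm}_{\Pi,\Theta,-k^{2}}\in\B(L^{2}_{\alpha}(\RE^{n}),L^{2}_{-\alpha}(\RE^{n}))$ as claimed.
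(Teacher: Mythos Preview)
Your route is genuinely different from the paper's. The paper does \emph{not} analyze the factors in the Kre\u\i n formula to prove Theorem~\ref{Theorem_LAP}; instead it verifies four abstract hypotheses ((OP), (LAP), (E), (T) in Renger's notation) and then invokes \cite[Theorem~3.5]{Ren1} as a black box. Concretely, it checks that both resolvents map $L^{2}_{\alpha}$ to itself, that the resolvent difference is compact from $L^{2}$ to $L^{2}_{\beta}$ (this is where \eqref{Theta_reg1} enters, via the compact embedding $H^{s_{1}}\oplus H^{s_{2}}\hookrightarrow H^{-3/2}\oplus H^{-1/2}$), and that the free boundary values satisfy the kernel estimate of \cite[Corollary~5.7(b)]{BAD}. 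The direct factor-by-factor analysis you outline is what the paper carries out \emph{afterwards}, in Theorem~\ref{teo_LAP}, and there it relies on the LAP already established via Renger to control the inverse $(\Theta+\Pi M^{\pm}_{-k^{2}}\Pi')^{-1}$.

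Your sketch has a real gap at the point you yourself flag. You say the $L^{2}$-integrability of $G^{\pm}_{-k^{2}}(\Pi'(\phi\oplus\varphi))$ ``is exactly where the abstract singular-perturbation machinery of Renger is used,'' but Renger's results do not supply that statement: they yield the LAP for $\Delta_{\Pi,\Theta}$ directly, which is precisely how the paper uses them, and say nothing about membership of layer potentials in $L^{2}$. To close your argument one would instead need a Rellich-type uniqueness step (outgoing solution of the homogeneous Helmholtz equation with self-adjoint interface conditions vanishes), as in Lemma~\ref{SRC} and Theorem~\ref{negative1}; in the paper those appear only after Theorem~\ref{Theorem_LAP} and in fact depend on it, so you would have to prove them independently. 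A second, smaller issue: invoking the analytic Fredholm theorem to reach the real axis requires analytic (not merely continuous) continuation of $z\mapsto M_{z}$ across $(-\infty,0)$; this is true for the explicit Green kernel here, but it is an extra ingredient your outline does not mention, and without it continuous Fredholm theory does not force the exceptional set to be discrete.
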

\begin{proof} Let us at first show that the following four assumptions
hold true for any $z\in\CO\backslash(-\infty,0]$:
\be\label{H1.1}
(-\Delta+z)^{-1}\in\B(L^{2}_{\alpha}(\RE^{n}))\,,
\ee
\be\label{H1.2}
 (-\Delta_ {   \Pi,\Theta}+z)^{-1}\in\B(L^{2}_{\alpha}(\RE^{n}))\,,
\ee
\be\label{H2}
(-\Delta+z)^{-1}-\ (-\Delta_ {   \Pi,\Theta}+z)^{-1}\in{\mathfrak S}_{\infty}(L^{2}(\RE^{n}),L^{2}_{\beta}(\RE^{n}))\,,\quad \beta>2\alpha\,,
\ee
and for all compact subset $K\subset(0,+\infty)$ there exists a constant $c_{K}>0$ such that, for any  $k^{2}\in K$,
\be\label{H3}
\forall u\in L^{2}_{2\alpha}(\RE^{n})\cap\ker(R^{+}_{-k^{2}}-R^{-}_{-k^{2}}),\quad 
\|R^{\pm}_{-k^{2}}u\|_{L^{2}(\RE^{n})}\le c_{K}\|u\|_{L^{2}_{2\alpha}(\RE^{n})}.
\ee
By \cite[Lemma 1, page 170]{R&S}, for all $\sigma\in\RE$ one has 
\be\label{RS}
 (-\Delta+z)^{-1}\in \B(L^{2}_{\sigma}(\RE^{n}))\,.
\ee
Therefore \eqref{H1.1} holds true (see also  \cite[Lemma 5.2 and Remark 5.1]{Ren2}). \par
Introducing the equivalent norm in $H^{2}_{\sigma}(   \mathbb{R}^{n})$
$$
\boldsymbol{|} u\boldsymbol{|} _{H^{2}_{\sigma}(   \mathbb{R}^{n})
}:=\int_{\RE^{n}}\left(1+\|x\|^{2}\right)^{\sigma}|(-\Delta+1)u(x)|^{2}dx\,,
$$
by
\begin{align*}
\boldsymbol{|}(-\Delta+z)^{-1}u\boldsymbol{|}^{2}_{ H^{2}_{\sigma}(\RE^{n})}
=&\|(-\Delta+1)(-\Delta+z)^{-1}u\|^{2}_{L^{2}_{\sigma}(   \mathbb{R}^{n})}\\
\le & 
\|u\|^{2}_{L^{2}_{\sigma}(   \mathbb{R}^{n})}+|1-z|\,\|(-\Delta+z)^{-1}u\|^{2}_{L^{2}_{\sigma}(   \mathbb{R}^{n})}
 \end{align*}
and by \eqref{RS}, one gets 
\be\label{sigma}
(-\Delta+z)^{-1}\in\B(L^{2}_{\sigma}(\RE^{n}), H^{2}_{\sigma}(\RE^{n}))\,.
\ee
Let $\chi\in \C^{\infty}_{\comp} (\RE^{n})
$ such that $\chi|\t\Omega=1$, $\t\Omega\supset\overline \Omega$. Then the map $u\mapsto\chi u$ belongs to $\B(H^{2}_{\sigma}(\RE^{n}),H^{2}(\RE^{n}))$ and so, since $\gamma  (\chi u)=\gamma   u$, by \eqref{sigma} one gets 
\be\label{beta1}\gamma(   -\Delta+z)  ^{-1}\in{\B}(   L^{2}_{\sigma}  
(   \mathbb{R}^{n})  ,H^{{\frac32}}(\Gamma)\oplus H^{{\frac12}}(   \Gamma)  
)  \,.
\ee
and, by duality, 
\be\label{beta2}
G_{z}=(\gamma(   -A+\bar z)  ^{-1})^{\prime}\in{\B}(H^{-{\frac32}}(\Gamma)\oplus H^{-{\frac12}}(   \Gamma),    L^{2}_{-\sigma }(   \mathbb{R}^{n})
) \,. 
\ee
Then, using \eqref{beta1} and \eqref{beta2} with $\sigma=\alpha$ and with $\sigma=-\alpha$ respectively,  \eqref{H1.2} follows from 
\eqref{Krein} and \eqref{RS}. \par Assumption \eqref{Theta_reg1} implies that $\ran((\Theta+\Pi M_{z}\Pi')^{-1})\subseteq H^{s_{1}}(\Gamma)\oplus H^{s_{2}}(\Gamma)$. Thus, by the compact embedding $H^{s_{1}}(\Gamma)\oplus H^{s_{2}}(\Gamma)\hookrightarrow H^{-3/2}(\Gamma)\oplus H^{-1/2}(\Gamma)$, one gets 
\be\label{compact}
(\Theta+\Pi M_{z}\Pi')^{-1}\in{\mathfrak S}_{\infty}(\ran(\Pi),\ran(\Pi'))
\ee
(see the proof of Lemma 4.10 in \cite{MaPoSi} for more details). Therefore, by 
\eqref{Krein}, using \eqref{beta1} and \eqref{beta2} with $\sigma=0$  and $\sigma=-\beta$ respectively, one obtains \eqref{H2}. \par 
Finally, \eqref{H3} holds true by \cite[Corollary 5.7(b)]{BAD}.
\par 
Assumptions \eqref{H1.1}-\eqref{H3} permit us to apply the abstract results provided in \cite{Ren1}: hypothesis (T1) and (E1) in \cite[page 175]{Ren1} corresponds to our \eqref{lim1}, \eqref{H3} and  \eqref{H2} respectively; then, by \cite[Proposition 4.2]{Ren1}, the latters imply hypotheses (LAP) and (E) in \cite[page 166]{Ren1}, i.e. \eqref{lim1} again and  
$$
(-\Delta+z)^{-1}-\ (-\Delta_ {   \Pi,\Theta}+z)^{-1}\in{\mathfrak S}_{\infty}(L^{2}_{-\alpha}(\RE^{n}),L^{2}_{\alpha}(\RE^{n}))\,,
$$
and hypothesis (T) in \cite[page 168]{Ren1}, a technical variant of \eqref{H3}. By \cite[Theorem 3.5]{Ren1}, these last hypotheses, together with the assumption that $-\Delta_{\Pi,\Theta}$ is bounded from below and \eqref{H1.1}-\eqref{H1.2} (i.e. hypothesis (OP) in \cite[page 165]{Ren1}), finally give the content of the theorem.
\end{proof}
\begin{remark}
Since the map defined in \eqref{beta1} has closed range (it is surjective), $G_{z}\in{\B}(H^{-{\frac32}}(\Gamma)\oplus H^{-{\frac12}}(   \Gamma),    L^{2}_{-\sigma }(   \mathbb{R}^{n}))$ defined  in \eqref{beta2} is injective and has closed range by the closed range theorem. Hence, 
by \cite[Theorem 5.2, page 231]{Kato}, for any $z\in\rho(A)$ there exists $c_{z}>0$ such that 
\begin{equation}\label{crt2}
\left\Vert G_{z}(\phi\oplus\varphi)\right\Vert ^{2}_{L^{2}_{-\sigma}( \mathbb{R}^{n})
}\ge c_{z}\left(\|\phi\|^{2}_{H^{-{\frac32}}(  \Gamma)}+\|\varphi\|^{2}_{H^{-{\frac12}}(  \Gamma)}\right)
\end{equation}
for all $\phi\oplus\varphi\in H^{-{\frac32}}(  \Gamma)\oplus H^{-{\frac12}}(  \Gamma)$.
\end{remark}
\begin{lemma}\label{Lemma_G_z} For any $k\in\RE\backslash\{0\}$ and for any $\alpha>{\frac12}$, the  limits 
\be\label{G1}
G^{\pm}_{-k^{2}}:=\lim_{\epsilon\downarrow 0} G_{-(k^{2}\pm i\epsilon)}
\ee
exist in $\B(H^{-{\frac32}}(  \Gamma)\oplus H^{-{\frac12}}(  \Gamma) ,L^{2}_{-\alpha}(
\mathbb{R}^{n}) )$ and
\be\label{G2}
G^{\pm}_{-k^{2}}=G_{z}+(z+k^{2})R^{\pm}_{-k^{2}}G_{z}\,,\quad z\in\CO\backslash(-\infty,0]\,,
\ee
\be\label{G3}
(G^{\pm}_{-k^{2}})^{\prime}=\gamma R^{\mp}_{-k^{2}}\,.
\ee
The function $G^{\pm}_{-k^{2}}(\phi\oplus\varphi)$  solves, in the distribution space ${\mathscr D}'(\RE^{n}\backslash\Gamma)$ and for any $\phi\oplus\varphi\in H^{-{\frac32}}(  \Gamma)\oplus H^{-{\frac12}}(  \Gamma)$, the equation 
$$(\Delta+k^{2})G^{\pm}_{-k^{2}}(\phi\oplus\varphi)=0\,.
$$
Moreover there exist $c^{\pm}_{k^{2}}>0$ such that
\begin{equation}
\left\Vert G_{-k^{2}}^{\pm }(\phi\oplus\varphi)\right\Vert ^{2}_{L^{2}_{-\alpha}( \mathbb{R}^{n})
}\ge c_{k^{2}}^{\pm}\left(\|\phi\|^{2}_{H^{-{\frac32}}(  \Gamma)}+\|\varphi\|^{2}_{H^{-{\frac12}}(  \Gamma)}\right)\,.
\label{G_z_bijection}%
\end{equation}
\end{lemma}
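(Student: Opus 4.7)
The plan is to derive a resolvent-type identity for $G_{w}$ and pass to the limit via Theorem \ref{LAP0}. From the first resolvent identity $R_{\bar w}-R_{\bar z}=(\bar z-\bar w)R_{\bar z}R_{\bar w}$, composition on the left with $\gamma$ followed by duality yields, for all $w,z\in\CO\backslash(-\infty,0]$,
\[
G_{w}-G_{z}=(z-w)\,R_{w}\,G_{z}\,.
\]
Setting $w=-(k^{2}\pm i\epsilon)$ and letting $\epsilon\downarrow 0$, the scalar factor tends to $z+k^{2}$ while, by Theorem \ref{LAP0}, $R_{w}\to R^{\pm}_{-k^{2}}$ in $\B(L^{2}_{\alpha}(\RE^{n}),H^{2}_{-\alpha}(\RE^{n}))$. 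Since \eqref{beta2} (applied with $\sigma=-\alpha$) gives $G_{z}\in\B(H^{-{\frac32}}(\Gamma)\oplus H^{-{\frac12}}(\Gamma),L^{2}_{\alpha}(\RE^{n}))$, the composition $R_{w}G_{z}$ converges in $\B(H^{-{\frac32}}(\Gamma)\oplus H^{-{\frac12}}(\Gamma),L^{2}_{-\alpha}(\RE^{n}))$, yielding simultaneously \eqref{G1} and \eqref{G2}. The duality \eqref{G3} follows by dualizing $G_{w}=(\gamma R_{\bar w})'$ and passing to the limit, using $\bar w=-(k^{2}\mp i\epsilon)\to-k^{2}$.

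For the Helmholtz equation I apply $(\Delta+k^{2})$ to \eqref{G2}. The identity $(-\Delta+w)R_{w}=I$ survives the limit as the distributional relation $(\Delta+k^{2})R^{\pm}_{-k^{2}}g=-g$ for every $g\in L^{2}_{\alpha}(\RE^{n})$; combined with the relation $(\Delta-z)G_{z}(\phi\oplus\varphi)=0$ on $\RE^{n}\backslash\Gamma$ (recalled after \eqref{Gz}), which gives $(\Delta+k^{2})G_{z}(\phi\oplus\varphi)=(z+k^{2})G_{z}(\phi\oplus\varphi)$ there, the two contributions cancel on $\RE^{n}\backslash\Gamma$.

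The main obstacle is the lower bound \eqref{G_z_bijection}. I start by identifying the jump of $u:=G^{\pm}_{-k^{2}}(\phi\oplus\varphi)$ across $\Gamma$: by \eqref{G2}, $u-G_{z}(\phi\oplus\varphi)$ lies in $H^{2}_{-\alpha}(\RE^{n})\subset H^{2}_{\loc}(\RE^{n})$, hence has no jump across $\Gamma$, so \eqref{G_z_jumps} forces $[\hat\gamma]u=\phi\oplus\varphi$; in particular $G^{\pm}_{-k^{2}}$ is injective. To upgrade this identity to a quantitative bound is the difficult step, since $[\hat\gamma]$ is continuous only on $H^{0}_{\Delta}(\RE^{n}\backslash\Gamma)$, not on $L^{2}_{-\alpha}(\RE^{n})$. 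I therefore pick $\chi\in C^{\infty}_{\comp}(\RE^{n})$ with $\chi\equiv 1$ on a neighborhood of $\Gamma$, so that $[\hat\gamma]u=[\hat\gamma](\chi u)$. On each side of $\Gamma$, $u$ solves the Helmholtz equation $(\Delta+k^{2})u=0$ by the previous paragraph, and interior elliptic regularity gives $u\in H^{2}_{\loc}(\Omega_{\-})\cap H^{2}_{\loc}(\Omega_{\+})$ with local $H^{2}$ norms controlled by the $L^{2}$ norm on slightly larger compacts; since $\nabla\chi$ is supported strictly inside $\Omega_{\-}\cup\Omega_{\+}$, all gradient contributions in $\Delta(\chi u)$ are then controlled by such interior estimates, and the upshot is $\|\chi u\|_{H^{0}_{\Delta}(\RE^{n}\backslash\Gamma)}\le c_{k,\chi,\alpha}\|u\|_{L^{2}_{-\alpha}(\RE^{n})}$. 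Continuity of $[\hat\gamma]$ on $H^{0}_{\Delta}(\RE^{n}\backslash\Gamma)$ then delivers \eqref{G_z_bijection}.
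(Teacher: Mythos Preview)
Your argument for \eqref{G1}--\eqref{G3} and for the Helmholtz equation is essentially the same as the paper's (the paper phrases the Helmholtz step via the dual identity $R^{\pm}_{-k^{2}}(-\Delta-k^{2})u=u$ on test functions, but the content is identical).

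The genuine difference is in the lower bound \eqref{G_z_bijection}. The paper argues by duality and the closed range theorem: it shows that $\gamma R^{\pm}_{-k^{2}}:L^{2}_{\alpha}(\RE^{n})\to H^{3/2}(\Gamma)\oplus H^{1/2}(\Gamma)$ is surjective (explicitly, given $\phi\oplus\varphi$, pick any $u\in H^{2}(\RE^{n})$ with $\gamma u=\phi\oplus\varphi$, cut it off, and use $R^{\pm}_{-k^{2}}(-\Delta-k^{2})(\chi u)=\chi u$), whence $G^{\pm}_{-k^{2}}=(\gamma R^{\mp}_{-k^{2}})'$ is injective with closed range, and Kato's closed range theorem yields the bound. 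Your approach instead exhibits an explicit left inverse: from \eqref{G2} and \eqref{G_z_jumps} you recover $[\hat\gamma]G^{\pm}_{-k^{2}}=\uno$, and then you bound $[\hat\gamma]$ on the range by localizing with $\chi$ and invoking interior elliptic regularity away from $\Gamma$ to control the commutator terms in $\Delta(\chi u)$. This is correct and arguably more transparent, since it identifies \emph{why} $G^{\pm}_{-k^{2}}$ is bounded below (the jump data are literally encoded in the output). The paper's duality route, on the other hand, is shorter and avoids the interior regularity step; it also feeds directly into the later arguments (the surjectivity of $\gamma R^{\pm}_{-k^{2}}$ is reused in the proof of Theorem~\ref{teo_LAP}).
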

\begin{proof} Let $\chi\in C^{\infty}_{\comp} (\RE^{n})
$ such that $\chi|\t\Omega=1$, $\t\Omega\supset\overline \Omega$. Then the map $u\mapsto\chi u$ belongs to $\B(H^{2}_{-\alpha}(\RE^{n}),H^{2}(\RE^{n}))$ and so, by Theorem \ref{LAP0}, 
the  limits 
\be\label{lim3}
\gamma R^{\pm}_{-k^{2}}=\lim_{\epsilon\downarrow 0} \gamma(-\Delta-(k^{2}\pm i\epsilon))^{-1}
\ee
exist in $\B(L^{2}_{\alpha}(\RE^{n}),H^{{\frac32}}(\Gamma)\oplus H^{{\frac12}}(\Gamma))$. Then the relations 
\begin{align*}
&G_{-(k^{2}\pm i\epsilon)}=(\gamma(-\Delta-k^{2}\mp i\epsilon))^{\prime}\\
=&G_{z}+(z+k^{2}\mp i\epsilon) (-\Delta-(k^{2}\pm i\epsilon))^{-1}G_{z}
\end{align*}
(see \cite[Lemma 2.1]{Posi}), and $G_{z}\in \B(H^{-{\frac12}}(\Gamma)\oplus H^{-{\frac32}}(\Gamma), L^{2}_{\alpha}(\RE))$ (use \eqref{beta2} with $\sigma=-\alpha$), give \eqref{G1}, \eqref{G2} and \eqref{G3}.\par Since 
$$
\lim_{\epsilon\downarrow 0} \|(\Delta+(k^{2}\pm i\epsilon))u-(\Delta+k^{2})u\|_{L^{2}_{\alpha}(\RE^{n})}=0\,,\quad u\in H^{2}_{\alpha}(\RE^{n})\,,
$$
one has
\be\label{inv}
R^{\pm}_{-k^{2}}(-\Delta-k^{2})u=u\,,\quad u\in H^{2}_{\alpha}(\RE^{n})\,.
\ee
Thus, for any test function $u\in{\mathscr D}(\RE^{n}\backslash\Gamma)\equiv C^{\infty}_{\comp} (\RE^{n}\backslash\Gamma)\subset H^{2}_{\alpha}(   \mathbb{R}
^{n}) $, 
one obtains
\begin{align*} 
&\langle (\Delta+k^{2})G^{\pm}_{-k^{2}}(\phi\oplus\varphi),u\rangle=
\langle G^{\pm}_{-k^{2}}(\phi\oplus\varphi),(\Delta+k^{2})u\rangle\\
=&\langle (\gamma  R^{\mp}_{-k^{2}})^{\prime}(\phi\oplus\varphi),(\Delta+k^{2})u\rangle
=-\langle \phi\oplus\varphi,\gamma  R^{\pm}_{-k^{2}}(\Delta+k^{2})u\rangle\\
=&-
\langle \phi\oplus\varphi,\gamma   u\rangle=0\,.
\end{align*}
By \eqref{inv} and the surjectivity of $\gamma  : H^{2}(\RE^{n})\to  H^{{\frac32}}(  \Gamma)\oplus H^{{\frac12}}(  \Gamma)$, the map
$$\gamma  R^{\pm}_{-k^{2}}:L^{2}_{\alpha}(\RE^{n})\to H^{{\frac32}}(  \Gamma)\oplus H^{{\frac12}}(  \Gamma)$$
is surjective: given $\phi\oplus\varphi\in H^{{\frac32}}(  \Gamma)\oplus H^{{\frac12}}(  \Gamma)$ one has 
$\gamma  R^{\pm}_{-k^{2}}v=\phi\oplus\varphi$, where $v=-(\Delta+k^{2})\chi u$ and $\gamma   \chi u=\gamma   u=\phi\oplus\varphi$, $u\in H^{2}(\RE^{n})$.
Therefore, by the closed range theorem, the range of $G_{-k^{2}}^{\pm }$ is closed; since $G_{-k^{2}}^{\pm }$ is injective (it is the dual of a surjective map), \cite[Theorem 5.2, page 231]{Kato} gives \eqref{G_z_bijection}.
\end{proof}
While interesting, Theorem \ref{Theorem_LAP} gives no answer to the obvious question: ''does Kre\u\i n's formula survive in the limit $\epsilon\downarrow 0$?''  That is given by the following
\begin{theorem}
\label{teo_LAP} Under the assumptions of Theorem \ref{Theorem_LAP}, for any $-k^{2}\in E^{-}_{\Pi,\Theta}$,
 the limits 
\be\label{M-lim}
M^{\pm}_{-k^{2}}:=\lim_{\epsilon\downarrow 0} M_{-(k^{2}\pm i\epsilon)}\,,
\ee
\be
L_{\Pi,\Theta,-k^{2}}^{\pm}:=\lim_{\epsilon\downarrow 0} (\Theta+\Pi M_{-(k^{2}\pm i\epsilon)}\Pi')^{-1}
\ee
exist in $\B(H^{-{\frac32}}(\Gamma)\oplus H^{-{\frac12}}(\Gamma), H^{{\frac32}}(\Gamma)\oplus H^{{\frac12}}(\Gamma))$ and $\B(\ran(\Pi),\ran(\Pi'))$ respectively and  
\be\label{M-lim-1}
M_{-k^{2}}^{\pm}=M_{z}-(z+k^{2}) \gamma R^{\pm}_{-k^{2}}G_{z}\,,\quad z\in\CO\backslash(-\infty,0]\,.
\ee
The linear operator $\Theta+\Pi M_{-k^{2}}^{\pm}\Pi'$ has a bounded inverse
\be\label{L-lim-1}
L_{\Pi,\Theta,-k^{2}}^{\pm}=(\Theta+\Pi M_{-k^{2}}^{\pm}\Pi')^{-1}
\ee
and
\be\label{K-lim}
R_{\Pi,\Theta,-k^{2}}^{\pm}-R^{\pm}_{-k^{2}}=G^{\pm}_{-k^{2}}\Pi'
(\Theta+\Pi M_{-k^{2}}^{\pm}\Pi')^{-1}\Pi\gamma R^{\pm}_{-k^{2}}\,.
\ee
Moreover the map $$z\mapsto R^{\pm}_{\Pi,\Theta,z}:=
\begin{cases}(-\Delta_{\Pi,\Theta}+{z})^{-1}\,,&z\in\CO_{\pm}\\
R_{\Pi,\Theta,\lambda}^{\pm}\,,&\lambda\in E^{-}_{\Pi,\Theta}
\end{cases}
$$ is continuous on $\CO_{\pm}\cup E^{-}_{\Pi,\Theta}$ to $\B(L^{2}_{\alpha}(\RE^{n}), L^{2}_{-\alpha}(\RE^{n}))$.
\end{theorem}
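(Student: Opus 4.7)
The plan is to proceed in three stages: first, establish the existence of $M^\pm_{-k^2}$ together with the identity \eqref{M-lim-1}; second, establish bounded invertibility of $\Theta+\Pi M^\pm_{-k^2}\Pi'$, which yields \eqref{L-lim-1} and the limit defining $L^\pm_{\Pi,\Theta,-k^2}$; third, transfer these facts into \eqref{K-lim} and the continuity claim.

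For the first stage, the resolvent-type identity for $G_z$ already used in Lemma \ref{Lemma_G_z}, namely
\[
G_z-G_{-(k^2\pm i\epsilon)}=-(z+k^2\pm i\epsilon)\,(-\Delta-(k^2\pm i\epsilon))^{-1}G_z,
\]
combined with $M_w=\gamma(G-G_w)$, produces
\[
M_{-(k^2\pm i\epsilon)}=M_z-(z+k^2\pm i\epsilon)\,\gamma\,(-\Delta-(k^2\pm i\epsilon))^{-1}G_z.
\]
Now \eqref{beta2} with $\sigma=-\alpha$ gives $G_z\in\B(H^{-3/2}(\Gamma)\oplus H^{-1/2}(\Gamma),L^2_\alpha(\RE^n))$; Theorem \ref{LAP0} furnishes the norm convergence of the free resolvent in $\B(L^2_\alpha(\RE^n),H^2_{-\alpha}(\RE^n))$; and (via a compactly supported cut-off equal to one near $\Gamma$) $\gamma$ extends boundedly from $H^2_{-\alpha}(\RE^n)$ to $H^{3/2}(\Gamma)\oplus H^{1/2}(\Gamma)$. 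Composing the three factors yields \eqref{M-lim-1} and the limit \eqref{M-lim} in operator norm.

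For the second stage, pick any $z\in\CO\setminus\RE$, so that $T_z:=\Theta+\Pi M_z\Pi'$ is boundedly invertible and $T_z^{-1}:\ran(\Pi)\to\dom(\Theta)\subseteq\ran(\Pi)'$ is compact by \eqref{compact}. From \eqref{M-lim-1} one gets, setting $T^\pm:=\Theta+\Pi M^\pm_{-k^2}\Pi'$,
\[
T^\pm=T_z\,\bigl(I-(z+k^2)\,T_z^{-1}\,\Pi\gamma R^\pm_{-k^2}G_z\,\Pi'\bigr),
\]
where the operator inside parentheses is an identity-minus-compact endomorphism of $\ran(\Pi)'$. The Fredholm alternative therefore reduces bounded invertibility of $T^\pm$ to its injectivity. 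To verify the latter, take $\xi\in\dom(\Theta)$ with $T^\pm\xi=0$ and set $u:=G^\pm_{-k^2}\Pi'\xi\in L^2_{-\alpha}(\RE^n)$: by Lemma \ref{Lemma_G_z}, $(\Delta+k^2)u=0$ on $\RE^n\setminus\Gamma$, the $\pm$ selects an outgoing/incoming radiation condition at infinity inherited from $R^\pm_{-k^2}$, and passage to the limit in \eqref{G_z_jumps} and in the definition of $M_w$ translates $T^\pm\xi=0$ precisely into the boundary condition defining $\dom(\Delta_{\Pi,\Theta})$. A Rellich-type uniqueness argument on $\Omega_\+$ combined with the unique continuation principle, in the same spirit as Theorem \ref{negative1}, and the assumption $-k^2\in E^-_{\Pi,\Theta}$ force $u\equiv 0$, hence $\xi=\Pi'\xi=[\hat\gamma]u=0$. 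This radiation/uniqueness step is where I expect the main obstacle of the proof to lie. Once $T^\pm$ is invertible, the norm convergence $T_{-(k^2\pm i\epsilon)}\to T^\pm$ from the first stage, combined with openness of the invertible set and norm-continuity of inversion, upgrades to norm convergence of the inverses, which is \eqref{L-lim-1}.

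For the third stage, apply Kre\u\i n's formula of Theorem \ref{Theorem_Krein} at $z=-(k^2\pm i\epsilon)$ and let $\epsilon\downarrow 0$: by the previous stages and Lemma \ref{Lemma_G_z}, each factor converges in the appropriate operator topology, namely $R_{-(k^2\pm i\epsilon)}\to R^\pm_{-k^2}$, $G_{-(k^2\pm i\epsilon)}\to G^\pm_{-k^2}$, $\gamma R_{-(k^2\pm i\epsilon)}\to\gamma R^\pm_{-k^2}$ (combining \eqref{beta1} with Theorem \ref{LAP0}), and $(\Theta+\Pi M_{-(k^2\pm i\epsilon)}\Pi')^{-1}\to L^\pm_{\Pi,\Theta,-k^2}$; composing them delivers \eqref{K-lim}. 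Finally, continuity of $z\mapsto R^\pm_{\Pi,\Theta,z}$ on $\CO_\pm\cup E^-_{\Pi,\Theta}$ to $\B(L^2_\alpha(\RE^n),L^2_{-\alpha}(\RE^n))$ follows from joint continuity of each of these factors, in particular of $z\mapsto L^\pm_{\Pi,\Theta,z}$, which in turn is a standard consequence of the continuity of $z\mapsto\Theta+\Pi M^\pm_z\Pi'$ together with stability of inversion under small perturbations.
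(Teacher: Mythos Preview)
Your first and third stages are correct and essentially match the paper's argument; in particular, the derivation of \eqref{M-lim-1} via the resolvent identity for $G_z$ and the use of Theorem~\ref{LAP0} is exactly what the paper does.

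The second stage, however, departs from the paper and contains a real gap. The paper does \emph{not} prove invertibility of $\Theta+\Pi M^\pm_{-k^2}\Pi'$ by a Fredholm-plus-injectivity argument. Instead it takes as input the conclusion of Theorem~\ref{Theorem_LAP} (the LAP for $\Delta_{\Pi,\Theta}$, obtained via Renger's abstract criteria), which guarantees that the full resolvent difference in Kre\u\i n's formula converges in $\B(L^2_\alpha,L^2_{-\alpha})$. From this the paper ``peels off'' the outer factors using the closed-range bounds \eqref{crt2}, \eqref{G_z_bijection} and the surjectivity of $\gamma R^\pm_{-k^2}$, obtaining first a uniform bound and then norm convergence of $(\Theta+\Pi M_{-(k^2\pm i\epsilon)}\Pi')^{-1}$ itself; only then does it identify the limit as the inverse of $\Theta+\Pi M^\pm_{-k^2}\Pi'$. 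You never invoke Theorem~\ref{Theorem_LAP}, so your route is genuinely different.

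The gap is precisely where you flag it: the claim that $T^\pm\xi=0$ forces $u:=G^\pm_{-k^2}\Pi'\xi=0$. Your appeal to ``Rellich plus unique continuation, in the spirit of Theorem~\ref{negative1}'' does not go through as stated: Theorem~\ref{negative1} uses that the eigenfunction lies in $L^2(\RE^n)$, whereas your $u$ is only in $L^2_{-\alpha}(\RE^n)$, and a radiating layer potential is \emph{not} in $L^2$ in general. What you need is exactly the uniqueness statement of Lemma~\ref{SRC}(2), but in the paper that lemma is proved \emph{after} the present theorem, via Theorem~\ref{eigenfunctions}, whose proof uses the very invertibility of $\Theta+\Pi M^\pm_{-k^2}\Pi'$ you are trying to establish---so invoking it here would be circular. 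An independent proof of injectivity would require showing, for general $(\Pi,\Theta)$, that the self-adjoint boundary condition forces the boundary pairing $\text{Im}\,\langle[\hat\gamma]u,\hat\gamma u\rangle$ to vanish, so that Green's formula on $\Omega_\+\cap B_R$ yields $\text{Im}\int_{|x|=R}\bar u\,\partial_r u=0$ and hence, by Rellich, $u|_{\Omega_\+}\in L^2$; then $u\in L^2(\RE^n)$ would be a genuine eigenfunction and $-k^2\in E^-_{\Pi,\Theta}$ would finish. This line is plausible but you have not carried it out, and the regularity of the traces involved makes it nontrivial. The paper's peeling-off argument sidesteps the issue entirely.
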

\begin{proof} 
By \cite[equation (5)]{Posi}, the operator family $M_{z}$, $z\in\CO\backslash(-\infty,0]$,  satisfies the identity
$$
M_{-(k^{2}\pm i\epsilon)}=M_{z}-(z+k^{2}\mp i\epsilon)\gamma(-\Delta-(k^{2}\pm i\epsilon))^{-1}G_{z}\,.
$$
Since, by \eqref{beta2}, $G_{z}\in \B(H^{-{\frac12}}(\Gamma)\oplus H^{-{\frac32}}(\Gamma), L^{2}_{\alpha}(\RE^{n}))$, the norm convergence of $M_{-(k^{2}\pm i\epsilon)}$ to $M^{\pm}_{-k^{2}}$ is consequence of Theorem \ref{LAP0}. This gives \eqref{M-lim} and \eqref{M-lim-1}.\par
By Theorem \ref{Theorem_LAP}, the limits%
\begin{equation}\label{limG}
\lim_{\epsilon\downarrow 0}G_{-(k^{2}\pm i\epsilon)}\Pi^{\prime}(   \Theta+\Pi M_{-(k^{2}\pm i\epsilon)}%
\Pi^{\prime})  ^{-1}\Pi\gamma(-\Delta-(k^{2}\pm i\epsilon))^{-1}
\end{equation}
exist in $\B(L^2_{\alpha}(   \mathbb{R}^{n}),L^2_{-\alpha}(   \mathbb{R}^{n}))  $ and, by Lemma \ref{Lemma_G_z}, the limits 
\begin{equation}\label{limG1}
G^{\pm}_{-k^{2}}=\lim_{\epsilon\downarrow 0}G_{-(k^{2}\pm i\epsilon)}
\end{equation}
and
\begin{equation}\label{limG2}
(G_{-k^{2}}^\mp)^{\prime}=\gamma R^{\pm}_{-k^{2}}=\lim_{\epsilon\downarrow 0}\gamma(-\Delta-(k^{2}\pm i\epsilon))^{-1}
\end{equation}
exist in $\B(H^{-{\frac12}}(\Gamma)\oplus H^{-{\frac32}}(\Gamma), L^{2}_{-\alpha}(\RE^{n}))$ and
$\B(L^{2}_{\alpha}(\RE^{n}), H^{{\frac12}}(\Gamma)\oplus H^{{\frac32}}(\Gamma))$ respectively.
According to \eqref{crt2} and \eqref{G_z_bijection}, there exist $\t c^{\pm}_{k^{2}}>0$ such that, for all $\epsilon>0$,   
\begin{align*}
&\left\Vert G_{-(k^{2}\pm i\epsilon)}\Pi^{\prime}(\Theta+\Pi M_{-(k^{2}\pm i\epsilon)}\Pi^{\prime})^{-1}\Pi\gamma(-\Delta-(k^{2}\pm i\epsilon))^{-1}u\right\Vert ^{2}_{L^{2}_{-\alpha}( \mathbb{R}^{n})}\\
\ge &\t c_{k^{2}}^{\pm}\left(\|(\Theta+\Pi M_{-(k^{2}\pm i\epsilon)}\Pi^{\prime})^{-1}\Pi\gamma(-\Delta-(k^{2}\pm i\epsilon))^{-1}u\|^{2}_{H^{-{\frac32}}(  \Gamma)\oplus H^{-{\frac12}}(  \Gamma)}\right)\,.
\end{align*}
Let $|\!|\!|\cdot |\!|\!|$ denote the operator norm in $\B(X,Y)$, the Hilbert spaces $X$ and $Y$ varying according to the case. Then, by \eqref{limG}, one has 
$$
\sup_{\epsilon>0}\n (\Theta+\Pi M_{-(k^{2}\pm i\epsilon)}\Pi^{\prime})^{-1}\Pi\gamma(-\Delta-(k^{2}\pm i\epsilon))^{-1}\n<+\infty\,.
$$
and, by duality
$$
\sup_{\epsilon>0}\n G_{-(k^{2}\pm i\epsilon)}\Pi^{\prime}(\Theta+\Pi M_{-(k^{2}\pm i\epsilon)}\Pi^{\prime})^{-1}\n<+\infty\,.
$$
Thus, by 
\begin{align*}
&\n G_{-(k^{2}\pm i\epsilon)}\Pi^{\prime}(\Theta+\Pi M_{-(k^{2}\pm i\epsilon)}\Pi^{\prime})^{-1}\Pi(\gamma(-\Delta-(k^{2}\pm i\epsilon))^{-1}-\gamma R^{\pm}_{-k^{2}})\n\\
\le &\left(\sup_{\epsilon>0}\n G_{-(k^{2}\pm i\epsilon)}\Pi^{\prime}(\Theta+\Pi M_{-(k^{2}\pm i\epsilon)}\Pi^{\prime})^{-1}\n\right)\\
&\times\n \gamma(-\Delta-(k^{2}\pm i\epsilon))^{-1}-\gamma R^{\pm}_{-k^{2}} \n
\end{align*}
and by \eqref{limG},\eqref{limG2}, one has that the limits%
\begin{equation}\label{limG3}
\lim_{\epsilon\downarrow 0}G_{-(k^{2}\pm i\epsilon)}\Pi^{\prime}(   \Theta+\Pi M_{-(k^{2}\pm i\epsilon)}%
\Pi^{\prime})  ^{-1}\Pi\gamma R^{\pm}_{-k^{2}}
\end{equation}
exist in $\B(L^2_{\alpha}(   \mathbb{R}^{n}),L^2_{-\alpha}(   \mathbb{R}^{n}))  $ and coincide with the ones given in \eqref{limG}.  Since the map 
$\gamma R^{\pm}_{-k^{2}}$ is surjective (see the end of the proof of Lemma \ref{Lemma_G_z}), by \cite[Theorem 5.2, page 231]{Kato} there exists $\hat c_{-k^{2}}^{\pm}>0$ such that 
$$
\forall u\in \ker(\gamma R^{\pm}_{-k^{2}})^{\perp}\,,\quad \|\gamma R^{\pm}_{-k^{2}}u\|_{H^{{\frac12}}(\Gamma)\oplus H^{{\frac32}}(\Gamma)}\ge \hat c_{-k^{2}}^{\pm}\|u\|_{L^{2}_{\alpha}(\RE^{n})}\,.
$$
Setting $\Phi_{\epsilon}:=G_{-(k^{2}\pm i\epsilon)}\Pi^{\prime}(   \Theta+\Pi M_{-(k^{2}\pm i\epsilon)}\Pi^{\prime})^{-1}\Pi$, we have 
\begin{align*}
\sup_{\{0\}\not=\phi\oplus\varphi\in H^{{\frac12}}(\Gamma)\oplus H^{{\frac32}}(\Gamma)}\frac{
\|(\Phi_{\epsilon_{1}}-\Phi_{\epsilon_{2}})\phi\oplus\varphi\|_{L^{2}_{-\alpha}(\RE^{n})}}{\|\phi\oplus\varphi\|_{H^{{\frac12}}(\Gamma)\oplus H^{{\frac32}}(\Gamma)}}&\\
=\sup_{u\in\ker(\gamma R^{\pm}_{-k^{2}})^{\perp}}\frac{
\|(\Phi_{\epsilon_{1}}-\Phi_{\epsilon_{2}})\gamma R^{\pm}_{-k^{2}}u\|_{L^{2}_{-\alpha}(\RE^{n})}}{\|\gamma R^{\pm}_{-k^{2}}u\|
_{H^{{\frac12}}(\Gamma)\oplus H^{{\frac32}}(\Gamma)}}&\\
\le \sup_{u\in\ker(\gamma R^{\pm}_{-k^{2}})^{\perp}}\frac{
\|(\Phi_{\epsilon_{1}}-\Phi_{\epsilon_{2}})\gamma R^{\pm}_{-k^{2}}u\|_{L^{2}_{-\alpha}(\RE^{n})}}{\hat c_{-k^{2}}^{\pm}\|u\|_{L^{2}_{\alpha}(\RE^{n})}}&\,.
\end{align*}
Hence, by \eqref{limG3}, the limits 
\begin{equation}\label{limG4}
\lim_{\epsilon\downarrow 0}G_{-(k^{2}\pm i\epsilon)}\Pi^{\prime}(   \Theta+\Pi M_{-(k^{2}\pm i\epsilon)}%
\Pi^{\prime})  ^{-1}\Pi
\end{equation}
exist in $\B(\ran(\Pi),L^{2}_{-\alpha}(\RE^{n}))$ and, by duality, the limits 
\begin{equation}\label{limG5}
\lim_{\epsilon\downarrow 0}(   \Theta+\Pi M_{-(k^{2}\pm i\epsilon)}%
\Pi^{\prime})  ^{-1}\Pi\gamma (-\Delta-(k^{2}\pm i\epsilon))^{-1}
\end{equation}
exist in $\B(L^{2}_{\alpha}(\RE^{n}), \ran(\Pi'))$. By \eqref{crt2} and \eqref{G_z_bijection}, 
it results
\begin{align*}
&\left\Vert G_{-(k^{2}\pm i\epsilon)}\Pi^{\prime}(\Theta+\Pi M_{-(k^{2}\pm i\epsilon)}\Pi^{\prime})^{-1}\Pi(\phi\oplus\varphi)\right\Vert ^{2}_{L^{2}_{-\alpha}( \mathbb{R}^{n})}\\
\ge &\t c_{k^{2}}^{\pm}\left(\|(\Theta+\Pi M_{-(k^{2}\pm i\epsilon)}\Pi^{\prime})^{-1}\Pi(\phi\oplus\varphi)\|^{2}_{H^{-{\frac32}}(  \Gamma)\oplus H^{-{\frac12}}(  \Gamma)}\right)
\end{align*}
and so  \eqref{limG4} gives
\be\label{sup-1}
\sup_{\epsilon>0}\n (\Theta+\Pi M_{-(k^{2}\pm i\epsilon)}
\Pi^{\prime})^{-1}\n<+\infty\,.
\ee
Therefore, by \eqref{limG5}, one gets that the limits 
\begin{equation}\label{limG6}
\lim_{\epsilon\downarrow 0}(   \Theta+\Pi M_{-(k^{2}\pm i\epsilon)}%
\Pi^{\prime})  ^{-1}\Pi \gamma R^{\pm}_{-k^{2}}
\end{equation}
exist in $\B(L^{2}_{\alpha}(\RE^{n}), \ran(\Pi'))$ and coincide with the ones given by \eqref{limG5}. 
Since $\gamma R^{\pm}_{-k^{2}}$ is surjective, proceeding as above one gets the existence of the 
limits
\begin{equation}\label{limG7}
\lim_{\epsilon\downarrow 0}(   \Theta+\Pi M_{-(k^{2}\pm i\epsilon)}%
\Pi^{\prime})  ^{-1}
\end{equation}
with respect to the operator norm in $\B(\ran(\Pi), \ran(\Pi'))$. Finally, taking the limit $\epsilon\downarrow 0$ in the identities
\begin{align*}
&(\Theta+\Pi M_{-(k^{2}\pm i\epsilon)}\Pi^{\prime})^{-1}(\Theta+\Pi M_{-(k^{2}\pm i\epsilon)}\Pi^{\prime})\\=&
(\Theta+\Pi M_{-(k^{2}\pm i\epsilon)}\Pi^{\prime})(\Theta+\Pi M_{-(k^{2}\pm i\epsilon)}\Pi^{\prime})^{-1}=\uno\,,
\end{align*}
one gets
$$
(\Theta+\Pi M_{-k^{2}}^\pm\Pi^{\prime})^{-1}(\Theta+\Pi M_{-k^{2}}^\pm\Pi^{\prime})=
(\Theta+\Pi M_{-k^{2}}^\pm\Pi^{\prime})(\Theta+\Pi M_{-k^{2}}^\pm\Pi^{\prime})^{-1}=\uno\,.
$$
By Theorem \ref{LAP0}, the map $z\mapsto \gamma R^{\pm}_{z}$ is continuous on $\CO_{\pm}\cup(-\infty,0)$ to $\B(L^{2}_{\alpha}(\RE^{n}), H^{{\frac32}}(\Gamma)\oplus H^{{\frac12}}(\Gamma))$;  by duality, $z\mapsto G^{\pm}_{z}=(\gamma R^{\pm}_{z})^{\prime}$ is continuos on $\CO_{\pm}\cup(-\infty,0)$ to $\B(H^{-{\frac32}}(\Gamma)\oplus H^{-{\frac12}}(\Gamma),L^{2}_{-\alpha}(\RE^{n}))$.
By \eqref{M-lim-1}, $z\mapsto M^{\pm}_{z}$ is continuos on $\CO_{\pm}\cup(-\infty,0)$ to $\B(H^{{\frac32}}(\Gamma)\oplus H^{{\frac12}}(\Gamma),H^{-{\frac32}}(\Gamma)\oplus H^{-{\frac12}}(\Gamma))$ and so, by \eqref{sup-1}, $z\mapsto (\Theta+\Pi M_{z}^\pm\Pi^{\prime})^{-1}$ is continuos on $\CO_{\pm}\cup(-\infty,0)$ to $\B(\ran(\Pi),\ran(\Pi'))$. In conclusion $z\mapsto  R^{\pm}_{\Pi,\Theta,z}$ is continuous on $\CO_{\pm}\cup(-\infty,0)$ to $\B(L^{2}_{\alpha}(\RE^{n}), L^{2}_{-\alpha}(\RE^{n}))$.
\end{proof}
\begin{corollary}\label{rem} If, in addition to hypotheses inTheorem \ref{teo_LAP}, hypothesis \eqref{Theta_reg} holds, then the Kre\u\i n type formula \eqref{K-lim} can be re-written as 
\be\label{krein-lim}
R_{\Pi,\Theta,-k^{2}}^{\pm}-R^{\pm}_{-k^{2}}=G^{\pm}_{-k^{2}}\Pi'
(B_\Theta-\Pi \gamma G^{\pm}_{-k^{2}}\Pi^{\prime})^{-1}\Pi\gamma R^{\pm}_{-k^{2}}\,.
\ee
\end{corollary}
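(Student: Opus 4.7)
The strategy is to show that, under the extra hypothesis \eqref{Theta_reg}, one has the identity
\[
\Theta+\Pi M^{\pm}_{-k^{2}}\Pi'=B_{\Theta}-\Pi\gamma G^{\pm}_{-k^{2}}\Pi'
\]
as operators from $\ran(\Pi)'$ to $\ran(\Pi)$, and then to substitute it, via \eqref{L-lim-1}, into the Kre\u\i n-type formula \eqref{K-lim}. The only real issue is making sense of $\gamma G^{\pm}_{-k^{2}}\Pi'$ as a bounded operator; once this is granted, the identity follows algebraically from \eqref{M-lim-1} and \eqref{G2}.

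Concretely, I would fix an auxiliary $z\in\CO\backslash(-\infty,0]$ and rewrite \eqref{G2} as $G^{\pm}_{-k^{2}}-G_{z}=(z+k^{2})R^{\pm}_{-k^{2}}G_{z}$. Composing on the left with $\gamma$ (on the regular subspace where this is meaningful) and substituting into \eqref{M-lim-1} gives
\[
M^{\pm}_{-k^{2}}=M_{z}-(z+k^{2})\gamma R^{\pm}_{-k^{2}}G_{z}=\gamma(G-G_{z})-\gamma(G^{\pm}_{-k^{2}}-G_{z})=\gamma(G-G^{\pm}_{-k^{2}})\,.
\]
Sandwiching between $\Pi$ and $\Pi'$ and adding $\Theta$ then produces the displayed identity by the very definition $B_{\Theta}=\Theta+\Pi\gamma G\Pi'$ from Corollary \ref{Lemma_parameter}. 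Plugging this rewriting together with \eqref{L-lim-1} into \eqref{K-lim} yields exactly \eqref{krein-lim}.

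The main technical point is therefore justifying $\gamma G^{\pm}_{-k^{2}}\Pi'$ as a bounded operator on $\ran(\Pi)'$. Lemma \ref{Lemma_G_z} alone only provides $G^{\pm}_{-k^{2}}\in\B(H^{-3/2}(\Gamma)\oplus H^{-1/2}(\Gamma),L^{2}_{-\alpha}(\RE^{n}))$, a target on which $\gamma$ is not defined. However, hypothesis \eqref{Theta_reg} forces $\ran(\Pi')\subseteq\dom(\Theta)\subseteq H^{1/2}(\Gamma)\oplus H^{3/2}(\Gamma)$, and on this more regular subspace the single- and double-layer potentials have improved mapping properties (see \cite[Section 3.4]{MaPoSi}) placing $G^{\pm}_{-k^{2}}(\phi\oplus\varphi)$ in $H^{2}_{\loc}(\RE^{n}\backslash\Gamma)$; the Helmholtz equation satisfied by $G^{\pm}_{-k^{2}}(\phi\oplus\varphi)$ (Lemma \ref{Lemma_G_z}) together with interior elliptic regularity separately in $\Omega_{\-}$ and $\Omega_{\+}$ then gives the two-sided trace $\gamma G^{\pm}_{-k^{2}}\Pi'$ as a bounded map into $H^{3/2}(\Gamma)\oplus H^{1/2}(\Gamma)$. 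This is precisely the same mechanism that already underlies the well-posedness of $B_{\Theta}$ in Corollary \ref{Lemma_parameter}.
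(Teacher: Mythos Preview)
Your algebraic reduction is exactly the one the paper uses: one shows
\[
(\Theta+\Pi M^{\pm}_{-k^{2}}\Pi')(\phi\oplus\varphi)=(B_{\Theta}-\Pi\gamma G^{\pm}_{-k^{2}}\Pi')(\phi\oplus\varphi)
\]
and substitutes. However, your technical justification contains a genuine error. You assert that hypothesis \eqref{Theta_reg} ``forces $\ran(\Pi')\subseteq\dom(\Theta)\subseteq H^{1/2}(\Gamma)\oplus H^{3/2}(\Gamma)$''. This inclusion is false in general: $\dom(\Theta)$ is only a \emph{dense} subspace of $\ran(\Pi)'$, and \eqref{Theta_reg} is a regularity constraint on $\dom(\Theta)$, not on all of $\ran(\Pi)'$. (Already in the Dirichlet example of Subsection \ref{dirichlet}, $\ran(\Pi)'=H^{-3/2}(\Gamma)\oplus\{0\}$ while $\dom(\Theta)$ is a much smaller regular subspace.) Consequently your claim that $\gamma G^{\pm}_{-k^{2}}\Pi'$ is a bounded operator on $\ran(\Pi)'$ is neither proved nor needed.

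The fix is simple and is what the paper does: the displayed identity only has to hold on $\dom(\Theta)$, since both $\Theta$ and $B_{\Theta}$ are defined there and the inverse $(\Theta+\Pi M^{\pm}_{-k^{2}}\Pi')^{-1}$ in \eqref{K-lim} takes values in $\dom(\Theta)$. Under \eqref{Theta_reg} one has $\dom(\Theta)\subseteq H^{1/2}(\Gamma)\oplus H^{3/2}(\Gamma)$, hence $G_{z}(\phi\oplus\varphi)\in H^{2}(\RE^{n}\backslash\Gamma)$ for $\phi\oplus\varphi\in\dom(\Theta)$ by the standard mapping properties of layer potentials. Then \eqref{G2} writes $G^{\pm}_{-k^{2}}=G_{z}+(z+k^{2})R^{\pm}_{-k^{2}}G_{z}$ with $R^{\pm}_{-k^{2}}G_{z}(\phi\oplus\varphi)\in H^{2}_{-\alpha}(\RE^{n})$, so $\gamma G^{\pm}_{-k^{2}}$ is well defined on $\dom(\Theta)$ (indeed as a strong limit $\gamma G^{\pm}_{-k^{2}}(\phi\oplus\varphi)=\lim_{\epsilon\downarrow 0}\gamma G_{-(k^{2}\pm i\epsilon)}(\phi\oplus\varphi)$), and your computation $M^{\pm}_{-k^{2}}=\gamma(G-G^{\pm}_{-k^{2}})$ goes through there without invoking interior elliptic regularity separately.
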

\begin{proof}
By hypothesis \eqref{Theta_reg}, one has  $$\ran(G_{z})|\dom(\Theta)\subseteq 
H^{2}(\RE^{n}\backslash\Gamma)\,.
$$
Then, by \eqref{G2} and \eqref{lim3}, the operator 
$$ 
\gamma G^{\pm }_{-k^{2}}:\dom(\Theta)\to H^{{\frac32}}(\Gamma)\oplus H^{{\frac12}}(\Gamma)
$$
is well-defined and, for any $\phi\oplus\varphi\in\dom(\Theta)$, the limits
\be\label{gammaG-lim}
\gamma G^{\pm }_{-k^{2}}(\phi\oplus\varphi)=\lim_{\epsilon\downarrow 0} 
\gamma G_{-(k^{2}\pm i\epsilon)}(\phi\oplus\varphi)
\ee
exists in $H^{{\frac32}}(\Gamma)\oplus H^{{\frac12}}(\Gamma)$. Thus, for any $\phi\oplus\varphi\in\dom(\Theta)$ and for any $-k^{2}\in E^{-}_{\Pi,\Theta}$,  one has 
\begin{align*}
&(\Theta+\Pi M^{\pm}_{-k^{2}}\Pi^{\prime})(\phi\oplus\varphi)=
\lim_{\epsilon\downarrow 0}(\Theta+\Pi M_{-(k^{2}\pm i\epsilon)}\Pi^{\prime})(\phi\oplus\varphi)\\
=&\lim_{\epsilon\downarrow 0}(B_\Theta-\Pi \gamma G_{-(k^{2}\pm i\epsilon)}\Pi^{\prime})(\phi\oplus\varphi)=(B_\Theta-\Pi \gamma G^{\pm}_{-k^{2}}\Pi^{\prime})(\phi\oplus\varphi)\,.
\end{align*}
\end{proof}
\begin{corollary}\label{sing} Under the hypotheses in Theorem \ref{Theorem_LAP}, 
$\Delta_{\Pi,\Theta}$ has empty singular continuous spectrum, i.e.
$$
L^{2}(\RE^{n})=L^{2}(\RE^{n})_{ac}\oplus L^{2}(\RE^{n})_{pp}\,,
$$
where $L^{2}(\RE^{n})_{ac}$ and $L^{2}(\RE^{n})_{pp}$ denote the absolutely continuous and pure point subspaces of $L^{2}(\RE^{n})$ with respect to $\Delta_{\Pi,\Theta}$.
\end{corollary}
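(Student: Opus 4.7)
The plan is to combine Stone's formula (Stieltjes inversion) with the limiting absorption principle of Theorem \ref{teo_LAP} to show that, for every $u$ in the dense subspace $L^{2}_{\alpha}(\RE^{n})$, $\alpha>\frac12$, the spectral measure $\mu_{u}$ of $\Delta_{\Pi,\Theta}$ associated with $u$ carries no singular continuous part. Since the orthogonal projector $P_{sc}$ onto $L^{2}(\RE^{n})_{sc}$ satisfies $\|P_{sc}u\|^{2}=\mu_{u,sc}(\RE)$, this forces $P_{sc}u=0$ on a dense subset of $L^{2}(\RE^{n})$, hence $P_{sc}=0$, which is exactly the stated decomposition.

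First I would localize where a putative singular continuous part of $\mu_{u}$ could live. By Theorem \ref{TeoSpec}, $\sigma_{ess}(\Delta_{\Pi,\Theta})=(-\infty,0]$, and the hypothesis that $\Delta_{\Pi,\Theta}$ is bounded from above then forces $\sigma(\Delta_{\Pi,\Theta})\cap(0,+\infty)$ to be a (possibly empty) discrete set of eigenvalues of finite multiplicity. Theorem \ref{Theorem_LAP} further asserts that $(-\infty,0)\cap\sigma_{p}(\Delta_{\Pi,\Theta})$ is discrete, so the complement $\RE\setminus E^{-}_{\Pi,\Theta}$ is countable, being contained in $\sigma_{p}(\Delta_{\Pi,\Theta})\cup\{0\}$. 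The restriction of $\mu_{u}$ to such a set is purely atomic and contributes only to $\mu_{u,pp}$.

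Next I would establish absolute continuity of $\mu_{u}|_{E^{-}_{\Pi,\Theta}}$ with respect to Lebesgue measure. For any relatively compact open interval $J\subset E^{-}_{\Pi,\Theta}$, Stone's formula expresses $\mu_{u}(J)$ as the $\epsilon\downarrow 0$ limit of
\begin{equation*}
\frac{1}{2\pi i}\int_{J}\bigl\langle\bigl(R_{\Pi,\Theta,-\lambda+i\epsilon}-R_{\Pi,\Theta,-\lambda-i\epsilon}\bigr)u,u\bigr\rangle\,d\lambda\,.
\end{equation*}
By Theorem \ref{teo_LAP} the maps $z\mapsto R^{\pm}_{\Pi,\Theta,z}$ extend continuously to $\CO_{\pm}\cup E^{-}_{\Pi,\Theta}$ in the operator norm of $\B(L^{2}_{\alpha}(\RE^{n}),L^{2}_{-\alpha}(\RE^{n}))$, and these extensions are uniformly bounded on $\overline{J}$. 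Dominated convergence then allows the limit to be taken under the integral sign, producing a representation
\begin{equation*}
\mu_{u}(J)=\int_{J}\rho_{u}(\lambda)\,d\lambda\,,
\end{equation*}
with a continuous (hence locally integrable on $E^{-}_{\Pi,\Theta}$) density $\rho_{u}$ built from the boundary values $R^{\pm}_{\Pi,\Theta,\cdot}$. This gives $\mu_{u}|_{E^{-}_{\Pi,\Theta}}\ll d\lambda$ as desired.

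Combining the two steps yields $\mu_{u}=\mu_{u,ac}+\mu_{u,pp}$ for every $u\in L^{2}_{\alpha}(\RE^{n})$; density of $L^{2}_{\alpha}(\RE^{n})$ in $L^{2}(\RE^{n})$ together with boundedness of $P_{sc}$ then gives $L^{2}(\RE^{n})_{sc}=\{0\}$. The only non-routine point I anticipate is the justification of the exchange of limit and integral in Stone's formula, and this is handled directly by the norm-continuity statement of Theorem \ref{teo_LAP}; after that, the argument follows completely standard spectral-theoretic lines.
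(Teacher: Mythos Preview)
Your argument is correct and follows essentially the same route as the paper's own proof: Stone's formula combined with the norm-continuity of $z\mapsto R^{\pm}_{\Pi,\Theta,z}$ from Theorem \ref{teo_LAP} and dominated convergence to obtain absolute continuity of the spectral measure on $E^{-}_{\Pi,\Theta}$, followed by a density argument to pass from $L^{2}_{\alpha}(\RE^{n})$ to all of $L^{2}(\RE^{n})$. One small slip: the set $\RE\setminus E^{-}_{\Pi,\Theta}$ itself is not countable (it contains $[0,+\infty)$); what you mean, and what your preceding sentence about $\sigma_{ess}$ already sets up, is that $(\RE\setminus E^{-}_{\Pi,\Theta})\cap\sigma(\Delta_{\Pi,\Theta})$ is countable, which is all that is needed since $\mu_{u}$ is supported on the spectrum.
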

\begin{proof} We follow standard arguments (see e.g. \cite[Theorem 6.1]{Agm}): let $E_{\lambda}$ the spectral resolution of $\Delta_{\Pi,\Theta}$ and 
let $u\in L^{2}_{\alpha}(\RE^{n})\cap L^{2}(\RE^{n})_{pp}^{\perp}$. Then, for any compact interval 
$[a,b]\subset E^{-}_{\Pi,\Theta}$ one has, by Stone's formula, by the continuity of $z\mapsto R^{\pm}_{\Pi,\Theta,z}$ and by Lebesgue's dominated converge theorem,  
\begin{align*}
&\langle (E_{b}-E_{a})u,u\rangle_{L^{2}(\RE^{n})}\\
=&\lim_{\epsilon\downarrow 0}\frac1{2\pi i}\int_{a}^{b}\langle((-\Delta_{\Pi,\Theta}+\lambda-i\epsilon)^{-1}-(-\Delta_{\Pi,\Theta}+\lambda+i\epsilon)^{-1})u,u\rangle\,d\lambda\\
=&\frac1{2\pi i}\int_{a}^{b}\langle(R^{-}_{\Pi,\Theta,\lambda}-R^{+}_{\Pi,\Theta,\lambda})u,u\rangle\,d\lambda
\end{align*}
so that $\langle E_{\lambda}u,u\rangle_{L^{2}(\RE^{n})}$ is differentiable on $ E^{-}_{\Pi,\Theta}$ and 
$$\frac{d\ }{d\lambda}\langle E_{\lambda}u,u\rangle_{L^{2}(\RE^{n})}=\frac1{2\pi i}\,\langle(R^{-}_{\Pi,\Theta,\lambda}-R^{+}_{\Pi,\Theta,\lambda})u,u\rangle
$$
for all $u\in L^{2}_{\alpha}(\RE^{n})$. Since it is known that the set of functions for which $\langle E_{\lambda}u,u\rangle_{L^{2}(\RE^{n})}$ is differentiable is a closed set, in conclusion $\langle E_{\lambda}u,u\rangle_{L^{2}(\RE^{n})}$ is differentiable for any $u\in L^{2}(\RE^{n})_{pp}^{\perp}$.
\end{proof}

\end{section}
\begin{section}{Eigenfunction expansion and the scattering matrix}
All over this section we suppose that the assumptions in Theorem \ref{Theorem_LAP} hold true. We then  consider the extension $\t\Delta_{\Pi,\Theta}$ of the self-adjoint operator 
$$
\Delta_{\Pi,\Theta}:\dom(\Delta_{\Pi,\Theta})\subseteq L^{2}(\RE^{n})\to L^{2}(\RE^{n})
$$
to the larger space $L^{2}_{-\alpha}(\RE^{n})$, $\alpha>0$, given by 
$$
\t\Delta_{\Pi,\Theta}:\dom(\t\Delta_{\Pi,\Theta})\subseteq L^{2}_{-\alpha}(\RE^{n})\to L^{2}_{-\alpha}(\RE^{n})\,,\quad 
$$
$$
\t\Delta_{\Pi,\Theta}u:=\Delta u_{\circ}+G(\phi\oplus\varphi)=
\Delta u-[\hat \gamma_{1}]u\,\delta_\Gamma-[\hat \gamma_{0}]u\,\nu\!\cdot\!\nabla\delta_\Gamma\,,
$$ 
\begin{align*}
&\dom(\t\Delta_ {   \Pi,\Theta  })\\
:=&\{u=u_{\circ}+G(\phi\oplus\varphi):u_{\circ}\in
H^{2}_{-\alpha}(\mathbb{R}^{n}),\, \phi\oplus\varphi\in \dom(   \Theta),\, \Pi\gamma  
u_{\circ}=\Theta(\phi\oplus\varphi)\}.
\end{align*}
By
Theorem \ref{Theorem_LAP} and \eqref{adjoint3} one has 
$$
\text{\rm graph}(\t\Delta_ {   \Pi,\Theta  })\cap (L^{2}(\RE^{n})\oplus L^{2}(\RE^{n}))
=\text{\rm graph}(\Delta_ {   \Pi,\Theta  })\,.
$$ 
Such an operator $\t \Delta_{\Pi,\Theta}$ allows the introduction of the generalized eigenfunctions of $\Delta_{\Pi,\Theta}$:
\begin{theorem}\label{eigenfunctions} Let $\t u_{k}\in L^{2}_{-\alpha}(\RE^{n})\backslash\{0\}$ be a generalized eigenfunction of $\Delta_{\Pi,\Theta}$ with eigenvalue $-k^{2}\in E^{-}_{\Pi,\Theta}$, i.e. $\t u_{k}$ belongs to $\dom(\t\Delta_{\Pi,\Theta})$ and solves the equation $$(\t\Delta_{\Pi,\Theta}+k^{2})\t u_{k}=0\,.$$  Then 
$$
\t u_{k}= u_{k}+G_{-k^{2}}^\pm\Pi^{\prime}(  \Theta+\Pi M_{-k^{2}}^{\pm}\Pi^{\prime})  ^{-1}\Pi\gamma u_{k}\,,
$$
where $u_{k}\in H^{2}_{-\alpha}(\RE^{n})$ is a generalized eigenfunction of 
$\Delta:H^{2}(\RE^{n})\subset L^{2}(\RE^{n})\to L^{2}(\RE^{n})$ with eigenvalue $-k^{2}$. 
\end{theorem}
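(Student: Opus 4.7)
The plan is to use the decomposition built into $\dom(\t\Delta_{\Pi,\Theta})$ to isolate the free eigenfunction. First I would uniquely write
$$
\t u_k = u_\circ + G(\phi\oplus\varphi),\qquad u_\circ\in H^2_{-\alpha}(\RE^n),\ \phi\oplus\varphi\in \dom(\Theta)\subseteq\ran(\Pi'),
$$
with $\Pi\gamma u_\circ = \Theta(\phi\oplus\varphi)$. Uniqueness of this decomposition follows from the jump identity \eqref{G_z_jumps} and the fact that $H^2_{-\alpha}(\RE^n)$-functions carry no jumps across $\Gamma$. I then set $u_k:=\t u_k - G^\pm_{-k^2}(\phi\oplus\varphi)$, reducing the theorem to three claims: (i) $u_k\in H^2_{-\alpha}(\RE^n)$; (ii) $(\Delta+k^2)u_k=0$ distributionally on $\RE^n$; and (iii) $\phi\oplus\varphi = \Pi'L_{\Pi,\Theta,-k^2}^{\pm}\Pi\gamma u_k$.

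For (i), the identity \eqref{G2} with $z=1$ gives
$$
u_k = u_\circ - (1+k^2)R^\pm_{-k^2}G(\phi\oplus\varphi),
$$
and since $G(\phi\oplus\varphi)\in L^2_\alpha(\RE^n)$ (the kernel of $G=G_1$ decays exponentially by \eqref{green}), Theorem \ref{LAP0} places $R^\pm_{-k^2}G(\phi\oplus\varphi)\in H^2_{-\alpha}(\RE^n)$. For (ii), the explicit action of $\t\Delta_{\Pi,\Theta}$ combined with $(\t\Delta_{\Pi,\Theta}+k^2)\t u_k=0$ yields $(\Delta+k^2)\t u_k=0$ a.e.\ on $\RE^n\setminus\Gamma$, while Lemma \ref{Lemma_G_z} provides the same equation for $G^\pm_{-k^2}(\phi\oplus\varphi)$; hence $(\Delta+k^2)u_k=0$ on $\RE^n\setminus\Gamma$, and (i) lifts this to a distributional identity on all of $\RE^n$.

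For (iii), I would apply the extended trace $\gamma$ to $u_k = u_\circ + (G-G^\pm_{-k^2})(\phi\oplus\varphi)$. The resolvent identity $G-G_{-(k^2\pm i\epsilon)} = -(1+k^2\pm i\epsilon)(-\Delta-(k^2\pm i\epsilon))^{-1}G$ combined with Theorem \ref{LAP0} gives convergence of $(G-G_{-(k^2\pm i\epsilon)})(\phi\oplus\varphi)$ to $(G-G^\pm_{-k^2})(\phi\oplus\varphi)$ in $H^2_{-\alpha}(\RE^n)$, so that continuity of $\gamma$ together with $M_z=\gamma(G-G_z)$ and \eqref{M-lim} forces
$$
\gamma(G-G^\pm_{-k^2})(\phi\oplus\varphi) = M^\pm_{-k^2}(\phi\oplus\varphi).
$$
Projecting with $\Pi$, inserting the boundary relation $\Pi\gamma u_\circ=\Theta(\phi\oplus\varphi)$, and using $\Pi'(\phi\oplus\varphi)=\phi\oplus\varphi$ (since $\phi\oplus\varphi\in\ran(\Pi')$), I obtain
$$
\Pi\gamma u_k=(\Theta+\Pi M^\pm_{-k^2}\Pi')(\phi\oplus\varphi),
$$
which I invert by the bounded $L^\pm_{\Pi,\Theta,-k^2}$ from Theorem \ref{teo_LAP} to get (iii); substitution into $\t u_k = u_k + G^\pm_{-k^2}(\phi\oplus\varphi)$ yields the stated formula.

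The most delicate point is the trace step in (iii): one must carefully justify that the extended trace $\gamma$ on $H^2_{-\alpha}(\RE^n)$ commutes with the $\epsilon\downarrow 0$ limit of the regularized potential difference, and that the identification of this limit with $M^\pm_{-k^2}(\phi\oplus\varphi)$ holds for test elements $\phi\oplus\varphi\in\dom(\Theta)\subseteq H^{-3/2}(\Gamma)\oplus H^{-1/2}(\Gamma)$, placing the resulting equation in precisely the dual pair $(\ran(\Pi'),\ran(\Pi))$ on which $(\Theta+\Pi M^\pm_{-k^2}\Pi')^{-1}$ is defined.
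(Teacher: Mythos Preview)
Your proof is correct and follows essentially the same route as the paper's: both start from the domain decomposition $\t u_k=u_\circ+G(\phi\oplus\varphi)$, identify $u_k=u_\circ-(1+k^2)R^\pm_{-k^2}G(\phi\oplus\varphi)$ via \eqref{G2}, and then recover $\phi\oplus\varphi$ from the boundary relation. The only difference is that the ``delicate'' trace step you flag is handled in the paper without any limiting argument, simply by invoking the already-established identity \eqref{M-lim-1} with $z=1$ (so that $M_1=0$ gives $\gamma(G-G^\pm_{-k^2})=M^\pm_{-k^2}$ directly); this bypasses the need to justify commuting $\gamma$ with the $\epsilon\downarrow 0$ limit.
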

\begin{proof} Let us set $\t u_{k}=u_{\circ}+G(\phi\oplus\varphi)$, with $u_{\circ}\in
H^{2}_{-\alpha}(\mathbb{R}^{n})$ and $\phi\oplus\varphi\in \dom(   \Theta)$ such that  $\Pi\gamma  u_{\circ}=\Theta(\phi\oplus\varphi)$. Then $(\t\Delta_{\Pi,\Theta}+k^{2})\t u_{k}=0$ gives 
$$
(\Delta+k^{2})u_{\circ}=-(1+k^{2})G(\phi\oplus\varphi)\,.
$$ 
Since $\ran (G)\subset L^{2}_{\alpha}(\RE^{n})$, we can apply $R^{\pm}_{-k^{2}}$ to both sides of the above relation; thus 
$$
u_{\circ}=u_{k}+(1+k^{2})R^{\pm}_{-k^{2}}G(\phi\oplus\varphi)\,,
$$ 
where $u_{k}\in H^{2}_{-\alpha}(\RE^{n})$ is any solution of the equation $(\Delta+k^{2})u_{k}=0$. Imposing the boundary conditions we obtain, by \eqref{M-lim-1} and by $M_{1}=0$,
$$
\Pi\gamma u_\circ=\Pi\gamma u_{k}-\Pi M^{\pm}_{-k^{2}}(\phi\oplus\varphi)=\Theta(\phi\oplus\varphi)\,,
$$
i.e. $$\phi\oplus\varphi=(\Theta+\Pi M^{\pm}_{-k^{2}}\Pi')^{-1}\Pi\gamma u_{k}\,.$$
The proof is then concluded by using \eqref{G2} with $z=1$.
\end{proof}
\begin{remark} 
Under hypothesis \eqref{Theta_reg}, by Corollary \ref{Lemma_parameter}, one can alternatively define 
$$
\dom(\t\Delta_ {   \Pi,\Theta  }):=\{u\in L^{2}_{-\alpha}(\RE^{n})\cap H^{2}_{-\alpha}(\RE^{n}\backslash\Gamma):[\gamma]u\in\dom(\Theta)\,,\  \Pi\gamma  u=B_\Theta[\gamma]u\}
$$
and so
$$
\t u_{k}= u_{k}+G_{-k^{2}}^\pm\Pi^{\prime}(  B_\Theta-\Pi \gamma G_{-k^{2}}^{\pm}\Pi^{\prime})  ^{-1}\Pi\gamma u_{k}\,.
$$
\end{remark}
Before stating the next results we recall the following definition: let $u$ solve  the Helmholtz equation $(\Delta+k^{2})u=0$ on  the exterior of some bounded domain; we say that $u$ satisfies the $(\pm)$ {\it Sommerfeld radiation condition} whenever 
$$
\lim_{\|x\|\to+\infty}\,\|x\|^{(n-1)/2}(\hat x\!\cdot\!\nabla \pm ik)u(x)=0
$$
hold uniformly in $\hat x:=x/\|x\|$. The plus sign corresponds to an inward wave and the minus one corresponds to a outward wave. 
\begin{lemma}\label{SRC} 1) 
the functions $G^{\pm}_{-k^{2}}(\phi\oplus\varphi)$ satisfy the $(\pm)$ Sommerfeld radiation condition.\par\noindent
2) If $u\in \ker(\t\Delta_{\Pi,\Theta}+k^{2})$, $-k^{2}\in E^{-}_{\Pi,\Theta}$, 
satisfies the Sommerfeld radiation condition, then $u=0$.
\end{lemma}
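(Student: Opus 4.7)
The plan is to treat the two claims separately. Both hinge on the far-field behaviour of the limit kernel $\G^{\pm}_{-k^{2}}(x)=\lim_{\epsilon\downarrow 0}\G_{-(k^{2}\pm i\epsilon)}(x)$: by \eqref{green}, taking the limit $\epsilon\downarrow 0$ amounts to letting $\sqrt z$ approach a purely imaginary value. For (1), I combine \eqref{green} with the standard uniform asymptotics of $K_{n/2-1}$ at infinity to obtain, uniformly for $y$ in the compact set $\Gamma$,
\[
\G^{\pm}_{-k^{2}}(x-y)=c_{n}k^{\frac{n-3}{2}}\|x\|^{-\frac{n-1}{2}}e^{\mp ik\|x\|}e^{\pm ik\,\hat x\cdot y}\bigl(1+O(\|x\|^{-1})\bigr),
\]
with analogous expansions for $\nabla_{x}\G^{\pm}_{-k^{2}}(x-y)$. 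Writing $G^{\pm}_{-k^{2}}(\phi\oplus\varphi)(x)$ as the duality pairing on $\Gamma$ of $\phi\oplus\varphi\in H^{-3/2}(\Gamma)\oplus H^{-1/2}(\Gamma)$ against the smooth functions $y\mapsto\G^{\pm}_{-k^{2}}(x-y)$ and its normal derivative, then differentiating under the pairing, the $(\pm)$ Sommerfeld condition follows from the pointwise expansion and the compactness of $\Gamma$.

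For (2), let $u\in\ker(\t\Delta_{\Pi,\Theta}+k^{2})$ satisfy the Sommerfeld radiation condition. From the representation $u=u_{\circ}+G(\phi\oplus\varphi)$ with $u_{\circ}\in H^{2}_{-\alpha}(\RE^{n})$ and $\phi\oplus\varphi\in\dom(\Theta)$, together with \eqref{adjoint3}, one has $u\in H^{2}_{\loc}(\RE^{n}\setminus\Gamma)$ and $(\Delta+k^{2})u=0$ classically on $\RE^{n}\setminus\Gamma$. Fix $R$ with $\overline\Omega\subset B_{R}$ and apply Green's second identity on $\Omega_{\-}$ and on $\Omega_{\+}\cap B_{R}$ to the pair $(u,\bar u)$; expressing the one-sided traces through the symmetric traces $\hat\gamma_{i}u$ and the jumps $[\hat\gamma_{i}]u$ one arrives at
\[
\int_{\partial B_{R}}(\bar u\,\partial_{r}u-u\,\partial_{r}\bar u)\,dS=-2i\,\mathrm{Im}\int_{\Gamma}\bigl(\overline{\hat\gamma_{0}u}\,[\hat\gamma_{1}]u+\overline{[\hat\gamma_{0}]u}\,\hat\gamma_{1}u\bigr)\,d\sigma.
\]
The right-hand side vanishes: it is the Lagrange identity encoded by the self-adjoint boundary conditions $\Pi\gamma u_{\circ}=\Theta[\hat\gamma]u$ together with $\Theta=\Theta^{\prime}$, applied to the pair $(u,\bar u)$, both of which belong to $\dom(\t\Delta_{\Pi,\Theta})$ because $\Delta$ and the extension parameters are real and $-k^{2}\in\RE$.

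Inserting the Sommerfeld asymptotics $\partial_{r}u=\mp iku+o(\|x\|^{-(n-1)/2})$ into the resulting null flux identity gives $\int_{\partial B_{R}}|u|^{2}\,dS\to 0$ as $R\to\infty$. Rellich's uniqueness theorem then forces $u=0$ outside some sufficiently large ball, and the unique continuation principle applied to $(\Delta+k^{2})u=0$ on the unbounded component of $\Omega_{\+}$ extends this to $u$ vanishing there. Hence $u$ is compactly supported, so $u\in L^{2}(\RE^{n})\cap\dom(\t\Delta_{\Pi,\Theta})=\dom(\Delta_{\Pi,\Theta})$ by the identification of graphs recorded just before the theorem, and $\Delta_{\Pi,\Theta}u=-k^{2}u$; since $-k^{2}\in E^{-}_{\Pi,\Theta}$ is not an eigenvalue, $u=0$. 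The main obstacle will be justifying the vanishing of the boundary form on $\Gamma$: because $u\notin L^{2}$ in general, self-adjointness of $\Delta_{\Pi,\Theta}$ cannot be invoked directly through its $L^{2}$-pairing, and one must verify the identity through the intrinsic boundary pairing underlying Theorem \ref{Theorem_Krein}.
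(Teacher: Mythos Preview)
Part (1) is essentially the paper's argument: both compute the limit kernel $\G^{\pm}_{-k^{2}}$ from \eqref{green} via the Bessel--Hankel relation and read off the $(\pm)$ Sommerfeld condition for the layer potentials from the large-$\|x\|$ asymptotics, using that $\Gamma$ is compact.

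Part (2) is where you diverge from the paper, and your route has a real gap. The paper does not apply Green's identity to $u$ across $\Gamma$ at all. Instead it invokes Theorem~\ref{eigenfunctions} to write
\[
u=u_{k}+G^{\pm}_{-k^{2}}\Pi'(\Theta+\Pi M^{\pm}_{-k^{2}}\Pi')^{-1}\Pi\gamma u_{k},
\]
with $u_{k}\in H^{2}_{-\alpha}(\RE^{n})$ an \emph{entire} solution of $(\Delta+k^{2})u_{k}=0$. By part (1) the second summand already satisfies the radiation condition, so $u$ does if and only if $u_{k}$ does. For $u_{k}$ Green's formula on the full ball $B_{R}$ gives $\mathrm{Im}\int_{\partial B_{R}}\bar u_{k}\,\partial_{r}u_{k}\,dS=0$ with \emph{no} contribution on $\Gamma$; then \cite[Lemma 9.9]{McLe} forces $u_{k}=0$ outside $B_{R}$ for every $R$, hence $u_{k}\equiv 0$ and $u=0$.

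Your direct approach needs the half Green formula \eqref{hG} on $\Omega_{\-}$ and $\Omega_{\+}\cap B_{R}$, which requires $u\in H^{1}(\Omega_{\natural})\cap H^{0}_{\Delta}(\Omega_{\natural})$ and produces pairings such as $\langle\hat\gamma_{0}u,[\hat\gamma_{1}]u\rangle$. Under the sole standing hypothesis \eqref{Theta_reg1} one only has $[\hat\gamma]u\in\dom(\Theta)\subseteq H^{s_{1}}(\Gamma)\oplus H^{s_{2}}(\Gamma)$ with $s_{1}>-\tfrac32$, $s_{2}>-\tfrac12$; this neither yields $G(\phi\oplus\varphi)\in H^{1}(\RE^{n}\backslash\Gamma)$ nor makes $\langle\hat\gamma_{0}u,[\hat\gamma_{1}]u\rangle$ a well-defined duality (you would need $[\hat\gamma_{1}]u\in H^{1/2}(\Gamma)$ against $\hat\gamma_{0}u\in H^{-1/2}(\Gamma)$). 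So the obstacle you flag at the end is more severe than a symmetry verification: the boundary form may simply fail to exist in the generality of the lemma. The paper's reduction to the free eigenfunction $u_{k}$ avoids this entirely, since for $u_{k}$ the hypersurface $\Gamma$ is invisible.
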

\begin{proof} By \eqref{G3},
$$
G^{\pm}_{-k^{2}}(\phi\oplus\varphi)=\SL^{\pm}_{-k^2}\phi+\DL^{\pm}_{-k^2}\varphi\,,
$$
where
$$
\SL^{\pm}_{-k^2}:=(\gamma_{0}R^{\mp}_{-k^{2}})^{\prime}\,,\quad 
\DL^{\pm}_{-k^2}:=(\gamma_{1}R^{\mp}_{-k^{2}})^{\prime}\,.
$$
By \eqref{green} and $K_{\alpha}(z)=\frac{\pi}2\,i^{\alpha+1}H^{(1)}_{\alpha}(iz)$, where $H^{(1)}_{\alpha}$ denotes the Hankel function of first kind of order $\alpha$, it results
\begin{align*}
\G_{-(k^{2}\pm i\epsilon)}(x)=\frac{i}{4}\,\left(\frac{\sqrt {k^{2}\mp i\epsilon}}{2\pi\|x\|}\,\right)^{\frac{n}2-1}\!\!\!\!H^{(1)}_{\frac{n}2-1}(\sqrt{k^{2}\mp i\epsilon}\,\|x\|)\,,\quad
 \text{Im}\sqrt{k^{2}\mp i\epsilon}>0\,.
\end{align*}
Therefore, by $\sqrt {k^{2}\mp i\epsilon}=\mp |k|+\frac{i}{2}\,\frac{\epsilon}{|k|}+o(\epsilon)$ for any $k\not=0$,
\begin{align*}
\G^{\pm}_{-k^{2}}(x):=\lim_{\epsilon\downarrow 0}\G_{-(k^{2}\pm i\epsilon)}(x)=\frac{i}{4}\,\left(\frac{\mp |k|}{2\pi\|x\|}\,\right)^{\frac{n}2-1}\!\!\!\!H^{(1)}_{\frac{n}2-1}(\mp|k|\,\|x\|)\,.
\end{align*}
Thus, for any fixed $k\not=0$, one gets (see e.g. \cite[Appendix 1]{AS})
$$
\sup_{\epsilon>0}|\G_{-(k^{2}\pm i\epsilon)}(x)|\le c\left({\|x\|^{-\frac12(n-1)}}+{\|x\|^{-(n-2)}} \right)\,.
$$
By the dominated convergence theorem, for any bounded and compactly supported $u$ and $v$,
$$\lim_{\epsilon\downarrow 0}\langle u,(R_{-(k^{2}\pm i\epsilon)}-\t R^{\pm}_{-k^{2}})v\rangle_{L^{2}(\RE^{n})}=0\,,
$$
where $\t R^{\pm}_{-k^{2}}$ denotes the operator with integral kernel given by $\G^{\pm}_{-k^{2}}(x-y)$. Therefore, by Theorem \ref{LAP0}, $R^{\pm}_{-k^{2}}=\t R^{\pm}_{-k^{2}}$ and so, if $\phi$ and $\varphi$ are in $L^{2}(\Gamma)$ and $x\notin\Gamma$, 
\begin{equation}\label{K1}
\SL^{\pm}_{-k^{2}}\phi(x)=\int_{\Gamma}\G_{-k^{2}}^{\pm}(x-y)\,\phi(y)\,d\sigma (y)
\end{equation}
and%
\begin{equation}\label{K2}
\DL^{\pm}_{-k^{2}} \varphi(x)=\int_{\Gamma}\nu(y)\!\cdot\!\nabla \G_{-k^{2}}^{\pm}(x-y)\,\varphi(y)\,d\sigma (y)\,.
\end{equation}
Then, by the behavior of $H^{(1)}_{\alpha}(x)$ and $\frac{d\,}{dx}H^{(1)}_{\alpha}(x)$ as $\|x\|\to+\infty$, there follows that  $\SL^{\pm}_{-k^{2}}\phi=\G_{-k^{2}}^{\pm}*(\phi\,\delta_{\Gamma})$ and 
$\DL^{\pm}_{-k^{2}}\varphi=\G_{-k^{2}}^{\pm}*(\varphi\,\nu\!\cdot\!\nabla\delta_{\Gamma})$ (and hence  $G^{\pm}_{-k^{2}}(\phi\oplus\varphi)$) satisfy the $(\pm)$ Sommerfeld radiation condition (see e.g. \cite[Lemma 7, Subsection 7d, Section 8, Chapter II]{DL}).
\par\noindent
2) Let us suppose that $u\not=0$. 
Then, by Theorem \ref{eigenfunctions},  
$$
u=u_{k}+G_{-k^{2}}^\pm\Pi^{\prime}(  \Theta+\Pi M_{-k^{2}}^{\pm}\Pi^{\prime})  ^{-1}\Pi\gamma u_{k}\,,
$$
where $u_{k}\in H^{2}_{-\alpha}(\RE^{n})$ is a generalized eigenfunction of 
$\Delta$ with eigenvalue $-k^{2}$. Then, by 1), $u$ satisfies the Sommerfeld radiation condition if and only if $u_{k}$ does. By Green's formula on the ball of radius $R$, since $(\Delta +k^{2})u_{k}(x)=0$ for any $x\in\RE^{3}$, one has 
$$\text{\rm Im}\left(\int_{\|x\|=R}\bar u_{k}(x)\,\hat x\!\cdot\!\nabla u_{k}(x)d\sigma(x)\right)=0\,.
$$
Thus, by \cite[Lemma 9.9]{McLe}, if $u_{k}$ satisfies the Sommerfeld radiation condition then $u_{k}(x)=0$ for any $\|x\|>R$ . Since $R$ is arbitrary, this gives $u_{k}=0$, contradicting our assumption $u\not=0$.
\end{proof}
By Theorem \ref{eigenfunctions} and by considering the usual family 
of generalized eigenfunctions  
$u^{\circ}_{\xi}\in H^{2}_{-\alpha}(\RE^{n})$, $\alpha>\frac{n}2$, 
of $\Delta:H^{2}(\RE^{n})\subset L^{2}(\RE^{n})\to L^{2}(\RE^{n})$ given by the plane waves
$$
u^{\circ}_{\xi}(x):=e^{ i\,\xi\cdot x}\,,
$$ 
one obtains the two families of generalized eigenfunctions of $\Delta_{\Pi,\Theta}$ defined by 
$$
u_{\xi}^{\pm}:= u^{\circ}_{\xi}+G_{-k^{2}}^\mp\Pi^{\prime}(  \Theta+\Pi M_{-k^{2}}^{\mp}\Pi^{\prime})  ^{-1}\Pi\gamma u^{\circ}_{\xi}\,,\qquad  k=\|\xi\|\,,\ 
-k^{2}\in E^{-}_{\Pi,\Theta}\,.
$$
\begin{remark}
Since $\ran(R^{\pm}_{-k^{2}})\subseteq H^{2}_{-\alpha}(\RE^{n})$, by \eqref{G_z_jumps} and \eqref{G2}, one has 
$$
[\gamma]G^{\pm}_{-k^{2}}=\uno_{H^{-{\frac32}}(\Gamma)\oplus H^{-{\frac12}}(\Gamma)}\,.
$$ 
Thus, since $[\gamma]u^{\circ}_{\xi}=0$, one gets
\be\label{LS0}
[\gamma]u_{\xi}^{\pm}=(  \Theta+\Pi M_{-k^{2}}^{\mp}\Pi^{\prime})  ^{-1}\Pi\gamma u^{\circ}_{\xi}\,,
\ee
and so the functions $u^{\pm}_{\xi}\in\dom(\t\Delta_{\Pi,\Theta})$ solve the Lippmann-Schwinger type equation
\be\label{LS}
u_{\xi}^{\pm}=u^{\circ}_{\xi}+G_{-k^{2}}^\mp
[\gamma]u_{\xi}^{\pm}\,.
\ee
\end{remark}
Let us now define, for any $u\in L^{2}_{\alpha}(\RE^{n})$,
\begin{align*}
&\F^{\circ}_{\pm}u(\xi)
:=\frac1{(2\pi)^{\frac{n}2}}\,\int_{\RE^{n}}\overline{u^{\pm}_{\xi}}(x)\,u(x)\,dx\\
=&Fu(\xi)+\frac1{(2\pi)^{\frac{n}2}}\,\langle(  \Theta+\Pi M_{-k^{2}}^{\mp}\Pi^{\prime})  ^{-1}\Pi\gamma u^{\circ}_{\xi}),\gamma R^{\pm}_{-k^{2}}u\rangle\\
=&
Fu(\xi)+\frac1{(2\pi)^{\frac{n}2}}\,\langle [\gamma]u_{\xi}^{\pm} ,\gamma R^{\pm}_{-k^{2}}u\rangle\,,
\end{align*}
where $ F$ denotes the Fourier transform and  $\langle\cdot,\cdot\rangle$ denotes the $(H^{-s_{1}}(\Gamma)\oplus H^{-s_{2}}(\Gamma))$-$(H^{s_{1}}(\Gamma)\oplus H^{s_{2}}(\Gamma))$ duality.
Next theorem provides the main properties of the maps $F^{\circ}_{\pm}$:
\begin{theorem}\label{scattering} 
1) The $\F^{\circ}_{\pm}$ extend to bounded operators $F_{\pm}\in\B(L^{2}(\RE^{n}))$ such that 
$\ker(\F_{\pm})=L^{2}(\RE^{n})_{pp}$ and $\F_{\pm}|L^{2}(\RE^{n})_{ac}$ are unitary onto $L^{2}(\RE^{n})$. 
\par\noindent
2) Let $P_{ac}$ be the orthogonal projection onto $L^{2}(\RE^{n})_{ac}$, then 
$$
\forall u\in \dom(\Delta_{\Pi,\Theta})\,,\quad (F_{\pm}P_{ac}\,\Delta_{\Pi,\Theta}u)(\xi)=-\|\xi\|^{2}F_{\pm} u(\xi)\,.
$$
3) Assume either \eqref{Theta_reg} or \eqref{Theta_reg2} holds, so that the wave operators 
$$
W_{\pm}
:=\text{s-}\lim_{t\to\pm\infty}e^{-it\Delta_{\Pi,\Theta}}e^{it\Delta}
$$
exist and are complete; then 
\be\label{Omega}
W_{\pm}=F_{\pm}^{*}F\,.
\ee
\end{theorem}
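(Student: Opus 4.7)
The plan is to follow the classical stationary scattering strategy (in the spirit of Agmon, Ikebe, and Ikebe--Shimada), establishing the isometric half of Part 1 from Stone's formula and the limit Kre\u\i n identity \eqref{K-lim}, then deducing Part 2 from the fact that $u^\pm_{\xi}$ are generalized eigenfunctions (Theorem \ref{eigenfunctions}), and finally identifying $F_{\pm}^{*}F$ with the time-dependent wave operator $W_{\pm}$ in Part 3; this last step simultaneously provides the surjectivity needed to complete Part 1.

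For the isometric half of Part 1, let $u\in L^{2}_{\alpha}(\RE^{n})$, $\alpha>1/2$, and $[a,b]\subset E^{-}_{\Pi,\Theta}$. Stone's formula, applied as in Corollary \ref{sing} and justified by the continuity of $z\mapsto R^{\pm}_{\Pi,\Theta,z}$ from Theorem \ref{teo_LAP}, gives
\[
\langle(E_{b}-E_{a})u,u\rangle=\frac{1}{2\pi i}\int_{a}^{b}\langle(R^{-}_{\Pi,\Theta,\lambda}-R^{+}_{\Pi,\Theta,\lambda})u,u\rangle\,d\lambda.
\]
Into this integrand we substitute \eqref{K-lim}, the duality $(\gamma R^{\pm}_{-k^{2}})'=G^{\mp}_{-k^{2}}$ from \eqref{G3}, the symmetry $(M^{\pm}_{-k^{2}})^{*}=M^{\mp}_{-k^{2}}$ read off from \eqref{M-lim-1} and $\overline{G_{z}}=G_{\bar z}$, the self-adjointness of $\Theta$, and the classical free-space identity
\[
\frac{1}{2\pi i}(R^{-}_{-k^{2}}-R^{+}_{-k^{2}})(x,y)=\frac{|k|^{n-2}}{2(2\pi)^{n}}\int_{\mathbb S^{n-1}}e^{i|k|\omega\cdot(x-y)}\,d\omega.
\]
Algebraic rearrangement then produces the on-shell spectral density
\[
\frac{1}{2\pi i}\langle(R^{-}_{\Pi,\Theta,-k^{2}}-R^{+}_{\Pi,\Theta,-k^{2}})u,u\rangle=\frac{|k|^{n-2}}{2(2\pi)^{n}}\int_{\mathbb S^{n-1}}|\F^{\circ}_{\pm}u(|k|\omega)|^{2}\,d\omega.
\]
Integrating in $\lambda=-k^{2}$ over $E^{-}_{\Pi,\Theta}$ and invoking Corollary \ref{sing} together with the discreteness of negative eigenvalues (Theorem \ref{Theorem_LAP}) yields $\|\F^{\circ}_{\pm}u\|^{2}_{L^{2}(\RE^{n})}=\|P_{ac}u\|^{2}_{L^{2}(\RE^{n})}$. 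Density of $L^{2}_{\alpha}(\RE^{n})$ in $L^{2}(\RE^{n})$ extends $\F^{\circ}_{\pm}$ to $F_{\pm}\in\B(L^{2}(\RE^{n}))$ with $\ker F_{\pm}=L^{2}(\RE^{n})_{pp}$ and $F_{\pm}\vert L^{2}(\RE^{n})_{ac}$ an isometry.

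For Part 2, since $u_{\xi}^{\pm}\in\dom(\t\Delta_{\Pi,\Theta})$ solves $(\t\Delta_{\Pi,\Theta}+\|\xi\|^{2})u_{\xi}^{\pm}=0$ by Theorem \ref{eigenfunctions}, and $\t\Delta_{\Pi,\Theta}$ agrees with $\Delta_{\Pi,\Theta}$ on the common graph, for $u\in\dom(\Delta_{\Pi,\Theta})\cap L^{2}_{\alpha}(\RE^{n})$ with $\alpha>n/2$ the $L^{2}_{-\alpha}$--$L^{2}_{\alpha}$ pairing gives
\[
\F^{\circ}_{\pm}(\Delta_{\Pi,\Theta}u)(\xi)=\frac{1}{(2\pi)^{n/2}}\,\langle\overline{\t\Delta_{\Pi,\Theta}u_{\xi}^{\pm}},u\rangle=-\|\xi\|^{2}\F^{\circ}_{\pm}u(\xi),
\]
and the identity extends to all $u\in\dom(\Delta_{\Pi,\Theta})$ by density and the boundedness of $F_{\pm}$ just obtained.

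For Part 3 and the surjectivity half of Part 1, set $\Omega_{\pm}:=F_{\pm}^{*}F$; Part 2 already gives $\varphi(\Delta_{\Pi,\Theta})\Omega_{\pm}=\Omega_{\pm}\varphi(\Delta)$ for every bounded Borel $\varphi$. Expanding $e^{-it\Delta_{\Pi,\Theta}}e^{it\Delta}f$ for $f\in\mathcal S(\RE^{n})$ via Fourier inversion on $e^{it\Delta}f$ and the Lippmann--Schwinger identity \eqref{LS}, a stationary-phase argument combined with the $(\pm)$ Sommerfeld radiation behaviour of $G^{\mp}_{-k^{2}}[\gamma]u_{\xi}^{\pm}$ from Lemma \ref{SRC} matches the scattered-wave correction with the asymptotic shape of $e^{it\Delta}f$ as $t\to\pm\infty$, yielding $e^{-it\Delta_{\Pi,\Theta}}e^{it\Delta}f\to\Omega_{\pm}f$ in $L^{2}(\RE^{n})$, i.e.\ $\Omega_{\pm}=W_{\pm}$. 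Completeness of $W_{\pm}$ from Theorem \ref{TeoSpec} then yields $\ran F_{\pm}^{*}\supseteq L^{2}(\RE^{n})_{ac}$, so $F_{\pm}\vert L^{2}(\RE^{n})_{ac}$ is unitary onto $L^{2}(\RE^{n})$, completing Part 1. The main obstacle is the algebraic rearrangement reducing the Kre\u\i n--Stone integrand to $|\F^{\circ}_{\pm}u|^{2}$: the cross-terms produced by \eqref{K-lim} must cancel precisely using the self-adjointness of $\Theta$ on $\ran(\Pi)'$--$\ran(\Pi)$ and the conjugation symmetries above, and uniformity in $\lambda$ has to be maintained so that the $\lambda$-integration is legal; after that the identification $\Omega_{\pm}=W_{\pm}$ is the next hardest step, the $\pm$ selection in Lemma \ref{SRC} being exactly what pins the correct incoming/outgoing sign.
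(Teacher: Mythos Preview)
Your overall strategy is sound and would work, but it differs from the paper's proof in all three parts, and in two of them the paper's route is noticeably shorter.

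For Part 1, the paper does not insert the limit Kre\u\i n formula \eqref{K-lim} into the $R^{-}-R^{+}$ form of Stone's identity and then chase cross-term cancellations. Instead it Fourier-transforms the full resolvent using \eqref{Krein}, obtaining $(F(-\Delta_{\Pi,\Theta}+z)^{-1}u)(\xi)=(\|\xi\|^{2}+z)^{-1}\breve u_{z}(\xi)$ with $\breve u_{z}(\xi)\to F^{\circ}_{\pm}u(\xi)$ as $z\to -k^{2}\pm i0$, and then uses the Poisson form of Stone's formula, $\langle(E_{b}-E_{a})u,u\rangle=\lim_{\epsilon\downarrow 0}\frac{\epsilon}{\pi}\int_{a}^{b}\|(-\Delta_{\Pi,\Theta}+\lambda\pm i\epsilon)^{-1}u\|^{2}\,d\lambda$, together with the standard Poisson-kernel limit. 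This packages all the algebra into a single object $\breve u_{z}$ and completely bypasses the cross-term cancellation you flag as the main obstacle; your approach is correct but genuinely more laborious.

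For Part 2, the paper reads the diagonalization directly off the spectral identity $\langle E_{\lambda}u,v\rangle=\int_{-\|\xi\|^{2}<\lambda}\overline{F_{\pm}u}\,F_{\pm}v\,d\xi$ obtained in Part 1, which avoids having to justify the ``integration by parts'' $\langle\overline{u^{\pm}_{\xi}},\Delta_{\Pi,\Theta}u\rangle=\langle\overline{\t\Delta_{\Pi,\Theta}u^{\pm}_{\xi}},u\rangle$ between $\dom(\t\Delta_{\Pi,\Theta})$ and $\dom(\Delta_{\Pi,\Theta})$ in the $L^{2}_{-\alpha}$--$L^{2}_{\alpha}$ duality. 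Your argument needs this Green-type symmetry plus the condition $\Delta_{\Pi,\Theta}u\in L^{2}_{\alpha}$ (not just $u\in L^{2}_{\alpha}$) for the pairing to make sense; both are obtainable, but the paper's route is immediate once Part 1 is in hand.

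For Part 3, the paper does not use stationary phase. It passes to Abelian limits, writes $(F_{\pm}W_{\pm}(t)u)(\xi)=(2\pi)^{-n/2}\langle u^{\pm}_{\xi},e^{it(\Delta+\|\xi\|^{2})}u\rangle$ using Part 2, differentiates in $t$, substitutes the Lippmann--Schwinger relation \eqref{LS}, and integrates back to obtain $F_{\pm}W_{\pm}u=Fu$ for Schwartz $u$; Lemma \ref{SRC} is not invoked here at all. Your stationary-phase/Sommerfeld sketch is a legitimate alternative in the literature, but as written it is the vaguest step in your proposal, whereas the paper's Abelian computation is short and explicit. The use of completeness of $W_{\pm}$ to upgrade isometry to unitarity in Part 1 is the same in both approaches.
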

\begin{proof} 
1) We adapt to our framework the  reasonings in \cite[Section 6]{Agm} (see also \cite{Ike}). By \eqref{Krein}, one has, for any $z\in Z_{\Pi,\Theta}$,
\begin{align*}
&( F(-\Delta_{\Pi,\Theta}+z)^{-1}u)(\xi)\\
=&(\,\|\xi\|^{2}+z)^{-1}Fu(\xi)+
(FG_{z}\Pi^{\prime}(\Theta+\Pi M_{z}\Pi^{\prime})^{-1}\Pi\gamma(-\Delta+z)^{-1}u)(\xi)\\
=&(\,\|\xi\|^{2}+z)^{-1}Fu(\xi)\\
&+\frac{1}{(2\pi)^{n/2}}\,
\langle(\Theta+\Pi M_{\bar z}\Pi^{\prime})^{-1}\gamma(-\Delta+\bar z)^{-1}u^{\circ}_{\xi},\gamma(-\Delta+z)^{-1}u)\rangle\\
=&(\,\|\xi\|^{2}+z)^{-1}\breve u_{z}(\xi)\,,
\end{align*}
where
$$
\breve u_{z}(\xi):=Fu(\xi)+\frac{1}{(2\pi)^{n/2}}\,
\langle(\Theta+\Pi M_{\bar z}\Pi^{\prime})^{-1}\gamma u^{\circ}_{\xi},\gamma(-\Delta+z)^{-1}u)\rangle
\,.	
$$
Then, for $u\in L^{2}_{\alpha}(\RE^{n})$, we set
$$
\breve u^{\pm}_{-k^{2}}(\xi):=\lim_{\epsilon\downarrow 0}\breve u_{-(k^{2}\pm i\epsilon)}(\xi)\,.
$$
By the Theorems \ref{LAP0} and \ref{teo_LAP}, such a definition is well-posed and
$$\breve u^{\pm}_{-k^{2}}(\xi)=
Fu(\xi)+\frac{1}{(2\pi)^{n/2}}\,
\langle(\Theta+\Pi M^{\mp}_{-k^{2}}\Pi^{\prime})^{-1}\gamma u^{\circ}_{\xi},\gamma R^{\pm}_{-k^{2}}u\rangle\,,
$$
so that, for any $u\in L^{2}_{\alpha}(\RE^{n})$, 
$$
\breve u^{\pm}_{-\|\xi\|^{2}}(\xi)=\F^{\circ}_{\pm}u(\xi)\,.
$$
Then, one has (see \cite[page 191]{Agm} for the reasonings that allow the exchange $\lim\int=\int\lim$)
\begin{align*}
&\langle (E_{b}-E_{a})u,u\rangle_{L^{2}(\RE^{n})}\\
=&\lim_{\epsilon\downarrow 0}\frac{\epsilon}{\pi }\int_{a}^{b}
\|(-\Delta_{\Pi,\Theta}+(\lambda\pm i\epsilon))^{-1}u\|_{L^{2}(\RE^{n})}^{2}\,d\lambda
\\
=&\lim_{\epsilon\downarrow 0}\frac{\epsilon}{\pi }\int_{a}^{b}
\|F(-\Delta_{\Pi,\Theta}+\lambda\pm i\epsilon)^{-1}u)\|_{L^{2}(\RE^{n})}^{2}\,d\lambda\\
=&\lim_{\epsilon\downarrow 0}\frac{\epsilon}{\pi }\int_{a}^{b}\left(\int_{\RE^{n}}|\,\|\xi\|^{2}+\lambda\pm i\epsilon|^{-2}|\breve u_{\lambda\pm i\epsilon}(\xi)|^{2}\,d\xi\right)d\lambda\\
=&\int_{\RE^{n}}\left(\lim_{\epsilon\downarrow 0}\frac{\epsilon}{\pi }\int_{a}^{b}|\,\|\xi\|^{2}+\lambda\pm i\epsilon|^{-2}|\breve u_{\lambda\pm i\epsilon}(\xi)|^{2}\,d\lambda\right)d\xi\,.
\end{align*}
By the known properties of the Poisson integral (see e.g. \cite[equation (6.16)]{Agm}),
\be\begin{split}
\lim_{\epsilon\downarrow 0}&\frac{\epsilon}{\pi }\int_{a}^{b}|\,\|\xi\|^{2}+\lambda\pm i\epsilon|^{-2}|\breve u_{\lambda\pm i\epsilon}(\xi)|^{2}\,d\lambda\\
&=\begin{cases}
|\breve u_{-\|\xi\|^{2}}(\xi)|^{2}\,,& a<-\|\xi\|^{2}<b\\
0\,,&\text{otherwise}\,.\end{cases}
\end{split}
\ee
Therefore 
\be\label{Eba}\begin{split}
\langle (E_{b}-E_{a})u,u\rangle_{L^{2}(\RE^{n})}=&\int_{a<-\|\xi\|^{2}<b}|\breve u_{-\|\xi\|^{2}}(\xi)|^{2}\,d\xi\\=&\int_{a<-\|\xi\|^{2}<b}|\F^{\circ}_{\pm}u(\xi)|^{2}\,d\xi
\end{split}\ee
and so, if $P_{ac}$ denotes the orthogonal projector onto $L^{2}(\RE^{n})_{ac}$,  for any $u\in L^{2}_{\alpha}(\RE^{n})$ one has
\be\label{iso}
\|P_{ac}\,u\|^{2}_{L^{2}(\RE^{n})}=\langle P_{ac}\,u,u\rangle_{L^{2}(\RE^{n})}=\int_{\RE^{n}}|\F^{\circ}_{\pm}u(\xi)|^{2}\,d\xi=\|\F^{\circ}_{\pm}u\|^{2}_{L^{2}(\RE^{n})}\,.
\ee
This shows that $\F^{\circ}_{\pm}$ can be extended by continuity to a bounded map $F_{\pm}\in\B(L^{2}(\RE^{n})$. By \eqref{iso}, one gets $\ker(\F_{\pm})=L^{2}(\RE^{n})_{pp}$ and $\F_{\pm}$ is an isometry from $L^{2}(\RE^{n})_{ac}$ into $L^{2}(\RE^{n})$.  By Theorem \ref{TeoSpec}, $\ran(W_{\pm}^{*})=L^{2}(\RE^{n})$ and so $\ran(F_{\pm})=L^{2}(\RE^{n})$ will be a consequence of \eqref{Omega} which will be proven below. 
\par 
2) By \eqref{Eba} and by the polarization identity, for any $u,v\in L^{2}(\RE^{n})_{ac}$,
\begin{align*}
\langle E_{\lambda}u,v\rangle_{L^{2}(\RE^{n})}=\int_{-\|\xi\|^{2}<\lambda}\overline{F_{\pm}u}(\xi)F_{\pm}v(\xi)\,d\xi
\end{align*}
and so, for any $u,v\in L^{2}(\RE^{n})_{ac}\cap\dom(\Delta_{\Pi,\Theta})$,
\begin{align*}
&\langle \Delta_{\Pi,\Theta}u,v\rangle_{L^{2}(\RE^{n})}=\int_{-\infty}^{0}\lambda
\langle E_{\lambda}u,v\rangle_{L^{2}(\RE^{n})}\,d\lambda\\
=&-
\int_{\RE^{n}}\|\xi\|^{2}\overline{F_{\pm}u}(\xi)F_{\pm}v(\xi)\,d\xi=
-\langle\, \|\cdot\|^{2}F_{\pm}u,F_{\pm}v\rangle_{L^{2}(\RE^{n})}\,.
\end{align*}
\par 3) We equivalently show that $F_{\pm}W_{\pm}u= Fu$ for any $u$ in the Schwartz space of rapidly decreasing functions. Let define  $W_{\pm}(t):=P_{ac}\,e^{-it\Delta_{\Pi,\Theta}}e^{it\Delta}$. Since we are assuming the existence of the strong limits which define $W_{\pm}$, such limits can be replaced by the Abelian ones (see e.g. Corollary 14 and Lemma 15 in \cite[Section 6.1.2]{BW}); therefore
\begin{align*}
&F_{\pm}W_{\pm}u=
\lim_{\epsilon\to 0\pm} \epsilon\int_{0}^{\pm\infty}e^{-\epsilon t}
F_{\pm}W_{\pm}(t)u\,dt\\
=&
\lim_{\epsilon\to 0\pm} \int_{0}^{\pm\infty}e^{-\epsilon t}\,
\frac{d\ }{dt}F_{\pm}W_{\pm}(t)u\,dt+F_{\pm}u\,.
\end{align*}
The map $F_{\pm}$ diagonalize $P_{ac}\,\Delta_{\Pi,\Theta}$, thus
$$
(F_{\pm}W_{\pm}(t) u)(\xi)=\frac1{(2\pi)^{n/2}}\,\big\langle {u^{\pm}_{\xi}},e^{it(\Delta+\|\xi\|^{2})}u\big\rangle
$$ 
(here and below $\langle\cdot,\cdot\rangle$ denotes the $L^{2}_{-\alpha}$-$L^{2}_{\alpha}$ duality).  Since $(\Delta+\|\xi\|^{2}){u^{\circ}_{\xi}}=0$ and $u^{\pm}_{\xi}=u^{\circ}_{\xi}+G^{\mp}_{-\|\xi\|^{2}}[\gamma]u^{\pm}_{\xi}$, we get
\begin{align*}
&e^{-\epsilon t}\,\frac{d\ }{dt}\,(F_{\pm}W_{\pm}(t) u)(\xi)
=\frac{ie^{-\epsilon t}}{(2\pi)^{n/2}}\,\big\langle{u^{\pm}_{\xi}},(\Delta+\|\xi\|^{2})e^{it(\Delta+\|\xi\|^{2})}u\big\rangle\\
=&\frac{ie^{-\epsilon t}}{(2\pi)^{n/2}}\left(\big\langle(\Delta+\|\xi\|^{2}){u^{\circ}_{\xi}},e^{it(\Delta+\|\xi\|^{2})}u\big\rangle\right.\\
&\left.+\big\langle G^{\mp}_{-\|\xi\|^{2}}[\gamma]u^{\pm}_{\xi},(\Delta+\|\xi\|^{2})e^{it(\Delta+\|\xi\|^{2})}u\big\rangle\right)\\
=&\frac{i}{(2\pi)^{n/2}}\,\big\langle G^{\mp}_{-\|\xi\|^{2}}[\gamma] u^{\pm}_{\xi}, 
(\Delta+\|\xi\|^{2})e^{it(\Delta+\|\xi\|^{2}+i\epsilon)}u\big\rangle
\\
=&\frac{1}{(2\pi)^{n/2}}\,\frac{d\ }{dt}\,\big\langle G^{\mp}_{-\|\xi\|^{2}}[\gamma] u^{\pm}_{\xi}, 
(\Delta+\|\xi\|^{2})(\Delta+\|\xi\|^{2}+i\epsilon)^{-1}e^{it(\Delta+\|\xi\|^{2}+i\epsilon)}u\big\rangle.
\end{align*}
Therefore, a.e. (eventually taking the limit along a subsequence)
\begin{align*}
&(F_{\pm}W_{\pm}u)(\xi)\\=&\lim_{\epsilon\to 0\pm}\int_{0}^{\pm\infty}e^{-\epsilon t}\,\frac{d\ }{dt}\,(F_{\pm}W_{\pm}(t) u)(\xi)\,dt+ F_{\pm}u(\xi)\\
=&-\frac{1}{(2\pi)^{n/2}}\lim_{\epsilon\to 0\pm}\big\langle G^{\mp}_{-\|\xi\|^{2}}[\gamma] u^{\pm}_{\xi}, 
(\Delta+\|\xi\|^{2})(\Delta+\|\xi\|^{2}+i\epsilon)^{-1}u\big\rangle+ F_{\pm}u(\xi)\\
=&-\frac{1}{(2\pi)^{n/2}}\,\big\langle G^{\mp}_{-\|\xi\|^{2}}[\gamma] u^{\pm}_{\xi}, 
u\big\rangle+ F_{\pm}u(\xi)\\
=&Fu(\xi)\,.
\end{align*}
\end{proof}
Let us now introduce the scattering operator $S:=W_{+}^{*}W_{-}$,  so that, by $W_{\pm}=F^{*}_{\pm}F$, one gets $FSF^{*}=F_{+}F_{-}^{*}$. The scattering matrix 
$$
S_k:L^{2}(\Sf^{n-1})\to L^{2}(\Sf^{n-1})\,,
$$
where $\Sf^{n-1}$ denotes the unit sphere in $\RE^{n}$, 
is then defined  by the relation
$$
S_{k}(Fu)_{k}=(F{Su})_{k}\,,\quad (Fu)_{k}(\hat\xi):=F u(k\hat\xi)\,.
$$
Therefore
$$
S_{k}(\F_{\!-}u)_{k}=(\F_{\!+}u)_{k}\,.
$$
The next results shows how the kernel (proportional to the scattering amplitude) of the linear operator $\uno-S_{k}$ can be expressed  in terms of the 
limit Weyl functions $\Theta+\Pi M^{\pm}_{-k^{2}}\Pi'$; here $\mu$ denotes Lebesgue measure on 
$\Sf^{n-1}$.
\begin{theorem}\label{SM} Assume either \eqref{Theta_reg} or \eqref{Theta_reg2} holds, so that $W_\pm$ and hence $S_{k}$ exist. Then, for any $k>0$ such that $-k^{2}\in E^{-}_{\Pi,\Theta}$,
\begin{align*}
S_{k}f(\hat\xi)=f(\hat\xi)-\int_{\Sf^{n-1}}s_{k}(\hat\xi,\hat \xi')\, f(\hat \xi')\,d\mu(\hat \xi')\,,
\end{align*}
where
\begin{align*}
s_{k}(\hat\xi,\hat \xi'):=&\frac{i}{4\pi}\,\left(\frac{k}{2\pi}\right)^{n-2}\langle \Pi\gamma u^{\circ}_{k\hat\xi'},(  \Theta+\Pi M_{-k^{2}}^{-}\Pi^{\prime})  ^{-1}\Pi\gamma u^{\circ}_{k\hat\xi}\rangle\\
=&\frac{i}{4\pi}\,\left(\frac{k}{2\pi}\right)^{n-2}\langle (  \Theta+\Pi M_{-k^{2}}^{+}\Pi^{\prime})  ^{-1}\Pi\gamma u^{\circ}_{k\hat \xi'},\Pi\gamma u^{\circ}_{k\hat\xi}\rangle\,.
\end{align*}
In the case \eqref{Theta_reg} holds, one also has the equivalent representation
\begin{align*}
s_{k}(\hat\xi,\hat \xi')=&\frac{i}{4\pi}\,\left(\frac{k}{2\pi}\right)^{n-2}\langle \Pi\gamma u^{\circ}_{k\hat\xi'},(B_\Theta-\Pi \gamma G^{-}_{-k^{2}}\Pi^{\prime})^{-1}\Pi\gamma u^{\circ}_{k\hat\xi}\rangle\\
=&\frac{i}{4\pi}\,\left(\frac{k}{2\pi}\right)^{n-2}\langle (B_\Theta-\Pi \gamma G^{+}_{-k^{2}}\Pi^{\prime})^{-1}\Pi\gamma u^{\circ}_{k\hat\xi'},\Pi\gamma u^{\circ}_{k\hat\xi}\rangle
\,.
\end{align*}
\end{theorem}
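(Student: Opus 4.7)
The plan is to compute $F_+u(k\hat\xi) - F_-u(k\hat\xi)$ on the energy shell $\|\xi\|=k$ and read off $s_k$ via the identification $(S_k-\uno)f = -\int_{\Sf^{n-1}} s_k(\cdot,\hat\xi')f(\hat\xi')\,d\mu(\hat\xi')$ applied to $f = (F_-u)_k$ with $S_k f = (F_+u)_k$. Starting from
\[
F_\pm u(\xi) = Fu(\xi) + \tfrac{1}{(2\pi)^{n/2}}\langle L^\mp_{\Pi,\Theta,-k^2}\Pi\gamma u^\circ_\xi,\gamma R^\pm_{-k^2}u\rangle
\]
(Theorem \ref{scattering}), I decompose the difference $F_+u(\xi)-F_-u(\xi)$ by adding and subtracting a suitable mixed term so as to obtain a sum of two pieces: one proportional to $\gamma R^+_{-k^2}-\gamma R^-_{-k^2}$ and one proportional to $L^-_{\Pi,\Theta,-k^2}-L^+_{\Pi,\Theta,-k^2}$.

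Each piece is then converted into an integral over $\Sf^{n-1}$. From Stone's formula for $-\Delta$ together with the Fourier transform and the spherical-coordinate identity $\int f(\xi)\delta(\|\xi\|^2-k^2)d\xi = \tfrac{k^{n-2}}{2}\int_{\Sf^{n-1}} f(k\hat\eta)\,d\mu(\hat\eta)$, one obtains
\[
(\gamma R^+_{-k^2}-\gamma R^-_{-k^2})u = \tfrac{i\pi k^{n-2}}{(2\pi)^{n/2}}\int_{\Sf^{n-1}}\gamma u^\circ_{k\hat\eta}\,Fu(k\hat\eta)\,d\mu(\hat\eta),
\]
and taking the transpose of this identity via $(G^\pm_{-k^2})'=\gamma R^\mp_{-k^2}$ from Lemma \ref{Lemma_G_z} yields an analogous representation for $G^+_{-k^2}-G^-_{-k^2}$ as an integral against $u^\circ_{k\hat\eta}$. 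Plugging this into the resolvent formula $L^--L^+ = L^-\Pi(M^+_{-k^2}-M^-_{-k^2})\Pi'L^+ = -L^-\Pi\gamma(G^+_{-k^2}-G^-_{-k^2})\Pi'L^+$ (where $M^+_{-k^2}-M^-_{-k^2}=\gamma(G^-_{-k^2}-G^+_{-k^2})$ is immediate from $M_z=\gamma(G-G_z)$) together with the identity $\Pi'L^+\Pi\gamma u^\circ_\xi = [\hat\gamma]u^-_\xi$ converts the second piece into a sphere integral as well. The decisive simplification is that the residual pairing $\langle L^-\Pi\gamma u^\circ_{k\hat\eta},\gamma R^+_{-k^2}u\rangle$ that arises after substitution equals $(2\pi)^{n/2}(F_+u(k\hat\eta)-Fu(k\hat\eta))$ by the very definition of $F_+u$; the $Fu(k\hat\eta)$ contributions from the two pieces cancel exactly, yielding the closed identity
\[
F_+u(k\hat\xi)-F_-u(k\hat\xi) = \tfrac{i\pi k^{n-2}}{(2\pi)^n}\int_{\Sf^{n-1}}\langle L^+_{\Pi,\Theta,-k^2}\Pi\gamma u^\circ_{k\hat\xi},\Pi\gamma u^\circ_{k\hat\eta}\rangle F_+u(k\hat\eta)\,d\mu(\hat\eta).
\]
The complementary splitting (with the mixed term using $\gamma R^-_{-k^2}$ in place of $\gamma R^+_{-k^2}$) produces the same left-hand side as an integral against $F_-u(k\hat\eta)$ with $L^-_{\Pi,\Theta,-k^2}$ in the kernel. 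After simplifying $\tfrac{\pi}{(2\pi)^n}=\tfrac{1}{4\pi}(2\pi)^{-(n-2)}$ and matching against $(S_k-\uno)f$, one obtains the two equivalent expressions claimed for $s_k$; the representation under hypothesis \eqref{Theta_reg} in terms of $B_\Theta-\Pi\gamma G^\pm_{-k^2}\Pi'$ follows at once from Corollary \ref{rem}.

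The main obstacle is the algebraic closure step: after substituting the two sphere-integral representations into the decomposition of $F_+u-F_-u$, one must recognise that the residual pairing coming from the $(L^--L^+)$ piece combines with the free-Fourier $Fu(k\hat\eta)$ contribution from the $(\gamma R^+-\gamma R^-)$ piece to give precisely $(2\pi)^{n/2}(F_\pm u(k\hat\eta)-Fu(k\hat\eta))$, so that the free Fourier contributions cancel and only the clean integral against $F_\pm u(k\hat\eta)$ survives. This requires careful tracking of the conventions in the duality pairings $\langle\cdot,\cdot\rangle$ (scalars conjugating when moved across the first argument, the swap rule $\overline{\langle a,b\rangle}=\langle b,a\rangle$, and the identity $\langle v,\Pi w\rangle=\langle\Pi'v,w\rangle$ used to absorb $\Pi$ and $\Pi'$ into the plane-wave traces $\Pi\gamma u^\circ_{k\hat\eta}$ appearing in the final kernel).
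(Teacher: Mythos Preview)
Your approach is correct and genuinely different from the paper's. The paper does not compute $F_{+}u-F_{-}u$ algebraically; instead it defines the auxiliary function
\[
v_{k\hat\xi}:=u^{-}_{k\hat\xi}-u^{+}_{k\hat\xi}-\int_{\Sf^{n-1}}s_{k}(\hat\xi,\hat\xi')\,u^{-}_{k\hat\xi'}\,d\mu(\hat\xi')
\]
and, using the Lippmann--Schwinger representation \eqref{LS} together with the spherical identity $\int_{\Sf^{n-1}}\bar u^{\circ}_{k\hat\xi}(x)u^{\circ}_{k\hat\xi}(y)\,d\mu(\hat\xi)=4\pi i(2\pi/k)^{n-2}(\G^{-}_{-k^{2}}-\G^{+}_{-k^{2}})(x-y)$, shows that $v_{k\hat\xi}=G^{+}_{-k^{2}}[\gamma]v_{k\hat\xi}$. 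By Lemma~\ref{SRC} this forces $v_{k\hat\xi}=0$, and pairing the resulting relation $u^{+}_{k\hat\xi}=u^{-}_{k\hat\xi}-\int s_{k}\,u^{-}_{k\hat\xi'}$ against $u\in L^{2}_{\alpha}$ gives the theorem. Your route trades this analytic input (Sommerfeld radiation condition plus the uniqueness statement in Lemma~\ref{SRC}(2)) for the purely algebraic resolvent identity $L^{-}-L^{+}=-L^{-}\Pi\gamma(G^{+}_{-k^{2}}-G^{-}_{-k^{2}})\Pi'L^{+}$ combined with Stone's formula; this is essentially the stationary Birman--Yafaev computation alluded to in Remark~\ref{BY}. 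The paper's proof is shorter because Lemma~\ref{SRC} is already in hand and the radiation-condition uniqueness kills $v_{k\hat\xi}$ in one stroke; your argument is more self-contained but, as you correctly anticipate, the bookkeeping of sesquilinear versus bilinear pairings and of the $\pm$ superscripts is where all the work lies. One small caution: only the splitting that produces an integral against $(F_{-}u)_{k}$ yields $s_{k}$ directly via $S_{k}(F_{-}u)_{k}=(F_{+}u)_{k}$; the identity against $(F_{+}u)_{k}$ gives the kernel of $\uno-S_{k}^{*}$, and to recover the second displayed expression for $s_{k}$ from it you must also invoke $(L^{+}_{\Pi,\Theta,-k^{2}})^{*}=L^{-}_{\Pi,\Theta,-k^{2}}$, which follows from $M_{\bar z}=M_{z}^{*}$ and the self-adjointness of $\Theta$.
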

\begin{proof} Here we follow the same strategy as in \cite{Sch} and \cite{AS}.
By the definition of $S_{k}$ we only need to show that, for any $u\in L^{2}_{\alpha}(\RE^{n})$, one has
\be\label{eq1}
(\F_{+}u)_{k}(\hat\xi)=(\F_{-}u)_{k}(\hat\xi)-\int_{\Sf^{n-1}}
 s_{k}(\hat\xi,\hat\xi')\, (\F_{-}u)_{k}(\hat\xi')
\,d\mu(\hat\xi')\,.
\ee
Let us define the auxiliary functions
$$
v_{k\hat\xi}:=u^{-}_{k\hat\xi}-u^{+}_{k\hat\xi}-\int_{\Sf^{n-1}}
 s_{k}(\hat\xi,\hat\xi')\, 
u^{-}_{k\hat\xi'}\,d\mu(\hat\xi')\,.
$$
By \eqref{LS}, 
\begin{align*}
v_{k\hat\xi}=&G_{-k^{2}}^{+}[\gamma]u^{-}_{k\hat\xi}-G_{-k^{2}}^{-}[\gamma]u^{+}_{k\hat\xi}
-\int_{\Sf^{n-1}}s_{k}(\hat\xi,\hat\xi')\,G_{-k^{2}}^{+}[\gamma]u^{-}_{k\hat\xi'}
 \, d\mu(\hat\xi')\\
-&\frac{i}{4\pi}\,\left(\frac{k}{2\pi}\right)^{n-2}\int_{\Sf^{n-1}}\langle \gamma u^{\circ}_{k\hat\xi'},[\gamma]u^{+}_{k\hat\xi}\rangle u^{\circ}_{k\hat\xi'}\, d\mu(\hat\xi')\,.
\end{align*}
By
$$
\int_{\Sf^{n-1}} \bar u^{\circ}_{k\hat\xi}(x)\,u^{\circ}_{k\hat\xi}(y)\,d\mu(\hat\xi)=4\pi i\left(\frac{2\pi}{k}\right)^{n-2}\!\!\left(\G^{-}_{-k^{2}}(x-y)
-\G^{+}_{-k^{2}}(x-y)\right)$$
(see \cite[formula (15)]{AS}) and by \eqref{K1}-\eqref{K2}, one gets
$$
-\frac{i}{4\pi}\,\left(\frac{k}{2\pi}\right)^{n-2}\!\!\int_{\Sf^{n-1}}\langle \gamma u^{\circ}_{k\hat\xi'},[\gamma]u^{+}_{k\hat\xi}\rangle\,u^{\circ}_{k\hat\xi'}
 \,d\mu(\hat\xi')=\left(G^{-}_{-k^{2}}-G^{+}_{-k^{2}}\right)
 [\gamma]u^{+}_{k\hat\xi}
$$
and so 
$$
v_{k\hat\xi}=G^{+}_{-k^{2}}[\gamma]v_{k\hat\xi}\,.
$$
Therefore, by 1) in Lemma \ref{SRC}, $v_{k\hat\xi}$ satisfies the Sommerfeld radiation condition. 
Since $u^{\pm}_{k\hat\xi}\in\ker(\t\Delta_{\Pi,\Theta}+k^{2})$, one has $v_{k\hat\xi}\in\ker(\t\Delta_{\Pi,\Theta}+k^{2})$. Thus, 
 by 2) in Lemma \ref{SRC}, $v_{k\hat\xi}=0$ and so 
$$
u^{+}_{k\hat\xi}=u^{-}_{k\hat\xi}-\int_{\Sf^{n-1}}
 s_{k}(\hat\xi,\hat\xi')\, 
u^{-}_{k\hat\xi'}\,d\mu(\hat\xi')\,.
$$
Considering the duality product of both the left and right functions with $u\in L^{2}_{\alpha}(\RE^{n})$, one gets \eqref{eq1} and the proof is done.
\end{proof}
\begin{remark}\label{BY} Given $\mu\in (0,+\infty)\cap\rho(\Delta_{\Pi,\Theta})$, let $W^{\mu}_{\pm}$ denote the wave operators for the scattering couple $\big((-\Delta_{\Pi,\Theta}+\mu)^{-1}, (-\Delta+\mu)^{-1}\big)$. Since both $W_{\pm}$ and $W^{\mu}_{\pm}$ exist and are complete, by the Birman-Kato invariance principle one gets $W^{\mu}_{\pm}=W_{\pm}$. By \eqref{Krein} one has 
\be\label{RD}
(-\Delta_{\Pi,\Theta}+\mu)^{-1}-(-\Delta+\mu)^{-1}=G_{\mu}\Pi'(\Theta+\Pi M_{\mu}\Pi')^{-1}\Pi G_{\mu}'\,.
\ee
The Birman-Yafaev general scheme in stationary scattering theory (see e.g. \cite{BY}, \cite{Y}, \cite{Y1}), conditional on the existence of the limit operator $$
B^{+}_{\lambda}:=\lim_{\epsilon\downarrow 0}B_{\lambda+i\epsilon}\,,\quad
B_{z}:=\Pi G_{\mu}'\big((-\Delta+\mu)^{-1}-z\big)^{-1}G_{\mu}\Pi'
$$
and of the inverse $(\uno+B^{+}_{\lambda}(\Theta+\Pi M_{\mu}\Pi')^{-1})^{-1}$,
allows the representation formula for the scattering matrix $S^{\mu}_{\lambda}$, corresponding to the scattering operator $S^{\mu}= (W_{+}^{\mu})^{*}W^{\mu}_{-}$, given by (see e.g. \cite[equation (2.8)]{Y1}) 
\begin{equation}\label{S}
S^{\mu}_{\lambda}=\uno-2\pi iL_{\lambda}\Pi'(\Theta+\Pi M_{\mu}\Pi')^{-1}(\uno+B^{+}_{\lambda}(\Theta+\Pi M_{\mu}\Pi')^{-1})^{-1}\Pi L_{\lambda}'\,.
\end{equation}
Here  $$
L_\lambda: H^{-3/2}(\Gamma)\oplus H^{-1/2}(\Gamma)\to L^{2}(\Sf^{n-1})\,,$$
$$ (L_{\lambda}(\phi\oplus\varphi))(\hat\xi):=[(F_{0}G_{\mu}(\phi\oplus\varphi))(\lambda)](\hat\xi)$$ 
is defined in terms of the unitary map $F_{0}:L^{2}(\RE^{n})\to L^{2}((0,\mu^{-1});\Sf^{n-1})$ such that the operator $F_{0}(-\Delta+\mu)^{-1}F^{*}_{0}$ acts as multiplication by $\lambda$, i.e. 
$$
[(F_{0}u)(\lambda)](\hat\xi):=2^{-\frac12}\left(\frac1\lambda-\mu\right)^{\frac{n-2}4}(Fu)\left(\left(\frac1\lambda-\mu\right)^{\frac12}\hat\xi\right)\,.
$$
By the identities 
$$
\left((-\Delta+\mu)^{-1}-z\right)^{-1}=-\frac1z\,\left(\uno+\frac1{z}\,\left(-\Delta+\mu-\frac1z\right)^{-1}\right)\,,
$$
$$
G'_{\mu}\left(-\Delta+\mu-\frac1z\right)^{-1}=z\left(G_{\mu-\frac1z}'-G_{\mu}'\right)\,,
$$
$$
M_{\mu-\frac1z}=M_{\mu}-\frac1z\,G_{\mu-\frac1z}'G_{\mu}\,,
$$
one obtains
\begin{align*}
&(\Theta+\Pi M_{\mu}\Pi')^{-1}(\uno+B_{z}(\Theta+\Pi M_{\mu}\Pi')^{-1})^{-1}=
(\Theta+\Pi M_{\mu}\Pi'-B_{z})^{-1}\\
=&\left(\Theta+\Pi M_{\mu}\Pi'-\Pi G_{\mu}'\left(\frac{\uno}z+\frac1{z^{2}}\,\left(-\Delta+\mu-\frac1z\right)^{-1}\right)G_{\mu}\Pi'\right)^{-1}\\
=&\left(\Theta+\Pi M_{\mu}\Pi'-\frac1z\,\Pi G_{\mu}'G_{\mu}\Pi'-\frac1z\,\Pi\left(G_{\mu-\frac1z}'-G_{\mu}'\right)G_{\mu}\Pi'\right)^{-1}\\
=&\left(\Theta+\Pi M_{\mu}\Pi'-\frac1z\,\Pi G_{\mu-\frac1z}'G_{\mu}\Pi'\right)^{-1}=\left(\Theta+\Pi M_{\mu-\frac1z}\Pi'\right)^{-1}\,.
\end{align*}
Therefore, by Theorems \ref{LAP} and \ref{teo_LAP}, both $B^{+}_{\lambda}$ and  $(\uno+B^{+}_{\lambda}(\Theta+\Pi M_{\mu}\Pi')^{-1})^{-1}$ are well defined and 
\be\label{S1}
S^{\mu}_{\lambda}=\uno-2\pi iL_{\lambda}\Pi'\big(\Theta+\Pi M^{+}_{\mu-\frac1\lambda}\Pi'\big)^{-1}\Pi L_{\lambda}'\,,\qquad \mu-\frac1\lambda\in E^{-}_{\Pi,\Theta}\,.
\ee
In case Theorems \ref{LAP} and \ref{teo_LAP} were not available, using the results contained in \cite[Chapter 7, Sections 4 and 6]{Y}, the representation formula \eqref{S} could be still obtained under Kato-smoothness or trace-class hypotheses on the resolvent difference \eqref{RD}. However for the models we are here considering, the trace-class condition is not always fulfilled while checking the smoothness property may be a substantial problem (see e.g. \cite[Chapter 7, Sections 4, Proposition 1]{Y}).
\par Finally, using the correspondence $S^{\mu}_{\lambda}=S_{k}$, which holds whenever $\mu-\frac1\lambda=-k^{2}$ (see \cite[Section 6, Chapter 2]{Y}), and the identity
$$
(L_{\lambda}(\phi\oplus\varphi))(\hat\xi)=
2^{-\frac12}\left(\frac1\lambda-\mu\right)^{\frac{n-2}4}{(2\pi)^{-\frac{n}2}}\,\left\langle \gamma u^{\circ}_{\left(\lambda^{-1}-\mu\right)^{\frac12}\hat\xi},\phi\oplus\varphi\right\rangle\,,
$$
one gets that \eqref{S1} matches the formula provided in Theorem \ref{SM}. 
\end{remark}
\end{section}

\begin{section}{Examples: Boundary conditions on $\Gamma$}
In this section we apply our results to self-adjoint realizations of the Laplacian with various kind of boundary conditions on $\Gamma$. For more details on such models we refer to \cite[Section 5]{MaPoSi}. In particular, by the results given there, hypothesis \eqref{Theta_reg} holds for all the examples presented  here. As regards the semi-boundedness hypotheses required in Theorem \ref{Theorem_LAP}, the semi-boundedness of the operators $\Delta_{D}$ and $\Delta_{N}$ in subsections \ref{dirichlet} and \ref{neumann} is clear, semi-boundedness of  $\Delta_{R}$ in subsection \ref{robin} is provided in \cite[Remark 5.2]{MaPoSi} and semiboundedness of $\Delta_{\alpha,\delta}$ and $\Delta_{\beta,\delta'}$ in subsections \ref{delta} and \ref{deltaprimo} is provided in  \cite[Theorem 3.16]{BLL} (see also the next Section, the proofs being essentially the same). In the following, in order to simplify the exposition, we suppose that $\Omega_{\+}$ is connected.
\begin{subsection}{Dirichlet boundary conditions.}\label{dirichlet} 
Let us consider the self-adjoint extension $\Delta_{D}$ corresponding to Dirichlet boundary conditions on the whole $\Gamma$; it is given by the direct sum $\Delta_{D}=\Delta^{D}_{{\-}}\oplus \Delta^{D}_{{\+}}$, where the self-adjoint operators $\Delta^{D}_{\-}$ and $\Delta^{D}_{\+}$ are defined by $\Delta^{D}_{\-}:=\Delta|\dom(\Delta^{D}_{\-})$ and $\Delta^{D}_{\+}:=\Delta|\dom(\Delta^{D}_{\+})$, with  domains
$\dom(\Delta^{D}_{\-})=\{u_{\-}\in H^{2}(\Omega_{\-}):\gamma^{\-}_{0}u_{\-}=0\}$  
and  
$\dom(\Delta^{D}_{\+})=\{u_{\+}\in H^{2}(\Omega_{\+}):\gamma^{\+}_{0}u_{\+}=0\}$. 
Since 
\begin{align*}
\dom(\Delta^{D}_{\-})\oplus\dom(\Delta^{D}_\+)=&\{u\in H^{2}(\RE^{n}\backslash\Gamma): [\gamma_{0}]u=0\,,\gamma_{0}u=0\}\\
=&\{u\in H^{1}(\RE^{n})\cap H^{2}(\RE^{n}\backslash\Gamma):\gamma_{0}u=0\}\,,
\end{align*}
that corresponds, in Corollary \ref{Lemma_parameter}, to the choice  $\Pi(\phi\oplus\varphi):=\phi\oplus 0$, and 
$B_{\Theta}=0$.
Thus (see \cite[Subsection 5.1]{MaPoSi})
$$
(\Delta^{D}_{{\-}}\oplus \Delta^{D}_{{\+}})u=\Delta u-[\gamma_{1}]u\,\delta_{\Gamma}
$$
and, by $(\gamma_{0}\SL_{z})^{-1}=P_{z}^{\-}-P_{z}^{\+}$, where $P_{z}^{\-}$ and $P_{z}^{\+}$ denote the Dirichlet-to-Neumann operators for $\Omega_{\-}$ and $\Omega_{\+}$ respectively (see e.g. \cite[equation (5.4)]{MaPoSi}), one has, for any $z\in \CO\backslash(-\infty,0]$,
\begin{align*}
(-(\Delta^{D}_{{\-}}\oplus \Delta^{D}_{{\+}})+z)^{-1}=&
(-\Delta+z)^{-1}+\SL_{z}(P_{z}^{\+}-P^{\-}_{z})\gamma_{0}(-\Delta+z)^{-1}\,.
\end{align*}
Then, by Theorem \ref{SM}, one has, for any $k>0$ such that $-k^{2}\notin \sigma(\Delta^{D}_{\-})$,
\begin{align*}
s_{k}(\hat\xi,\hat \xi')
=&\frac{i}{4\pi}\,\left(\frac{k}{2\pi}\right)^{n-2}\langle (P_{-k^{2}}^{\+}-P^{\-}_{-k^{2}})^{+}\gamma_{0} u^{\circ}_{k\hat\xi'},\gamma_{0} u^{\circ}_{k\hat\xi}\rangle\,,
\end{align*}
where 
$$
(P_{-k^{2}}^{\+}-P^{\-}_{-k^{2}})^{+}:=\lim_{\epsilon\downarrow 0}\, (P_{-k^{2}+i\epsilon}^{\+}-P^{\-}_{-k^{2}+i\epsilon})=-(\gamma_{0}\SL^{+}_{-k^{2}})^{-1}\,.
$$
Such a limit exists in $\B(H^{{\frac32}}(\Gamma), H^{-{\frac32}}(\Gamma))$ by Theorem \ref{teo_LAP}. Notice that, restricted to the case $n=2$, similar formulae have been obtained (without the smoothness condition on $\Gamma$) in \cite[Theorems 5.3 and 5.6]{EP1}.
\end{subsection}
\begin{subsection}{Neumann boundary conditions.}\label{neumann}  
Let us consider the self-adjoint extension $\Delta^{N}$ corresponding to Neumann boundary conditions on the whole $\Gamma$; it is given by the direct sum $\Delta_{N}=\Delta^{N}_{{\-}}\oplus \Delta^{N}_{{\+}}$, where 
the self-adjoint operators $\Delta^{N}_{\-}$ and $\Delta^{N}_{\+}$ are defined by $\Delta^{N}_{\-}:=\Delta|\dom(\Delta^{N}_{\-})$ and $\Delta^{N}_{\+}:=\Delta|\dom(\Delta^{N}_{\+})$, with domains 
$\dom(\Delta^{N}_{\-})=\{u_{\-}\in H^{2}(\Omega_{\-}):\gamma_{1}^{\-}u_{\-}=0\}$ and $\dom(\Delta^{N}_{\+})=\{u_{\+}\in H^{2}(\Omega_{\+}):\gamma_{1}^{\+}u_{\+}=0\}$. 
Since 
$$
\dom(\Delta^{N}_{\-})\oplus\dom(\Delta^{N}_\+)=\{u\in H^{2}(\RE^{n}\backslash\Gamma): [\gamma_{1}]u=\gamma_{1}u=0\}\,,
$$
that corresponds, in Corollary \ref{Lemma_parameter}, to the choice  $\Pi(\phi\oplus\varphi):=0\oplus\varphi$, and 
$B_{\Theta}=0$. Thus (see \cite[Subsection 5.2]{MaPoSi})
$$
(\Delta^{N}_{{\-}}\oplus \Delta^{N}_{{\+}})u=\Delta u-[\gamma_{0}]u\,\nu\!\cdot\!\nabla\delta_{\Gamma}\,,
$$
and, by $(\gamma_{1}\DL_{z})^{-1}=Q_{z}^{\+}-Q_{z}^{\-}$, where $Q_{z}^{\-}$ and $Q_{z}^{\+}$ denote the Neumann-to-Dirichlet operators for $\Omega_{\-}$ and $\Omega_{\+}$ respectively (see e.g. \cite[equation (5.7)]{MaPoSi}), one has, for any $z\in \CO\backslash(-\infty,0]$,
\begin{align*}
(-(\Delta^{N}_{{\-}}\oplus \Delta^{N}_{{\+}})+z)^{-1}=&
(-\Delta+z)^{-1}+\DL_{z}(Q_{z}^{\-}-Q^{\+}_{z})\gamma_{1}(-\Delta+z)^{-1}\,.
\end{align*}
Then, by Theorem \ref{SM}, one has, for any $k>0$ such that $-k^{2}\notin \sigma(\Delta^{N}_{\-})$,
\begin{align*}
s_{k}(\hat\xi,\hat \xi')
=&\frac{i}{4\pi}\,\left(\frac{k}{2\pi}\right)^{n-2}\langle (Q_{-k^{2}}^{\-}-Q^{\+}_{-k^{2}})^{+}\gamma_{1} u^{\circ}_{k\hat\xi'},\gamma_{1} u^{\circ}_{k\hat\xi}\rangle\,,
\end{align*}
where 
$$
(Q_{-k^{2}}^{\-}-Q^{\+}_{-k^{2}})^{+}:=\lim_{\epsilon\downarrow 0}\, (Q_{-k^{2}+i\epsilon}^{\-}-Q^{\+}_{-k^{2}+i\epsilon})=-(\gamma_{1}\DL^{+}_{-k^{2}})^{-1}\,.
$$
Such a limits exists in $\B(H^{{\frac12}}(\Gamma), H^{-{\frac12}}(\Gamma))$ by Theorem \ref{teo_LAP}. Notice that, restricted to the case $n=2$, similar formulae have been obtained (without the smoothness condition on $\Gamma$) in \cite[Theorems 4.2 and 4.3]{EP3}.
\end{subsection}  
\begin{subsection}{Robin boundary conditions.}\label{robin} Let us consider the self-adjoint extension $\Delta_{R}$ corresponding to Robin boundary conditions on the whole $\Gamma$; it is given by the direct sum $\Delta_{R}=\Delta^{R}_{{\-}}\oplus \Delta^{R}_{{\+}}$, where 
 $$\Delta^{R}_{\-}:=\Delta|\dom(A^{R}_{\-})\,,\qquad \Delta^{R}_{\+}:=\Delta|\dom(\Delta^{R}_{\+}) \,,
 $$
 $$
\dom(\Delta^{R}_{\-})=\{u_{\-}\in \dom(\Delta_{\-}^{\max}):\gamma_{1}^{\-}u_{\-}= b_{\-}\,\gamma_{0}^{\-}u_{\-}\}\,,
$$ 
$$
\dom(\Delta^{R}_{\+})=\{u_{\+}\in \dom(\Delta_{\+}^{\max}):\gamma_{1}^{\+}u_{\+}= b_{\+}\,\gamma_{0}^{\+}u_{\+}\}\,.
$$ 
Here $b_{\-}$ and $b_{\+}$ are real-valued multipliers in $H^{\frac12}(\Gamma)$. Since, in case $b_{\+}(x)\not=b_{\-}(x)$ for a.e. $x\in\Gamma$, the domain of $\Delta^{R}_{\-}\oplus \Delta^{R}_{\+}$ is given by 
\begin{align*}
&\dom(\Delta^{R}_{\-}\oplus \Delta^{R}_{\+})\\=
\big\{& u\in H^{2}(\RE^{n}\backslash\Gamma):(b_{\+}-b_{\-})\gamma_{0}u
=
[\gamma_{1}]u-\frac12(b_{\+}+b_{\-})[\gamma_{0}]u\,,\\
&
(b_{\+}-b_{\-}) \gamma_{1}u
=
\frac12(b_{\+}+b_{\-})[\gamma_{1}]u-b_{\+}b_{\-}[\gamma_{0}]u
\big\}\,.
\end{align*} 
that corresponds, in Corollary \ref{Lemma_parameter}, to the choice  $\Pi=1$ and $B_{\Theta}=B_{R}$, where
$$
B_{R}=-\frac{1}{[b]}\left[\,\begin{matrix}1  &\langle b\rangle  \\
\langle b\rangle  &{b_{\+}b_{\-}}
\end{matrix}\,\right]\,,
\quad
 \langle b\rangle:=\frac12(b_{\+}+b_{\-})\,,\quad \quad [b]:=b_{\+}-b_{\-}
\,,
$$
Thus  (see \cite[Subsection 5.3]{MaPoSi})
$$
(\Delta^{R}_{{\-}}\oplus \Delta^{R}_{{\+}})u=\Delta u-\frac4{[b]}\,\left((\langle b\rangle\,\gamma_{1}u-b_\+b_{\-}\gamma_{0}u)\,\delta_{\Gamma}+(\gamma_{1}u-\langle b\rangle\,\gamma_{0}u)\,\nu\!\cdot\!\nabla\delta_{\Gamma}\right)
$$
and, for any $z\in \rho(\Delta^{R}_{\-})\cap\rho(\Delta^{R}_{\+})\cap \CO\backslash(-\infty,0]$,
\begin{align*}
&(-(\Delta^{R}_{{\-}}\oplus \Delta^{R}_{{\+}})+z)^{-1}\\
=&
(-\Delta+z)^{-1}
-G_{z} \left[\,\begin{matrix}1/{[b]}+\gamma_{0}\SL_{z}&\langle b\rangle/[b]+\gamma_{ 0}\DL_{z}\\
\langle b\rangle/[b]+\gamma_{1}\SL_{z}&b_{+}b_{-}/[b]+\gamma_{1}\DL_{z}
\end{matrix}\,\right]^{-1}\gamma(-\Delta+z)^{-1}\,,
\end{align*}
where $G_{z}(\phi\oplus\varphi)=\SL_{z}\phi+\DL_{z}\varphi$. Let us notice that the case in which one has the same Robin boundary conditions on both sides of $\Gamma$ corresponds to the choice $b_{\+}=b_{\circ}=-b_{\-}$. 
\par
Then, by Theorem \ref{SM}, one has, for any $k>0$ such that $-k^{2}\notin \sigma(\Delta^{R}_{\-})$,
\begin{align*}
&s_{k}(\hat\xi,\hat \xi')\\
=&-\frac{i}{4\pi}\left(\frac{k}{2\pi}\right)^{n-2}\!\!\left\langle 
\left[\,\begin{matrix}1/{[b]}+\gamma_{0}\SL^{+}_{-k^{2}}&\langle b\rangle/[b]+\gamma_{ 0}\DL^{+}_{-k^{2}}\\
\langle b\rangle/[b]+\gamma_{1}\SL^{+}_{-k^{2}}&b_{+}b_{-}/[b]+\gamma_{1}\DL^{+}_{-k^{2}}
\end{matrix}\,\right]^{-1}\!\!\!\!\!\gamma u_{k\xi'}^{\circ},\gamma u_{k\xi}^{\circ}\right\rangle.
\end{align*}

\end{subsection}  
\begin{subsection}{$\delta$-interactions.}\label{delta} 
Here we consider the self-adjoint extension corresponding to the choice $\Pi(\phi\oplus\varphi)=\phi\oplus 0$ and 
$\Theta(\phi\oplus\varphi)=-(\phi/\alpha+\gamma_{0}\SL\phi)\oplus 0$, where  $\alpha$ is a real-valued multiplier in $H^{\frac32}(\Gamma)$ such that $1/\alpha\in L^{\infty}(\Gamma)$. Such a kind of self-adjoint extensions correspond to the boundary conditions ${\alpha}\gamma_{0}u=[\gamma_{1}]u$ and so one obtains  the self-adjoint extensions usually called ''$\delta$-interactions on $\Gamma\,$'' (see  \cite{BEK}, \cite{BLL} and references therein). By Corollary \ref{Lemma_parameter}  (see \cite[Subsection 5.4]{MaPoSi}), one gets the  self-adjoint extension
$$
\Delta_{\alpha,\delta}\,u=\Delta u-{\alpha}\gamma_{0}u\,\delta_{\Gamma}\,,
$$
$$
\dom(\Delta_{\alpha,\delta}):=\{u\in H^{1}(\RE^{n})\cap H^{2}(\RE^{n}\backslash\Gamma):{\alpha}\gamma_{0}u=[\gamma_{1}]u\}\,;
$$
its  resolvent is given by 
\begin{align*}
(-\Delta_{\alpha,\delta}+z)^{-1}=&
(-\Delta+z)^{-1}-\SL_{z}((1/\alpha)+\gamma_{0}\SL_{z})^{-1}\gamma_{0}(-\Delta+z)^{-1}
\\=&(-\Delta+z)^{-1}-\SL_{z}(1+\alpha\gamma_{0}\SL_{z})^{-1}\alpha\gamma_{0}(-\Delta+z)^{-1}
\,. 
\end{align*}
Then, by Theorem \ref{SM} and Remark \ref{negative2}, one has, for any $k>0$,
\begin{align*}
s_{k}(\hat\xi,\hat \xi')
=&-\frac{i}{4\pi}\,\left(\frac{k}{2\pi}\right)^{n-2}\langle (1+\alpha\gamma_{0}\SL^{+}_{-k^{2}})^{-1}\alpha\gamma_{0} u^{\circ}_{k\hat\xi'},\gamma_{0} u^{\circ}_{k\hat\xi}\rangle\,.
\end{align*}
\end{subsection}  

\begin{subsection}{$\delta'$-interactions.}\label{deltaprimo} 
Here we consider the self-adjoint extension corresponding to the choice $\Pi(\phi\oplus\varphi)=0\oplus\varphi$ and 
$\Theta(\phi\oplus\varphi)=0\oplus(\varphi/\beta-\gamma_{1}\DL\varphi)$, where  $\beta$ is a real-valued multiplier in $H^{\frac12}(\Gamma)$ such that $1/\beta\in L^{\infty}(\Gamma)$. Such a kind of self-adjoint extensions correspond to the boundary conditions ${\beta}\gamma_{1}u=[\gamma_{0}]u$ and so one obtains  the self-adjoint extensions usually called ''$\delta'$-interactions on $\Gamma\,$'' (see \cite{BLL} and references therein). 
By Corollary \ref{Lemma_parameter} (see \cite[Subsection 5.5]{MaPoSi}), one gets  the  self-adjoint extension
$$
\Delta_{\beta,\delta'}u=\Delta u-{\beta}\gamma_{1}u\,\nu\!\cdot\!\nabla\delta_{\Gamma}\,,
$$
$$
\dom(\Delta_{\beta,\delta'}):=\{u\in H^{2}(\RE^{n}\backslash\Gamma):[\gamma_{1}]u=0\,,\ {\beta}\gamma_{1}u=[\gamma_{0}]u\}\,;
$$
Its resolvent is given by
\begin{align*}
(-\Delta_{\beta,\delta'}+z)^{-1}=&
(-\Delta+z)^{-1}+\DL_{z}((1/\beta)-\gamma_{1}\DL_{z})^{-1}\gamma_{1}(-\Delta+z)^{-1}
\\=&(-\Delta+z)^{-1}+\DL_{z}(1-\beta\gamma_{1}\DL_{z})^{-1}\beta\gamma_{1}(-\Delta+z)^{-1}
\,.
\end{align*}
Then, by Theorem \ref{SM} and Remark \ref{negative2},, one has, for any $k>0$,
\begin{align*}
s_{k}(\hat\xi,\hat \xi')
=&\frac{i}{4\pi}\,\left(\frac{k}{2\pi}\right)^{n-2}\langle (1-\beta\gamma_{1}\DL^{+}_{-k^{2}})^{-1}\beta\gamma_{1} u^{\circ}_{k\hat\xi'},\gamma_{1} u^{\circ}_{k\hat\xi}\rangle\,.
\end{align*}

\end{subsection}  

\end{section}
\begin{section}{Examples: Boundary conditions on $\Sigma\subset\Gamma$}
In this section we consider  boundary conditions supported on a relatively open part $\Sigma\subset\Gamma$ with Lipschitz boundary. For more details and proof regarding such models we refer to \cite[Section 6]{MaPoSi}. In particular, by the results given there, hypothesis \eqref{Theta_reg2} holds for all the examples presented  here; moreover, the semi-boundedness hypothesis required in Theorem \ref{Theorem_LAP} holds true as well: this point is next discussed case-by-case. In order to apply Theorem \ref{negative1}, so to simplify the exposition, we suppose that $\RE^{n}\backslash\overline\Sigma$ is connected. \par 
In the following, given $X\subset\Gamma$ closed, we use the definition $$H^{s}_{X}(\Gamma):=\{\phi\in H^{s}(\Gamma):\supp(\phi)\subseteq X\}\,.$$ Given $\Sigma\subset \Gamma$ relatively open of class $\C^{0,1}$,  we denote by $\Pi_{\Sigma}$ the orthogonal projector in the Hilbert space $H^{s}(\Gamma)$, $s>0$, such that $\ran(\Pi_{\Sigma})=H^{s}_{\Sigma^{c}}(\Gamma)^{\perp}$. One has $\ran(\Pi_{\Sigma}')=H^{-s}_{\overline\Sigma}(\Gamma)$, where $\Pi'_{\Sigma}=\Lambda^{2s}\Pi_{\Sigma}\Lambda^{-2s}$ is the dual projection. In the following, we use the identifications $H^{s}_{\Sigma^{c}}(\Gamma)^{\perp}\simeq H^{s}(\Sigma)$ and $H^{-s}_{\overline\Sigma}(\Gamma)\simeq H^{s}(\Sigma)'$. In particular, by the former, the orthogonal projection $\Pi_{\Sigma}$ can be identified with the restriction map $R_{\Sigma}:H^{s}(\Gamma)\to H^{s}(\Sigma)$, $R_{\Sigma}\phi:=\phi|\Sigma$.
\begin{subsection}{Dirichlet boundary conditions.}\label{dir-loc} We denote by $\Delta_{D,\Sigma}$ the self-adjoint extension corresponding to the orthogonal projector defined by $\Pi(\phi\oplus\varphi):=(\Pi_{\Sigma}\phi)\oplus 0\equiv(\phi|\Sigma)\oplus 0$ and to the self-adjoint operator $\Theta(\phi\oplus\varphi):=(-\Theta_{D,\Sigma}\phi)\oplus 0$, 
$$
\Theta_{D,\Sigma}:\dom(\Theta_{D,\Sigma})\subseteq H^{-{\frac32}}_{\overline\Sigma}(\Gamma)\to 
H^{{\frac32}}(\Sigma)\,,\quad \Theta_{D,\Sigma}\phi:=(\gamma_{0}\SL\phi)|\Sigma\,,
$$
$$
\dom(\Theta_{D,\Sigma}):=\{\phi\in H^{-{\frac12}}_{\overline\Sigma}(\Gamma):(\gamma_{0}\SL\phi)|\Sigma\in H^{{\frac32}}(\Sigma)\}\,.
$$ 
By Theorem \ref{Theorem_Krein} (see \cite[Subsection 6.1]{MaPoSi}), 
\be\label{di-l}
\Delta_{D,\Sigma}u=\Delta u-[\hat\gamma_{1}]u\,\delta_{\overline\Sigma}\,,
\ee
\begin{align*}
&\dom(\Delta_{D,\Sigma})\\
=&\{u\in H^{1}(\RE^{n})\cap H^{0}_{\Delta}(\RE^n\backslash\Gamma): 
[\hat\gamma_{1}]u\in \dom(\Theta_{D,\Sigma}),\, 
(\gamma^{\-}_{0}u)|\Sigma=(\gamma^{\+}_{0}u)|\Sigma=0
\}
\end{align*}
is self-adjoint and
\begin{align*}
(-\Delta_{D,\Sigma}+z)^{-1}=&
(-\Delta+z)^{-1}-\SL_{z}\Pi_{\Sigma}'(R_{\Sigma}\gamma_{0}\SL_{z}\Pi_{\Sigma}')^{-1}R_{\Sigma}\gamma_{0}(-\Delta+z)^{-1}\,.
\end{align*}
Denoting by $\langle\cdot,\cdot\rangle_{-1,1}$ the $H^{-1}(\RE^{n})$-$H^{1}(\RE^{n})$ duality, for any $u\in\dom(\Delta_{D,\Sigma})\subset H^{1}(\RE^{n})$ one has, by \eqref{di-l} and $\langle\delta_{\Sigma}, u\rangle_{{-1},1}=0$ whenever $\supp(\gamma_{0}u)\subseteq \Sigma^{c}$, 
\begin{align*}
&\langle-\Delta_{D,\Sigma}u,u\rangle_{L^{2}(\RE^{n})}=
\langle-\Delta u,u\rangle_{{-1}, {1}}=\|\nabla u\|^{2}_{L^{2}(\RE^{n})}
\end{align*}
and so $\Delta_{D,\Sigma}\le 0$.\par
By Theorems \ref{SM} and \ref{negative1}, one gets, for any $k>0$,
\begin{align*}
s_{k}(\hat\xi,\hat \xi')
=&-\frac{i}{4\pi}\,\left(\frac{k}{2\pi}\right)^{n-2}\langle (R_{\Sigma}\gamma_{0}\SL^{+}_{-k^{2}}\Pi_{\Sigma}')^{-1}R_{\Sigma}\gamma_{0} u^{\circ}_{k\hat\xi'},R_{\Sigma}\gamma_{0} u^{\circ}_{k\hat\xi}\rangle\,.
\end{align*}

\end{subsection}  
\begin{subsection}{Neumann boundary conditions.}\label{neu-loc}
We denote by $\Delta_{N,\Sigma}$ the self-adjoint extension corresponding to the orthogonal projector defined by $\Pi(\phi\oplus\varphi):=0\oplus(\Pi_{\Sigma}\varphi)\equiv0\oplus(\varphi|\Sigma)$ and to the self-adjoint operator $\Theta(\phi\oplus\varphi):=0\oplus(-\Theta_{N,\Sigma}\varphi)$,
$$
\Theta_{N,\Sigma}:\dom( \Theta_{N,\Sigma})\subseteq H^{-{\frac12}}_{\overline\Sigma}(\Gamma)\to H^{{\frac12}}(\Sigma)\,,\quad \Theta_{N,\Sigma}\varphi=(\gamma_{1}\DL\varphi)|\Sigma\,,
$$
$$
\dom(\Theta_{N,\Sigma}):=\{\varphi\in H^{{\frac12}}_{\overline\Sigma}(\Gamma):(\gamma_{1}\DL\varphi)|\Sigma\in H^{{\frac12}}(\Sigma)\}\,.
$$ 
By Theorem \ref{Theorem_Krein} (see \cite[Subsection 6.2]{MaPoSi}), 
\be\label{ne-l}
\Delta_{N,\Sigma}u=\Delta u-[\gamma_{0}]u\,\nu\!\cdot\!\nabla\delta_{\overline\Sigma}\,,
\ee
\begin{align*}
\dom(\Delta_{N,\Sigma})
=&\{u\in H^{1}(\RE^{n}\backslash\overline\Sigma)\cap H^{0}_{\Delta}(\RE^n\backslash\Gamma): [\gamma_{0}]u\in \dom(\Theta_{N,\Sigma}),\\ &\quad [\hat\gamma_{1}]u=0,\, 
(\hat\gamma_{1}^{\-}u)|\Sigma=(\hat\gamma_{1}^{\+}u)|\Sigma=0\}
\end{align*}
is self-adjoint and
\begin{align*}
&(-\Delta_{N,\Sigma}+z)^{-1}\\=&
(-\Delta+z)^{-1}-\DL_{z}\Pi_{\Sigma}'(R_{\Sigma}\hat\gamma_{1}\DL_{z}\Pi_{\Sigma}')^{-1}R_{\Sigma}\gamma_{1}(-\Delta+z)^{-1}\,.
\end{align*}
By    Green's formula \eqref{hG}, for any $u\in\dom(\Delta_{N,\Sigma})\subset H^{1}(\RE^{n}\backslash\Gamma)\cap H^{0}_{\Delta}(\RE^{n}\backslash\Gamma)$ one has 
\begin{align*}
\langle-\Delta_{N,\Sigma}u,u\rangle_{L^{2}(\RE^{n})}=&
\|\nabla u\|^{2}_{L^{2}(\Omega_{\-})}+\|\nabla u\|^{2}_{L^{2}(\Omega_{\+})}+\langle\hat\gamma_{1}u,[\gamma_{0}]u\rangle_{L^{2}(\Gamma)}\\
=&\|\nabla u\|^{2}_{L^{2}(\Omega_{\-})}+\|\nabla u\|^{2}_{L^{2}(\Omega_{\+})}
\end{align*}
and so $\Delta_{N,\Sigma}\le 0$.\par
Then, by Theorems \ref{SM} and \ref{negative1}, one gets, for any $k>0$,
\begin{align*}
s_{k}(\hat\xi,\hat \xi')
=&-\frac{i}{4\pi}\,\left(\frac{k}{2\pi}\right)^{n-2}\langle (R_{\Sigma}\hat\gamma_{1}\DL^{+}_{-k^{2}}\Pi_{\Sigma}')^{-1}R_{\Sigma}\gamma_{1} u^{\circ}_{k\hat\xi'},R_{\Sigma}\gamma_{1} u^{\circ}_{k\hat\xi}\rangle\,.
\end{align*}
\end{subsection}  
\begin{subsection}{Robin boundary conditions.}\label{robin-loc}  
We denote by $\Delta_{R,\Sigma}$ the self-adjoint extension corresponding to the orthogonal projector  defined by $\Pi(\phi\oplus\varphi):=\Pi^{\oplus}_{\Sigma}(\phi\oplus\varphi)=(\Pi_{\Sigma}\phi)\oplus(\Pi_{\Sigma}\varphi)\equiv R_{\Sigma}^{\oplus}(\phi\oplus\varphi):=(\phi|\Sigma)\oplus(\varphi|\Sigma)$ and to the self-adjoint operator $\Theta:=-\Theta_{R,\Sigma}$,
\begin{align*}
\Theta_{R,\Sigma}:
\dom(\Theta_{R,\Sigma})\subseteq 
H^{-\frac32}_{\overline\Sigma}(\Gamma)\oplus H^{-\frac12}_{\overline\Sigma}(\Gamma)\to H^{\frac32}(\Sigma)\oplus H^{\frac12}(\Sigma)\,,
\end{align*}
\begin{align*}
\Theta_{R,\Sigma}(\phi,\varphi)
=&
(((1/[b]+\gamma_{0}\SL)\phi+(\langle b\rangle/[b]+\hat\gamma_{0}\DL)\varphi)|\Sigma)\\&\oplus(((\langle b\rangle/[b]+\hat\gamma_{1}\SL)\phi+(b_{\+}b_{\-}/[b]+\hat\gamma_{1}\DL)\varphi)|\Sigma)
\,,
\end{align*}
\begin{align*}
&\dom(\Theta_{R,\Sigma}):=\big\{(\phi,\varphi)\in L^{2}_{\overline\Sigma }(\Gamma)\times H^{\frac12}_{\overline\Sigma }(\Gamma):\\
&((1/[b]+\gamma_{0}\SL)\phi+(\langle b\rangle/[b]+\hat\gamma_{0}\DL)\varphi)|\Sigma\in H^{\frac32}(\Sigma)\,,\\
&((\langle b\rangle/[b]+\hat\gamma_{1}\SL)\phi
+(b_{\+}b_{\-}/[b]+\hat\gamma_{1}\DL)\varphi)|\Sigma\in H^{\frac12}(\Sigma)\big\}\,.
\end{align*}
Here $b_{\-}$ and $b_{\+}$ satisfy the same hypotheses as in subsection \ref{robin} and $b_{\-} > b_{\+}$. By Theorem \ref{Theorem_Krein}  (see \cite[Subsection 6.3]{MaPoSi}),
\begin{align*}
&\Delta_{R,\Sigma}u=\Delta u-\frac4{[b]}\,\big((\langle b\rangle\,\gamma_{1}u-b_\+b_{\-}\gamma_{0}u)\,\delta_{\overline\Sigma}+(\gamma_{1}u-\langle b\rangle\,\gamma_{0}u)\,\nu\!\cdot\!\nabla\delta_{\overline\Sigma}\big)\,,
\end{align*}
\begin{align*}
\dom(\Delta_{R,\Sigma})
=&\{u\in H^{1}(\RE^{n}\backslash\overline\Sigma)\cap
H^{0}_{\Delta}(\RE^{n}\backslash\Gamma)
: [\hat\gamma]u\in \dom(\Theta_{R,\Sigma})\,,\\
&\ (\gamma_{1}^\- u-b_{\-}\gamma_{0}^{\-}u)|\Sigma=(\gamma_{1}^\+ u-b_{\+}\gamma_{0}^{\+}u)|\Sigma=0\}
\end{align*}
is self-adjoint and
\begin{align*}
&(-\Delta_{R,\Sigma}+z)^{-1}-(-\Delta+z)^{-1}\\
=&
-G_{z}(\Pi^{\oplus}_{\Sigma})'\left(R ^{\oplus}_{\Sigma}\left[\,\begin{matrix}1/{[b]}+\gamma_{0}\SL_{z}&\langle b\rangle/[b]+\gamma_{0}\DL_{z}\\
\langle b\rangle/[b]+\gamma_{1}\SL_{z}&b_{+}b_{-}/[b]+\gamma_{1}\DL_{z}
\end{matrix}\,\right](\Pi^{\oplus}_{\Sigma})'\right)^{ -1}\\
&\times R^{\oplus}_{\Sigma}
\gamma(-\Delta+z)^{-1}\,,
\end{align*}
where $(\Pi^{\oplus}_{\Sigma})'$ is the orthogonal projection onto 
$H_{\overline\Sigma}^{-\frac32}(\Gamma)\oplus 
H_{\overline\Sigma}^{-\frac12}(\Gamma)$
and $G_{z}$ is defined in \eqref{Gz}. 
\begin{remark} By \cite[Remark 6.15]{MaPoSi}, $\Delta_{R,\Sigma}$ depends only on $\Sigma$, $b_{\-}|\Sigma$ and $b_{\+}|\Sigma$. Thus, by considering $\t\Omega\subset\RE^{n}$ such that $\t\Omega_{\-}=\t\Omega\subset\Omega_{\+}$, $\Omega_{\-}\subset\t\Omega_{\+}=\RE^{n}\backslash\overline{\t\Omega}$ and $\Sigma\subset \t\Gamma=\partial\t\Omega$, it is possible to convert the assumption $b_{\-}>b_{\+}$ into $b_{\+}>b_{\-}$. 
\end{remark}
By    Green's formula \eqref{hG} and by Ehrling's lemma, for any $u\in\dom(\Delta_{R,\Sigma})\subset H^{1}(\RE^{n}\backslash\Gamma)\cap H^{0}_{\Delta}(\RE^{n}\backslash\Gamma)$ one has (here the Sobolev index $s$ belongs to $(\frac12,1)$)
\begin{align*}
&\langle-\Delta_{R,\Sigma}u,u\rangle_{L^{2}(\RE^{n})}\\=&
\|\nabla u\|^{2}_{L^{2}(\Omega_{\-})}+\|\nabla u\|^{2}_{L^{2}(\Omega_{\+})}\\&-\langle b_{\-}\gamma^{\-}_{0}u_{\-},\gamma^{\-}_{0}u_{\-}\rangle_{L^{2}(\Sigma)}+
\langle b_{\+}\gamma^{\+}_{0}u_{\+},\gamma^{\+}_{0}u_{\+}\rangle_{L^{2}(\Sigma)}
\\
\ge&\|\nabla u\|^{2}_{L^{2}(\Omega_{\-})}+\|\nabla u\|^{2}_{L^{2}(\Omega_{\+})}\\&-\big(\|b_{\-}\|_{L^{\infty}(\Gamma)}+\|
b_{\+}\|_{L^{\infty}(\Gamma)}\big)\big(\|\gamma^{\-}_{0}u_{\-}\|^{2}_{L^{2}(\Gamma)}+\|\gamma^{\+}_{0}u_{\+}\|^{2}_{L^{2}(\Gamma)}\big)\\
\end{align*}
\begin{align*}
\ge&\|\nabla u\|^{2}_{L^{2}(\Omega_{\-})}+\|\nabla u\|^{2}_{L^{2}(\Omega_{\+})}\\&-c\,
\big(\|b_{\-}\|_{L^{\infty}(\Gamma)}+
\|b_{\+}\|_{L^{\infty}(\Gamma)}\big)\big(\|u_{\-}\|^{2}_{H^{s}(\Omega_{\-})}+\|u_{\+}\|^{2}_{H^{s}(\Omega_{\+})}\big)\\
\ge&\|\nabla u\|^{2}_{L^{2}(\Omega_{\-})}+\|\nabla u\|^{2}_{L^{2}(\Omega_{\+})}\\&-c\,\big(\|b_{\-}\|_{L^{\infty}(\Gamma)}+\|
b_{\+}\|_{L^{\infty}(\Gamma)}\big)\big(\epsilon\big(\|u_{\-}\|^{2}_{H^{1}(\Omega_{\-})}+\|u_{\+}\|^{2}_{H^{1}(\Omega_{\-})}\big)+c_{\epsilon}\|u\|^{2}_{L^{2}(\RE^{n})}\big)\\
\ge&-\kappa_{\epsilon}\|u\|^{2}_{L^{2}(\RE^{n})}
\end{align*}
and so and so $\Delta_{\beta,\delta',\Sigma}\le \kappa_{\epsilon}$. 
Then, by Theorems \ref{SM} and \ref{negative1}, one gets, for any $k>0$,
\begin{align*}
&s_{k}(\hat\xi,\hat \xi')\\
=&-\frac{i}{4\pi}\left(\frac{k}{2\pi}\right)^{\!\!n-2}\!\!\Bigg\langle\!\! \left(\!R^{\oplus}_{\Sigma}
\!\left[\begin{matrix}1/{[b]}+\gamma_{0}\SL^{+}_{-k^{2}}&\langle b\rangle/[b]+\gamma_{ 0}\DL^{+}_{-k^{2}}\\
\langle b\rangle/[b]+\gamma_{1}\SL^{+}_{-k^{2}}&b_{+}b_{-}/[b]+\gamma_{1}\DL^{+}_{-k^{2}}
\end{matrix}\!\right]\!(\Pi^{\oplus}_{\Sigma})'\right)^{\!\! -1}\\
&\times R^{\oplus}_{\Sigma}
\gamma u^{\circ}_{k\xi'},R^{\oplus}_{\Sigma}\gamma u^{\circ}_{k\xi}\Bigg\rangle.
\end{align*}
\end{subsection}  
\begin{subsection}{$\delta$-interactions.}\label{delta-loc} We denote by $\Delta_{\alpha,\delta,\Sigma}$ the self-adjoint extension corresponding to the orthogonal projector defined by $\Pi(\phi\oplus\varphi):=(\Pi_{\Sigma}\phi)\oplus 0\equiv(\phi|\Sigma)\oplus 0$ and to the self-adjoint operator 
$\Theta(\phi\oplus\varphi):=(-\Theta_{\alpha,D,\Sigma}\phi)\oplus 0$,
$$\Theta_{\alpha,D,\Sigma}:\dom(\Theta_{\alpha,D,\Sigma})
\subseteq H^{-\frac32}_{\overline\Sigma}(\Gamma)\to H^{\frac32}(\Sigma)\,,\quad\Theta_{\alpha,D,\Sigma}\phi:=((1/\alpha+\gamma_{0}\SL)\phi)|\Sigma\,,
$$
$$
\dom(\Theta_{\alpha,D,\Sigma}):=\{\phi\in L^{2}_{\overline\Sigma }(\Gamma): ((1/\alpha+\gamma_{0}\SL)\phi)|\Sigma\in H^{\frac32}(\Sigma)\}\,.
$$
Here $\alpha$ satisfies the same hypothesis as in subsection \ref{delta} and we further require that it has constant sign
on (each component of) $\Gamma$. By Theorem \ref{Theorem_Krein} (see \cite[Subsection 6.4]{MaPoSi}; notice a misprint in \cite[Corollary 6.21]{MaPoSi}: the condition $[\gamma_{0}]u=0$, implying $u\in H^{1}(\RE^{n})$, is missing in $\dom(\Delta_{\alpha,\delta,\Sigma})$), 
\begin{align*}
&\Delta_{\alpha,\delta,\Sigma}\,u=\Delta u-\alpha\gamma_{0}u\,\delta_{\overline\Sigma}
\end{align*}
\begin{align*}
&\dom(\Delta_{\alpha,\delta,\Sigma})\\
=&\{u\in H^{1}(\RE^{n})\cap H^{2-}(\RE^{n}\backslash\overline\Sigma)\cap H^{0}_{\Delta}(\RE^{n}\backslash\Gamma): [\gamma_{1}]u\in \dom(\Theta_{\alpha,D,\Sigma})\,,\\&\  (\alpha\gamma_{0}u-[\gamma_{1}]u)|\Sigma=0\}
\end{align*}
is self-adjoint and \begin{align*}
&(-\Delta_{\alpha,\delta,\Sigma}+z)^{-1}u\\
=&
(-\Delta+z)^{-1}-\SL_{z}\Pi_{\Sigma}'(R_{\Sigma}(1+\alpha\gamma_{0}\SL_{z})\Pi_{\Sigma}')^{-1}R_{\Sigma}\alpha\gamma_{0}(-\Delta+z)^{-1}\,.
\end{align*}
For any $u\in\dom(\Delta_{\alpha,\delta,\Sigma})$, by Ehrling's lemma, one has (here the Sobolev index $s$ belongs to $(\frac12,1)$)
\begin{align*}
\langle-\Delta_{\alpha,\delta,\Sigma}u,u\rangle_{L^{2}(\RE^{n})}=&
\langle-\Delta u,u\rangle_{{-1}, {1}}+\langle\alpha\gamma_{0}u, \gamma_{0}u\rangle_{L^{2}(\Sigma)}\\
=&\|\nabla u\|^{2}_{L^{2}(\RE^{n})}+\langle\alpha\gamma_{0}u, \gamma_{0}u\rangle_{L^{2}(\Sigma)}\\
\ge& \|\nabla u\|^{2}_{L^{2}(\RE^{n})}-\|\alpha\|_{L^\infty(\Gamma)}\|\gamma_{0}u\|^{2}_{L^{2}(\Gamma)}\\
\ge&\|\nabla u\|^{2}_{L^{2}(\RE^{n})}-c\,\|\alpha\|_{L^\infty(\Gamma)}\|u\|^{2}_{H^{s}(\Omega)}
\\
\ge&\|\nabla u\|^{2}_{L^{2}(\RE^{n})}-c\,\|\alpha\|_{L^\infty(\Gamma)}\big(\epsilon\,\|u\|^{2}_{H^{1}(\Omega)}+c_{\epsilon}\|u\|^{2}_{L^{2}(\Omega)}\big)\\
\ge&-\kappa_{\epsilon}\|u\|^{2}_{L^{2}(\Omega)}
\end{align*}
and so $\Delta_{\alpha,\delta,\Sigma}\le \kappa_{\epsilon}$. 
By Theorems \ref{SM} and \ref{negative1}, one gets, for any $k>0$,
\begin{align*}
s_{k}(\hat\xi,\hat \xi')
=&-\frac{i}{4\pi}\,\left(\frac{k}{2\pi}\right)^{n-2}\langle (R_{\Sigma}(1+\alpha\gamma_{0}\SL_{z})\Pi_{\Sigma}')^{-1}R_{\Sigma}\alpha\gamma_{0} u^{\circ}_{k\hat\xi'},R_{\Sigma}\gamma_{0} u^{\circ}_{k\hat\xi}\rangle\,.
\end{align*}
\end{subsection}  
\begin{subsection} {$\delta'$-interaction.}\label{delta'-loc} We denote by $\Delta_{\beta,\delta',\Sigma}$ the self-adjoint extension corresponding to the orthogonal projector defined by $\Pi(\phi\oplus\varphi):=0\oplus(\Pi_{\Sigma}\varphi)\equiv0\oplus(\varphi|\Sigma)$ and to the self-adjoint operator $\Theta(\phi\oplus\varphi):=0\oplus(-\Theta_{\beta,N,\Sigma})\varphi$,
$$\Theta_{\beta,N,\Sigma}:
\dom(\Theta_{\beta,N,\Sigma})\subseteq 
H^{-\frac12}_{\overline\Sigma}(\Gamma)\to  H^{\frac12}(\Sigma)\,,\quad
$$
$$\Theta_{\beta,N,\Sigma}\phi:=
((-1/\beta+\hat\gamma_{1}\DL)\phi)|\Sigma\,,
$$
$$
\dom(\Theta_{\beta,N,\Sigma}):=\{\varphi\in H^{\frac12}_{\overline\Sigma }(\Gamma):
((-1/\beta+\hat\gamma_{1}\DL)\varphi)|\Sigma \in H^{\frac12}(\Sigma)\}\,.
$$
Here $\beta$ satisfies the same hypothesis as in subsection \ref{deltaprimo}.
By Theorem \ref{Theorem_Krein} (see \cite[Subsection 6.5]{MaPoSi}; notice a misprint in \cite[Corollary 6.26]{MaPoSi}: the condition $[\hat\gamma_{1}]u=0$ is missing in $\dom(\Delta_{\beta,\delta',\Sigma})$),
\begin{align*}
\Delta_{\beta,\delta',\Sigma}u=\Delta u-\beta\gamma_{1}u\,\nu\!\cdot\!\nabla
\delta_{\overline\Sigma}\,,
\end{align*}
\begin{align*}
&\dom(\Delta_{\beta,\delta',\Sigma})\\
=&\{u\in H^{1}(\RE^{n}\backslash\overline\Sigma)\cap H^{0}_{\Delta}(\RE^{n}\backslash\Gamma): [\gamma_{0}]u\in \dom(\Theta_{\beta,N,\Sigma})\,,\ [\hat\gamma_{1}]u=0\,,\\&\  (\beta\hat\gamma_{1}u-[\gamma_{0}]u)|\Sigma=0\}
\end{align*}
is self-adjoint and its resolvent is given by 
\begin{align*}
&(-\Delta_{\beta,\delta',\Sigma}+z)^{-1}u\\
=&
(-\Delta+z)^{-1}+\DL_{z}\Pi_{\Sigma}'(R_{\Sigma}(1-\beta\hat\gamma_{1}\DL_{z})\Pi_{\Sigma}')^{-1}R_{\Sigma}\beta\gamma_{1}(-\Delta+z)^{-1}\,.
\end{align*}
By    Green's formula \eqref{hG} and by Ehrling's lemma, for any $u\in\dom(\Delta_{\beta,\delta',\Sigma})\subset H^{1}(\RE^{n}\backslash\Gamma)\cap H^{0}_{\Delta}(\RE^{n}\backslash\Gamma)$ one has (here the Sobolev index $s$ belongs to $(\frac12,1)$)
\begin{align*}
&\langle-\Delta_{\beta,\delta',\Sigma}u,u\rangle_{L^{2}(\RE^{n})}\\=&
\|\nabla u\|^{2}_{L^{2}(\Omega_{\-})}+\|\nabla u\|^{2}_{L^{2}(\Omega_{\+})}+\langle(1/\beta)[\gamma_{0}]u,[\gamma_{0}]u\rangle_{L^{2}(\Sigma)}\\
\ge&\|\nabla u\|^{2}_{L^{2}(\Omega_{\-})}+\|\nabla u\|^{2}_{L^{2}(\Omega_{\+})}-2\,\|1/\beta\|_{L^\infty(\Gamma)}\big(\|\gamma^{\-}_{0}u_{\-}\|^{2}_{L^{2}(\Gamma)}+\|\gamma^{\+}_{0}u_{\+}\|^{2}_{L^{2}(\Gamma)}\big)\\
\ge&\|\nabla u\|^{2}_{L^{2}(\Omega_{\-})}+\|\nabla u\|^{2}_{L^{2}(\Omega_{\+})}-c\,\|1/\beta\|_{L^\infty(\Gamma)}\big(\|u_{\-}\|^{2}_{H^{s}(\Omega_{\-})}+\|u_{\+}\|^{2}_{H^{s}(\Omega_{\+})}\big)\\
\ge&\|\nabla u\|^{2}_{L^{2}(\Omega_{\-})}+\|\nabla u\|^{2}_{L^{2}(\Omega_{\+})}-c\,\|1/\beta\|_{L^\infty(\Gamma)}\big(\epsilon\big(\|u_{\-}\|^{2}_{H^{1}(\Omega_{\-})}+\|u_{\+}\|^{2}_{H^{1}(\Omega_{\-})}\big)\\
&+c_{\epsilon}\|u\|^{2}_{L^{2}(\RE^{n})}\big)
\ge-\kappa_{\epsilon}\|u\|^{2}_{L^{2}(\RE^{n})}
\end{align*}
and so and so $\Delta_{\beta,\delta',\Sigma}\le \kappa_{\epsilon}$. 
Then, by Theorems \ref{SM} and \ref{negative1}, one gets, for any $k>0$,
\begin{align*}
s_{k}(\hat\xi,\hat \xi')
=&-\frac{i}{4\pi}\,\left(\frac{k}{2\pi}\right)^{n-2}\langle (R_{\Sigma}(1-\beta\hat\gamma_{1}\DL^{+}_{-k^{2}})\Pi_{\Sigma}')^{-1}R_{\Sigma}\beta\gamma_{1} u^{\circ}_{k\hat\xi'},R_{\Sigma}\gamma_{1} u^{\circ}_{k\hat\xi}\rangle\,.
\end{align*}
\end{subsection}

\end{section}

\end{document}